\definecolor{libl}{cmyk}{0.2,0.1,0,0}
\newtheorem{Def}{Definition}
\newtheorem{Lemma}{Lemma}
\newtheorem{Theorem}{Theorem}
\newcommand{\ncd}{\newcommand}
\ncd{\lce}{\equiv_{l.C.}}
\ncd{\cg}{\mbox{cg}}
\ncd{\Tr}{\mbox{Tr}}
\newcommand{\nix}[1]{}
\begin{document}

\def\titlerunning{Symmetry constraints on temporal order in
  measurement-based quantum computation}
\def\authorrunning{R. Raussendorf, P. Sarvepalli, T.-C. Wei, and P. Haghnegahdar}

\title{Symmetry constraints on temporal order in measurement-based quantum computation}

\author{$\mbox{R. Raussendorf}^1$, $\mbox{P. Sarvepalli}^2$, $\mbox{T.-C. Wei}^3$, $\mbox{P. Haghnegahdar}^1$\vspace{2mm}\\
{\small{\em{1: University of British Columbia, Department of Physics and Astronomy, Vancouver, BC, Canada,}}}\\
{\small{\em{2: School of Chemistry and Biochemistry, Georgia Institute of Technology, Atlanta, GA 30332, USA,}}}\\
{\small{\em{3: C.N. Yang Institute for Theoretical Physics, State University of New York, Stony Brook, New York, USA}}}}

\maketitle

\begin{abstract}
  We discuss the interdependence of resource state, measurement setting and temporal order in mea\-sure\-ment-based quantum computation. The possible temporal orders of measurement events are constrained by the principle that the randomness inherent in quantum measurement should not affect the outcome of the computation. We provide a classification for all temporal relations among measurement events compatible with a given initial stabilizer state and measurement setting, in terms of a matroid. Conversely, we show that classical processing relations necessary for turning the local measurement outcomes into computational output determine the resource state and measurement setting up to local equivalence. Further, we find a symmetry transformation related to local com\-ple\-men\-ta\-tion that leaves the temporal relations invariant. 
\end{abstract}
 
\section{Introduction}

Quantum states can change over time in two fundamentally different ways, unitary evolution and measurement. The former is deterministic and reversible whereas the latter is probabilistic and irreversible. Quantum computation \cite{NC} can be built on either. Employing unitary evolution leads to the quantum circuit model \cite{circ1,circ2} (the standard model of quantum computation), and is also present in adiabatic quantum computation \cite{Adi}. Using measurement as the central tool to drive the computation leads to various measurement and teleportation-based schemes \cite{GoChua,RB01,ML}.  

Here we consider the one-way quantum computer (MBQC) \cite{RB01}, a measurement based scheme of universal quantum computation. Therein, the process of computation is driven by local measurements on an entangled state of many qubits. Since the entanglement monotonically decreases over the course of computation, the initial state can be viewed as an entanglement resource.  Known examples of universal resources include certain graph states \cite{NestUR} and ground states of two-body Hamiltonians \cite{Chen10}, among them two-dimensional AKLT states of spin 3/2 \cite{AKLTpi,AKLTubc}. A classification of universal resource states is to date not available, but it is known that only a tiny fraction of quantum states can possibly be universal \cite{GrossFlammiaEisert, BremnerMoraWinter}.

The object of study in this paper is the temporal ordering of measurements in MBQC, and what determines it. We point out explicitly that here we do {\em{not}} consider a scenario where a specific quantum algorithm is given in the circuit model and translated into MBQC. In that case, the temporal order of measurements can be  straightforwardly obtained from the quantum circuit by a known mapping. Rather, we look at the initial quantum state and the planes of the Bloch sphere in which the locally measured observables reside, with no further information provided. We ask which temporal orders of measurements are compatible with this information. It may at first sight appear surprising that there is any constraint at all. However, the resource state and temporal order in MBQC become interrelated through {\em{the principle that randomness of measurement outcomes must be prevented from affecting the logical processing}}. If the resource is a stabilizer state then all constraints on temporal order follow from the symmetries inferred by its stabilizer group. The purpose of this paper is to work out the consequences of this connection.  

In addition to the theory of MBQC itself, we may investigate MBQC temporal order towards a different goal: toy models for generating temporal order from none. Consider the case where the resource is a graph state $|G\rangle$. How much of the temporal order  follows from the interaction graph $G$? The graph $G$ is an undirected object whereas the partial order  is directed. Thus, if $G$ constrained the temporal orders severely, MBQC would provide a mechanism for generating temporal order. A quantum theory of gravity must achieve this in a far more complicated setting.

Previous work on MBQC temporal order has shown that for resource graph states $|G\rangle$ the temporal order of measurements is fully specified if the sets $I$ and $O$ of first- and last-measurable qubits are known \cite{Gflow}, \cite{EffFlow}. But not every pair $I$, $O$ leads to a possible temporal order. A condition on admissible pairs $I$, $O$ in terms of the adjacency matrix of $G$ has been given \cite{Mhalla}. 

In this paper, we present the following results. (i) We show that all transitive temporal relations which prevent measurement randomness from affecting the logical processing correspond to bases of a matroid derived from a resource stabilizer state and the local measurement bases. (ii) It is known that the adaption of measurement bases  according to previously obtained measurement outcomes and the extraction of the computational result are governed by linear processing relations. Further, in all known schemes, the Bloch vectors specifying the measured local observables lie in the $(X,Y)$, $(Y,Z)$ or $(Z,X)$ equators of the Bloch sphere which are called ``measurement planes''. Here we show that the linear processing relations determine the resource stabilizer state and the measurement planes up to equivalence. For this result to hold we need to slightly extend the previously known processing relations by introducing gauge variable that affect the processing but not the probability distribution of the computational output (See Section~\ref{gauge}). (iii) We identify a transformation that leaves the temporal order of an MBQC unchanged and which is a slight generalization of local complementation. Local complementation has previously been found useful for the discussion of local equivalence among graph states \cite{LC1}-\cite{LC3}. 

The remainder of this paper is organized as follows. In Section~\ref{BG} we provide the necessary background and notation, and briefly review prior work on temporal order in MBQC. In Section~\ref{gauge} we introduce the aforementioned gauge degrees of freedom and a resulting normal form of the resource state stabilizer. Based on this normal form, in Section~\ref{Intdep} we derive our two main results on the interdependence of resource state and temporal order, (i) and (ii).
In Section~\ref{Flip} we introduce a symmetry transformation that leaves the temporal relation in any given MBQC invariant and which is related to local complementation. In Section~\ref{BRKctc} we show that MBQC mimics certain aspects of General Relativity \cite{AEGR,AEGR2}. We present an MBQC analogue of Malament's theorem \cite{conTC} and discuss the emergence of an event horizon. In Section~\ref{Concl} we conclude and point out open questions.

\section{Background}
\label{BG}

In this section we review some basic facts about measurement-based quantum computation, essential definitions for the discussion of MBQC temporal order, as well as previous work in this area. The scheme of MBQC itself, with a proof of its computational universality, is not reviewed here since various articles on this subject exist in the literature \cite{RB01}, \cite{NielQCc} - \cite{Overview} Also, we require familiarity with the stabilizer formalism \cite{Stab}.

\subsection{MBQC and cluster states}

In MBQC, the process of computation is driven by local (=1-qubit) measurements on an initial highly entangled state, generally taken to be a so-called cluster state. The local measurements can only reduce entanglement, and therefore all entanglement needed for the computation must come from the initial state. For this reason, the initial state is often called the `resource state'\footnote{Large entanglement is a necessary \cite{vdN1, vdN2} but not sufficient condition for the usefulness of a quantum state as resource in MBQC. Somewhat paradoxically, quantum states exist that are too entangled to be useful \cite{GrossFlammiaEisert}.}. 

The computational power of a given MBQC strongly depends on the choice of the initial resource state. For example, a local resource state has obviously no computational power. Other states may be used for a restricted class of computations. Two-or-higher dimensional cluster states of unbounded size have the property that they enable universal quantum computation. That is, {\em{any}} quantum computation can be realized on such a state, by suitable choice of the local measurement bases.

We now define graph states, to be used later on, and cluster states as a subclass thereof.

\begin{Def}[Graph states and cluster states.]\label{Gdef} Be $G$ a graph with vertex set $V(G)$ and edge set $E(G)$, such that there is one qubit for each vertex $a \in V(G)$. Then, the graph state $|G\rangle$ is the unique (up to global phase) joint eigenstate, $|G\rangle = K_a\,|G\rangle$, for all $a \in V(G)$, of the operators
\begin{equation}
K_a = \sigma_x^{(a)}\bigotimes_{b |(a,b)\in E(G)}\sigma_z^{(b)}.
\end{equation}
The cluster state $|\Phi_{\cal{L}}\rangle$ is a graph state where the corresponding graph is a $d$-dimensional lattice ${\cal{L}}$.
\end{Def}
In the standard scheme \cite{RB01}, the measured observables are of the form
\begin{equation}
\label{ObsXY}
O_a[q_a]=\cos \varphi_a\, \sigma_x^{(a)} + (-1)^{q_a}\sin \varphi_a \,\sigma_y^{(a)},
\end{equation}
with $a$ the qubit to which the measurement is applied, and $q_a\in \mathbb{Z}_2$ depending on outcomes of (earlier) measurements on other qubit locations.

\subsection{Temporal order in MBQC}
\label{OTO}

As noted above, the temporal order of measurement in MBQC is a consequence of the randomness of measurement outcomes. By adjusting measurement bases according to measurement outcomes obtained on other qubits, this randomness inherent in  quantum mechanical measurement can be kept from creeping into the logical processing \cite{RB01}. If the measurement outcome of qubit $a$ influences the choice of measurement basis for qubit $b$, clearly, qubit $a$ must be measured before qubit $b$ can. This is how a temporal order among the measurement events arises.

We now generalize the above scenario of measuring observables of form Eq.~(\ref{ObsXY}) on cluster states. Namely, we now consider general stabilizer states $|\Psi\rangle$ as resources, with $\mbox{supp}(|\Psi\rangle)=\Omega$ and stabilizer group ${\cal{S}}(|\Psi\rangle)$. Furthermore, we generalize the local observables whose measurement drives the computation from Eq.~(\ref{ObsXY}) to
\begin{equation}
  \label{Obs}
  O_a[q_a] =  \cos \varphi_a\, \sigma_\phi^{(a)} +(-1)^{q_a} \sin \varphi_a\,  \sigma_{s\phi}^{(a)},\;\; \forall a \in \Omega,
\end{equation}
with $\sigma_\phi, \sigma_{s\phi} \neq \sigma_\phi \in \{X,Y,Z\}$. Therein, the measurement angles $\varphi_a$ are in the range $-\pi/2\leq \varphi_a <\pi/2$, and $q_a\in\mathbb{Z}_2$ may depend on measurement outcomes from several (other) qubits in $\Omega$. 
\begin{Def} [Measurement plane] 
For every qubit $a \in \mbox{supp}(|\Psi\rangle)$, the measurement plane at $a$ is the ordered pair $[\sigma_\phi^{(a)}, \sigma_{s\phi}^{(a)}]$. 
\end{Def}
We define a third Pauli operator, $\sigma_s = i \sigma_{s\phi}\sigma_\phi$. As we will see shortly, the Pauli operators $\sigma_\phi$ and $\sigma_s$ are useful because of the relations
\begin{equation}
  \label{obs2}
  \sigma_\phi O[q] \sigma_\phi^\dagger = O[q \oplus 1],\;\;  \sigma_s O[q] \sigma_s^\dagger = -O[q]. 
\end{equation} 
The basic mechanism of accounting for an ``undesired'' measurement outcome is the following. Suppose on some qubit $a \in \Omega$, instead of the ``desired'' post-measurement state $|\varphi_a\rangle_a$ the ``undesired'' post-measurement state $|\varphi_a^\perp\rangle_a$ has been obtained. The goal is to get the computation back on track by only adjusting the subsequent measurements. To do that, we require a stabilizer operator $\tilde{K}(a) \in {\cal{S}}(|\Psi\rangle)$ with the following properties \cite{Gflow}: (1) $\tilde{K}(a)$ has support only on $a$ and the yet unmeasured qubits, and (2) $\tilde{K}(a)|_a=\sigma_s^{(a)}$. Recall that $\sigma_s|\varphi_a\rangle = |\varphi_a^\perp\rangle$ for the eigenstates $|\varphi_a(q_a,s_a)\rangle$ of the local measured observable $O[q_a]$, c.f. Eq.~(\ref{obs2}).  Denote by ${\cal{P}}(a)$ and ${\cal{F}}(a)$ the past and future of $a$, respectively. Then,
\begin{equation}
\label{corr1}
\begin{array}{rcl}
  \left(\mbox{}_{{\cal{P}}(a)}\langle\varphi_{\text{loc}}|\otimes\,_a\langle \varphi_a^\perp|\right)|\Psi\rangle &=&  \left(\mbox{}_{{\cal{P}}(a)}\langle\varphi_{\text{loc}}|\otimes\,_a\langle \varphi_a^\perp|\right) \tilde{K}(a)|\Psi\rangle\\ 
& = & \left(\mbox{}_{{\cal{P}}(a)}\langle\varphi_{\text{loc}}|\otimes\,_a\langle \varphi_a|\right) \tilde{K}(a)\left|_{{\cal{F}}(a)}\right. |\Psi\rangle
\end{array}
\end{equation}
Therein, the first equality follows from $\tilde{K}(a) \in {\cal{S}}(|\Psi\rangle)$, and the second from the above properties (1) and (2).  

Since the overlaps between local states (representing the local measurements) with the resource state $|\Psi\rangle$ contain all information about the computation, we thus find that we can correct for ``undesired'' outcomes by (a) adjusting measurement bases of future measurements (caused by tensor factors $\sigma_\phi$ in $\tilde{K}(a)\left|_{{\cal{F}}(a)}\right.$) and (b) re-interpretation of measurement outcomes (caused by tensor factors $\sigma_s$ in $\tilde{K}(a)\left|_{{\cal{F}}(a)}\right.$).

{\em{Example:}} Consider MBQC on a cluster state $|\Phi_3\rangle$ of three qubits on a line, each measured in the $[\sigma_x,\sigma_y]$-plane. That is, for all three qubits $\sigma_s = Z$ and $\sigma_\phi = X$. (Here and from now on, we use the shorthand $X \equiv \sigma_x$, $Y \equiv \sigma_y$ and $Z \equiv \sigma_z$.) The stabilizer generators of $|\Phi_3\rangle$ are
\begin{equation}
  \label{3Cluster}
  \begin{array}{rclcl}
    K_1 &=& X_1\otimes Z_2 \otimes I_3 &=& \sigma_\phi^{(1)} \otimes \sigma_s^{(2)} \otimes I^{(3)},\\ 
    K_2 &=& Z_1 \otimes X_2 \otimes Z_3 &=& \sigma_s^{(1)} \otimes \sigma_\phi^{(2)} \otimes \sigma_s^{(3)},\\
    K_3 &=& I_1 \otimes Z_2 \otimes X_3 &=& I^{(1)} \otimes \sigma_s^{(2)} \otimes \sigma_\phi^{(3)}.
  \end{array}
\end{equation}
When the three cluster qubits are measured in the order $1\prec 2 \prec 3$, the corresponding quantum circuit is \cite{RBB03}
\begin{equation}
\label{EqCirc}
\parbox{5cm}{\includegraphics[width=5cm]{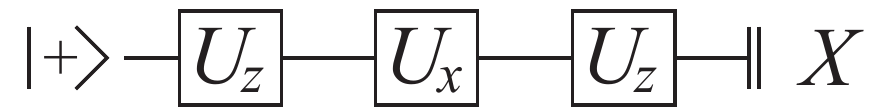}}
\end{equation}
Using Eq.~(\ref{corr1}), here we show that if qubits 1, 2, 3 are measured in the order $1\prec 2 \prec 3$, then the randomness of the measurement outcomes on qubits 1 and 2 can be corrected for. First we consider the stabilizer operator $K_2 = \sigma_s^{(1)} \otimes \sigma_\phi^{(2)} \otimes \sigma_s^{(3)}=:K(1)$. If inserted in the state overlap of Eq.~(\ref{corr1}), the measurement outcome of qubits 1 and 3 is flipped, as well as the measurement basis at qubit 2. Since qubits 2 and 3 are yet unmeasured when qubit 1 is measured, this is a valid correction operation for qubit 1; hence the notation $K(1)$. Similarly, $K_3 = I^{(1)} \otimes \sigma_s^{(2)} \otimes \sigma_\phi^{(3)} =: K(2)$ can be used as correction operation for qubit 2. $K(2)$ flips the measurement outcome of qubit 2 and the measurement basis of qubit 3. Since qubit 3 is yet unmeasured when qubit 2 is measured, this corresponds to a valid correction operation.

The above argument also works in reverse. If the correction operations $K(1)$ and $K(2)$ are used, then $1 \prec 2 \prec 3$ follows. $K(1)$ implies that the measurement basis of qubit 2 depends on the measurement outcome at qubit 1, hence $1 \prec 2$. Note that $1\prec 3$ does not yet follow! $K(1)$ does {\em{not}} affect the measurement basis at qubit 3. Only the meaning of the eigenstates is interchanged, which by itself  does not require qubit 3 to be measured after qubit 1. The interpretation of the measurement outcome may take place long after the measurement itself has taken place. 

$K_3=K(2)$ implies that the measurement basis of qubit 3 depends on the measurement outcome of qubit 2, and hence $2\prec 3$. Both relations taken together yield $1 \prec 2 \prec 3$.  

From the equivalence with the circuit of Eq.~(\ref{EqCirc}) one would expect one bit of classical output. Indeed, if no correction operations need to be used, the eigenvalue measured at the output of the circuit corresponds to the eigenvalue $\lambda_3$ measured on qubit 3 of the cluster. Now recall that $K(1)$, applied conditioned upon $\lambda_1=-1$, flips $\lambda_3$. Therefore, with or without corrections, the eigenvalue measured in the circuit Eq.~(\ref{EqCirc}) equals $\lambda_1\lambda_3$. Or, in binary notation $\lambda_1\equiv (-1)^{s_1},\, \lambda_3\equiv (-1)^{s_3}$, the single bit of classical output takes the value $s_1 + s_3 \mod 2$.

Note that we have not made any use of $K_1=\sigma_\phi^{(1)}\otimes \sigma_s^{(2)}$ in the above argument. Still, $K_1$ has a role to play, as we discuss in Section~\ref{gauge}.

\subsection{Influence matrix, forward and backward cones}

To counteract the randomness of measurement outcomes, two measurement settings (i.e., bases) per qubit suffice, which may be labeled by $q_a=0$ and $q_a=1$, respectively, for each qubit $a$. The measurement settings may collectively be described by a binary vector $\textbf{q}$, $[\textbf{q}]_a = q_a$ for all $a\in \Omega$, and the measurement outcomes by a binary vector $\textbf{s}$, $[\textbf{s}]_a = s_a$, forall $a \in \Omega$. It turns out that the relation between measurement bases $\textbf{q}$ and measurement outcomes \textbf{s} is {\em{linear}} \cite{RB02},
\begin{equation}
\label{TO1}
\textbf{q} = T \textbf{s} \mod 2,
\end{equation}
with $T$ a binary matrix. We call $T$ the influence matrix. 

The set of all qubits $b$ whose measurement basis must be adjusted according to the measurement outcome on $a$ is denoted as the {\em{forward cone}} of a qubit $a$. Similarly, the backward cone of a qubit $b$ is the set of all those qubits $a$ whose measurement outcome influence the measurement basis at $b$. More formally, with Eq.~(\ref{TO1}), 
\begin{Def}[Forward and backward cones]\label{FBC} For any $a \in \Omega$ the forward cone $\mbox{fc}(a)$ is given by
\begin{equation}
  \mbox{fc}(a):=\left\{ b \in \Omega|\, \partial  q_b / \partial s_a =1 \right\}.
\end{equation}
For any $b \in \Omega$ the backward cone $\mbox{bc}(b)$ is given by
\begin{equation}
  \mbox{bc}(b):=\left\{ a \in \Omega|\, \partial  q_b / \partial s_a =1 \right\}.
\end{equation}
\end{Def}

We denote the characteristic vectors of $fc(a)$ and $bc(b)$ by $\mbox{\textbf{fc}}(a)$ and $\mbox{\textbf{bc}}(b)$, respectively. Then, the influence matrix $T$ takes the form
\begin{equation}
\label{TO2}
  T = \left(\begin{array}{c} (\;\;\;\textbf{bc}(1)\;\;\;)\\
  (\;\;\;\textbf{bc}(2)\;\;\;)\\ .\\.\\(\;\;\;\textbf{bc}(n)\;\;\;) 
\end{array} \right) = 
   \left(\begin{array}{cccc} \!\!\!\left(\begin{array}{c}\mbox{ }\\ \mbox{ }\\ \!\!\!\!\!\textbf{fc}(1)\!\!\!\!\!\\ \mbox{ } \\ \mbox{ } \end{array} \right)
 \left(\begin{array}{c}\mbox{ }\\ \mbox{ }\\ \!\!\!\!\!\textbf{fc}(2)\!\!\!\!\!\\ \mbox{ } \\ \mbox{ } \end{array} \right) .. 
\left(\begin{array}{c}\mbox{ }\\ \mbox{ }\\ \!\!\!\!\!\textbf{fc}(n)\!\!\!\!\!\\ \mbox{ } \\ \mbox{ } \end{array} \right)
  \end{array} \!\!\!\right).
\end{equation}
The influence matrix $T$ generates a temporal relation among the measurement events under transitivity. We say $a \prec b$ ($a$ precedes $b$) if $b \in fc(a)$. A priori, it is not forbidden that for two qubits $a,b \in \Omega$, $a \prec b$ and $b \prec a$. However, such a computation could not be run deterministically in a world like ours where time progresses linearly. An MBQC is {\em{deterministically runnable}} if the temporal relation ``$\prec$'' between the measurement events is a strict partial order.  

\begin{Def}[Strict partial order]
A strict partial order is a relation among the elements $a \in \Omega$ with the following properties
\begin{equation}
  \begin{array}{llcr}
  a \not \prec a, & \forall a \in \Omega, &&(\mbox{\em{irreflexivity}})\\
  a \prec b \Longrightarrow b \not \prec a,& \forall a,b \in \Omega &&(\mbox{\em{antisymmetry}})\\
  a \prec b, b \prec c \Longrightarrow a \prec c, & \forall a,b,c \in \Omega. && (\mbox{\em{transitivity}})
  \end{array}
\end{equation}
\end{Def}

\begin{Def}[Input and output sets.]\label{IOdef} For a given MBQC, the input set $I \subseteq \Omega$ is the set of qubits whose backward cones are empty, $I = \{a \in \Omega|\,  bc(a)= \emptyset\}$. The output set $O \subseteq \Omega$ is the set of qubits whose forward cones are empty, $O = \{ b \in \Omega, fc(b) = \emptyset\}$.
\end{Def}
That is, with respect to a given temporal relation among the measurement events, $I$ is the maximal set of qubits which can be measured first, and $O$ is the maximal set of qubits which can be measured last.

Regarding the computational output, for the purpose of this paper we are exclusively interested in MBQCs for which the computational result is a classical bit string. That is, every qubit in $\Omega$ is measured. Then, as a consequence of the randomness of individual measurement outcomes, the classical output $\textbf{o}$ of an MBQC is given by {\em{correlations}} among measurement outcomes. Again, the relation between classical output and measurement outcomes is linear,
\begin{equation}
\label{TO1b}
\textbf{o} = Z \textbf{s} \mod 2,
\end{equation}
for a suitable binary matrix $Z$.\medskip

{\em{Example:}} To illustrate the above notions, we briefly return to the three-qubit cluster state example of Section~\ref{OTO}. From the previous discussion we find that
\begin{equation}
\label{TrelEx0}
\left(\begin{array}{c} q_1\\q_2\\q_3 \end{array}\right) = \left(\begin{array}{ccc} 0 & 0 & 0\\ 1 & 0 & 0 \\ 0 & 1 & 0\end{array}\right) \left(\begin{array}{c} s_1\\s_2\\s_3 \end{array}\right) \mod 2.
\end{equation}
Therefore, $fc(1)=\{2\}$, $fc(2)=\{3\}$, $fc(3)=\emptyset$ and $bc(1)=\emptyset$, $bc(2)=\{1\}$, $bc(3)=\{2\}$. Hence, $I=\{1\}$ and $O=\{3\}$. Also, $1\prec 2$ and $2\prec 3$. The latter two relations generate a third under transitivity, namely $1 \prec 3$. Asymmetry and irreflexivity are obeyed in this example. Regarding the single bit of output in this computation, the matrix $Z$ of Eq.~(\ref{TO1b}) is $Z=(1\,0\,1)$.

\subsection{Brief review of prior work on MBQC temporal order}

{\em{MBQC temporal order as an (almost) emergent phenomenon.}} In \cite{Gflow} the following question is asked: ``Given graph $G$ and the set $\Sigma$ of measurement planes for all vertices, can the temporal order of measurements in MBQC with a graph state $|G\rangle$ be uniquely reconstructed from this information?'' The graph $G$ and the measurement planes $\Sigma$ are undirected objects. Thus, if the answer to this question was yes, then temporal order in MBQC were truly emergent. 

However, it turns out that the pair $G,\Sigma$ does not specify the temporal order of measurements in MBQC uniquely; there are in general a number of consistent temporal orders respecting the requirement that the randomness of measurement outcomes should not affect the logical processing. One may then ask how constraining on temporal order this requirement actually is. To this question, the following answer is provided by \cite{Gflow}: If in addition to $G$ and $\Sigma$ the set $I$ of first-measurable and the set $O$ of last-measurable qubits is known, then the complete temporal order (if existing) can be uniquely reconstructed from this information. Thus, MBQC temporal order is not emergent in the strict sense; a seed $I,O$ must be provided in addition to $G$ and $\Sigma$, and the complete temporal order then follows.

But not every pair $I$, $O$ will lead to a consistent temporal order. The question that now arises is which pairs $I,O$ do. For the case where the stabilizer resource state is a graph state and all qubits are measured in the $[X,Y]$-plane\footnote{The former condition alone is not a restriction, since all stabilizer states are local Clifford equivalent to graph states \cite{Grassl}, but both conditions jointly are.} then the answer to this question is given in \cite{Mhalla}. Denote by $A_G$ the adjacency matrix of the graph $G$ describing the resource state $|G\rangle$, and by $A_G|_{O^c\times I^c}$ the submatrix of $A_G$ where the rows are restricted to $O^c:=\Omega\backslash O$ and the columns are restricted to $I^c$. Then, the pair $I,O$ leads to a partial order of measurement events in MBQC iff there exists a matrix $T$ such that $A_G|_{O^c\times I^c} T = I$, and $T$ is free of cycles (that is $T_{aa}=0,\forall a$, $T_{ab}T_{ba}=0,\forall a,b$, $T_{ab}T_{bc}T_{ca}=0,\forall a,b,c$, etc). The resulting temporal order is generated by $T$ under transitivity. \medskip

\begin{figure}[h]
\begin{center}
  \begin{tabular}{ll}
    a) & b)\\
    \parbox{7cm}{\includegraphics[width=7cm]{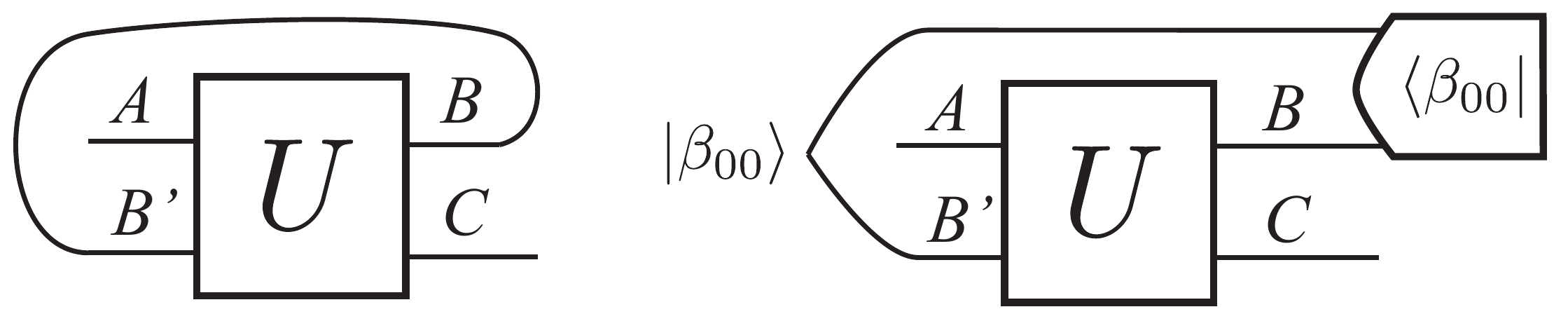}} & \parbox{7cm}{\includegraphics[width=7cm]{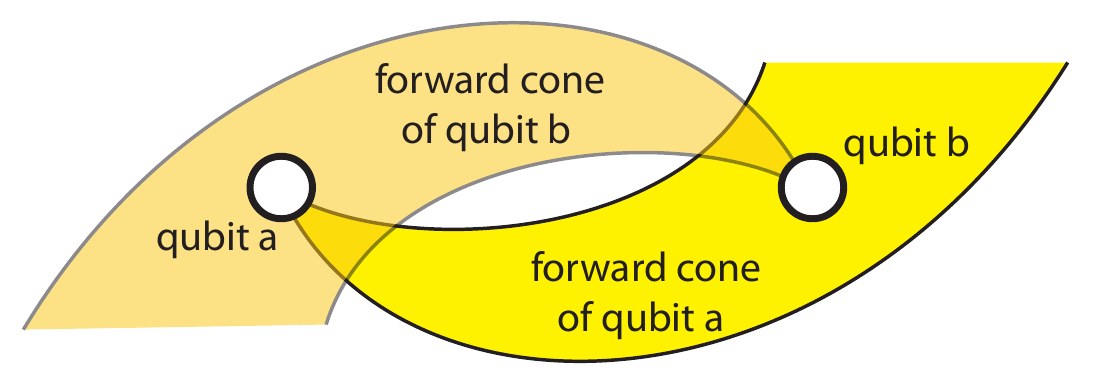}} 
  \end{tabular}
    \caption{\label{CTCfig} Closed time-like curves in MBQC. a) Bennett, Schumacher and Svetlichny's post-selection model \cite{BennettCTC,SvetlichnyCTC} of CTCs (left: circuit with wires `going backwards in time', right: implementation thereof using teleportation and post-selection). b) Nested forward cones in the MBQC equivalent of the teleportation circuit in (a).}
\end{center}
\end{figure}

\noindent
{\em{MBQC and closed time-like curves.}} In \cite{CTC} it is shown that MBQC encompasses the post-selection model of closed time-like curves (CTC's) proposed by Bennett, Schumacher \cite{BennettCTC} and Svetlichny \cite{SvetlichnyCTC}. The CTCs arise from circuits such as the one displayed in Fig.~\ref{CTCfig}a, translated into MBQC. The result are forward cones with the property $b \in fc(a)\,\wedge a \in fc(b)$, for two suitably chosen qubits $a,b \in \Omega$; see Fig.~\ref{CTCfig}b. Such nested forward cones are an obstruction to deterministic runnability of MBQC, but mimic closed time-like curves of General Relativity in the MBQC setting.

\section{Gauge degrees of freedom}
\label{gauge}

\begin{figure}[t]
  \begin{center}
    \begin{tabular}{ll}
      a) & b)\\
      \includegraphics[width=5.5cm]{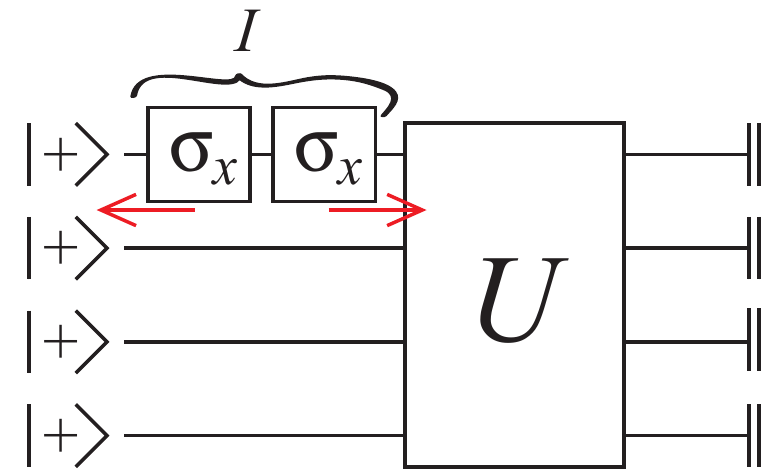} & 
      \includegraphics[width=8cm]{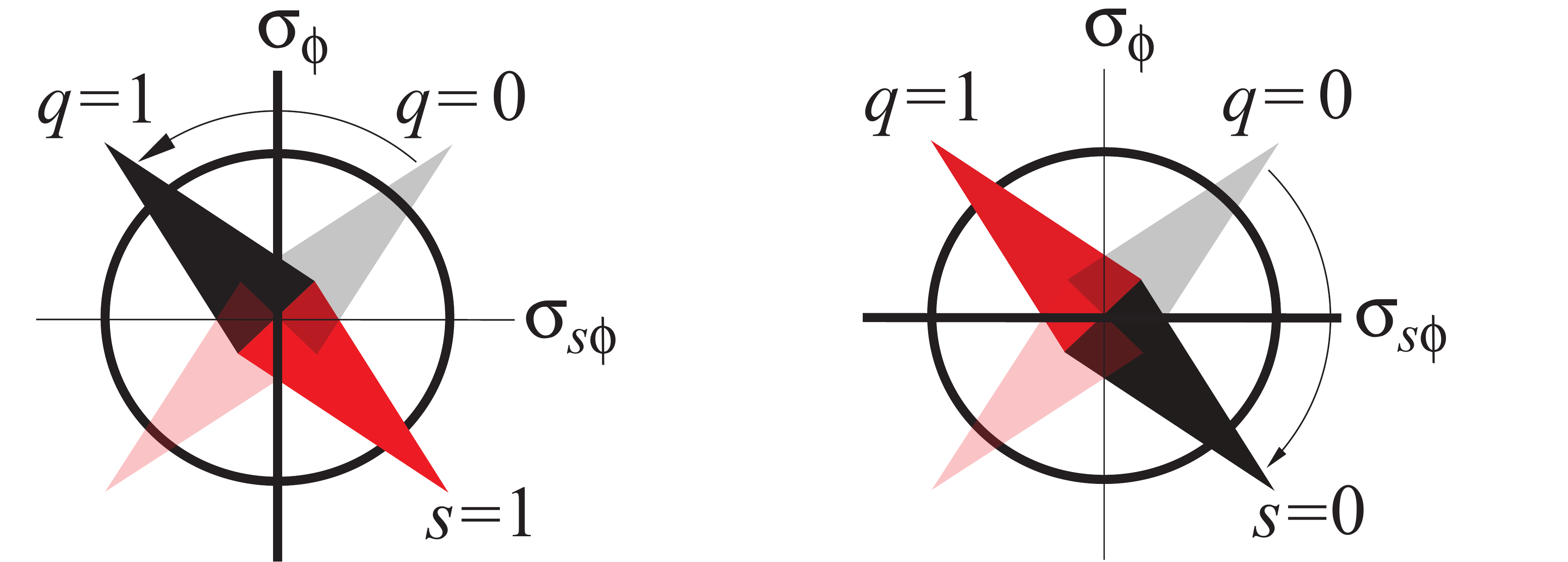}
    \end{tabular}
    \caption{\label{GT} Two symmetry transformations. a) Gauge transformation in the  circuit model. For any logical qubit $l$, an identity $I=\sigma_x^{(l)}\sigma_x^{(l)}$ is inserted into the circuit next to the input. The left $\sigma_x$ is propagated backwards in time, and absorbed by the input state $|+\rangle$. The right $\sigma_x$ is propagated forward in time, flipping rotation angles and, potentially, measurement outcomes in the passing. b) Flipping a measurement plane in MBQC. In the measurement plane $[\sigma_\phi,\sigma_{s\phi}]$, the Pauli operator $\sigma_\phi$ is distinguished over $\sigma_{s\phi}$ because the rule for adjusting a local observable $O$ for measurement is $O[q=1] = \sigma_\phi O[q=0] \sigma_\phi^\dagger$. If, for a qubit $a$, $T_{aa}=0$ (with $T$ the influence matrix) then the exchange $\sigma_\phi^{(a)} \longleftrightarrow \sigma_{s\phi}^{(a)}$ is a symmetry transformation for the given MBQC.}
  \end{center}
\end{figure}

Here we introduce the notion of ``gauge transformations'' acting on a given quantum computation. These transformations exist for both the circuit model and MBQC.

\subsection{Gauge transformations in the circuit model}
\label{GC}

To obtain an intuition for the gauge transformations introduced here, it is instructive to first inspect them in the circuit model. Specifically, we consider a quantum circuit which consists of (1) the preparation of the quantum register in the initial state $\bigotimes_{i = 1}^n|+\rangle_i$, (2) unitary evolution composed of, say, CNOT gates and one-qubit rotations about the $X$- and $Z$-axes, and (3) local measurements for readout. Such a circuit is displayed in Fig.~\ref{GT} above. Then, into every qubit line individually,  we may insert an identity $I = \sigma_x \sigma_x$ next to the input; See Fig.~\ref{GT}. The left $\sigma_x$ is propagated backwards in time until absorbed by the input state $|+\rangle$. The right $\sigma_x$ is propagated forward in time, flipping rotation angles and readout-measurement outcomes in the passing. 

This transformation is an {\em{equivalence transformation}}, since it is caused by the insertion of an identity gate into the circuit of Fig.~\ref{GT}. It changes the sign for certain rotation angles, i.e. when angles are counted positive or negative. Specifically, for the $z$-rotation gates next to each input qubit we can individually choose our convention for which rotation angles are called positive or negative, respectively. Once those signs are fixed on the input side, they are fixed throughout the circuit. Changing this reference affects the procedure of computation, but leaves the distribution of computational results unchanged. We therefore call it a gauge transformation.

\subsection{The gauge transformations in the measurement-based model}

Translating the above discussed gauge transformations from the circuit model into MBQC it is easily seen that the above relations Eq.~(\ref{TO1}) and (\ref{TO1b}) are incomplete. We find the more general relations
\begin{subequations}
  \begin{align}
  \label{TO7a}
  \textbf{q} &= T \textbf{s} + H \textbf{g} \mod 2,\\
  \label{TO7b}
  \textbf{o} &= Z \textbf{s} + R\textbf{g} \mod 2,
  \end{align}
\end{subequations}
with $\textbf{g}$ a choice of gauge. 

Under a change of $\textbf{g}$, the measurement bases in a particular MBQC change, but the probability distribution for the classical output values remains unchanged. Of course, knowledge of the $\textbf{g}$-de\-pen\-dent extra parts in Eq.~(\ref{TO7a},\ref{TO7b}) is not necessary to run any given MBQC, since $\textbf{g}=\textbf{0}$ is always a valid choice. However, the presence of the extra terms in the processing relations strengthens their interdependence with the resource state, which is the reason why we discuss them here.

We may now want to derive the generalized processing relations Eq.~(\ref{TO7a},\ref{TO7b}) directly in MBQC, without reference to the circuit model. Before discussing the general case, we return to the specific example of the three-qubit cluster state in Section~\ref{OTO}. 

{\em{Example:}} Consider the product $K_1K_3=\sigma_\phi^{(1)} \otimes \sigma_\phi^{(3)}$ of $K_1$, $K_3$ in Eq.~(\ref{3Cluster}). When used in Eq.~(\ref{corr1}), the effect of this stabilizer element is to flip the measurement bases of qubits 1 and 3. Hence, the relation Eq.~(\ref{TrelEx0}) generalizes to
\begin{equation}
\label{TrelEx}
\left(\begin{array}{c} q_1\\q_2\\q_3 \end{array}\right) = \left(\begin{array}{ccc} 0 & 0 & 0\\ 1 & 0 & 0 \\ 0 & 1 & 0 \end{array}\right) \left(\begin{array}{c} s_1\\s_2 \\s_3 \end{array}\right) + \left(\begin{array}{c} 1\\ 0 \\ 1 \end{array}\right)g_1 \mod 2.
\end{equation}
This is precisely what one would expect from the insertion of $\sigma_x\sigma_x$ next to the input of the equivalent quantum circuit in Eq.~(\ref{EqCirc}). 

We now turn to the general case. Via Eq.~(\ref{corr1}), the stabilizer group ${\cal{S}}(|\Psi\rangle)$ acts on $\textbf{s}$ and $\textbf{q}$. Denote the post-measurement state of qubit $a \in \Omega$ by $|s_a,q_a\rangle_a$, with $s_a$ the measurement outcome and $q_a$ specifying the measurement basis. Then, as in Eq.~(\ref{corr1}),
\begin{equation}\label{GaugeId}
  \left(\bigotimes_{a\in \Omega}\mbox{}_a\langle s_a,q_a|\right)|\Psi\rangle = \left(\bigotimes_{a\in \Omega}\mbox{}_a\langle s_a,q_a|\right)K|\Psi\rangle = \left(\bigotimes_{a\in \Omega}\mbox{}_a\langle s_a,q_a|K\right)|\Psi\rangle,\;\;\;\;\;\;\; \forall K \in {\cal{S}}(|\Psi\rangle).
\end{equation}
Since $\sigma_s|s,q\rangle = |s\oplus 1, q\rangle$ and $\sigma_\phi|s,q\rangle = |s, q\oplus 1\rangle$, for a stabilizer element $K=\bigotimes_{a \in \Omega}(\sigma_s^{(a)})^{v_a}(\sigma_\phi^{(a)})^{w_a}$ the action of $G_K$ on $\textbf{s}$, $\textbf{q}$ is 
\begin{equation}
  \label{G1}
  \begin{array}{rl}
  G_{K}: & \begin{array}{rcl} \textbf{s} &\longrightarrow& \textbf{s} + \textbf{v} \mod 2,\\
 \textbf{q} &\longrightarrow& \textbf{q} + \textbf{w} \mod 2.\end{array}
  \end{array}
\end{equation} 
Again, nothing changes by the insertion of a stabilizer (identity) operator into the state overlap of Eq.~(\ref{corr1}), and transformations $G_K$ are therefore equivalence transformations. They can be used to constrain the possible temporal orders in MBQC, as we discuss explicitly in Appendix~\ref{GTO}.\medskip

\subsection{Closed time-like curves -- good or bad?}

Partial temporal orders for an MBQC with a given resource stabilizer state and fixed measurement planes are the solution to two constraints, namely
\begin{enumerate}
\item{\label{TOc1} Every qubit in the resource state must {\em{have}} a forward cone\footnote{Forward cones are allowed to be empty.}.}
\item{\label{TOc2}The forward cones generate an irreflexive and antisymmetric relation under transitivity.} 
\end{enumerate}
In this paper we allow closed time like curves (CTC) in the computation; we are interested in classifying all transitive \emph{temporal relations} consistent with a given resource stabilizer state and set of measurement planes. We do not restrict to partial orders per se, and therefore {\em{drop the above Condition~\ref{TOc2}}}. The remaining Condition~\ref{TOc1} may, at first sight, hardly seem to pose any constraints at all. However, it does. It imposes a self-consistency condition on each forward cone. As we make explicit in Section~\ref{FWCwave}, this self-consistency condition in certain cases takes the form of a wave equation.  

A temporal relation among measurements which is not a partial order contains closed time-like curves, and can be implemented in familiar linear time only by employing postselection. On the other hand, temporal relations which contain closed time-like curves have recently been found of independent interest \cite{CTC}. They are the translation of quantum circuits with post-selection CTCs \cite{BennettCTC}, \cite{SvetlichnyCTC} into MBQC. We may compare with the theory of General Relativity, where certain solutions of the Einstein equations contain closed time-like curves. Such solutions give rise to a host of paradoxes, and whether they are physical is under debate \cite{Thorne}. But Einstein's field equations are not abandoned because they allow for CTCs.

\subsection{Correction and gauge operations in the stabilizer formalism}

To state and prove our results on MBQC temporal order, we need to make a few more definitions. It turns out that the possible temporal relations, and indeed the classical processing relations Eq.~(\ref{TO7a}), (\ref{TO7b}), can be parametrized by two subsets of $\Omega$, namely the computational output set $O_{\text{comp}}$ and the gauge input set $I_{\text{gauge}}$. We define these sets next. 

We say that the measurement outcome $s_a$ of qubit $a$ is {\em{corrected}} in a given MBQC, if by insertion of a suitable stabilizer operator $K$ in Eq.~(\ref{GaugeId}) equivalence of $s_a=1$ with the reference outcome $s_a=0$ is established, at the cost of adjustment of measurement bases and/or re-interpretation of measurement outcomes on other qubits.

We observe that if for a given MBQC {\em{all}} measurement outcomes $s_a$, $a \in \Omega$, can be corrected then these measurement outcomes contain no information and no linear combination of them is worth outputting. Thus, in general there will be a set of qubits whose measurement outcomes are not corrected.

\begin{Def}[Computational output set]\label{COS} For a given MBQC, the computational output set $O_{\text{comp}}\subseteq \Omega$ is the set of qubits whose measurement outcomes are not corrected.
\end{Def}
The correction operations for the qubits in $a \in (O_{\text{comp}})^c$, c.f. Eq.~(\ref{corr1}), are each implemented by correction operators $K(a) \in {\cal{S}}(|\Psi\rangle)$. For each $a \in  (O_{\text{comp}})^c$, $K(a)$ has the property that it flips the measurement outcome $s_a$ at qubit $a$, but does not flip the measurement outcome on any other qubit in $(O_{\text{comp}})^c$. In this way, it is ensured that the correction operation is for qubit $a$ individually. 

\begin{Def}[Correction operator]\label{CorrOp} For an MBQC with a given stabilizer resource state $|\Psi\rangle$, fixed set $\Sigma$ of measurement planes and computational output set $O_{\text{comp}}$, for each $a \in (O_{\text{comp}})^c$ the corresponding correction operator $K(a) \in {\cal{S}}(|\Psi\rangle)$ is a Pauli operator satisfying the conditions
\begin{equation}
  \label{CorrCorr}
  \begin{array}{rclll}
    K(a)|_a & \in & \{\sigma_s^{(a)}, \sigma_{s\phi}^{(a)}\},\\
    K(a)|_b & \in & \{I^{(b)},\sigma_\phi^{(b)}\},& \forall b \in (O_{\text{comp}})^c\backslash a.
  \end{array}
\end{equation}
\end{Def}
The correction operator $K(a)$ can be used to correct an ``undesired'' measurement outcome at qubit $a$, c.f. Eq.~(\ref{corr1}). If, for a given operator $K(a)$, $K(a)|_b \in  \{\sigma_\phi^{(b)}, \sigma_{s\phi}^{(b)}\}$ then, by Eq.~(\ref{corr1}), the measurement basis at qubit $b$ depends on the measurement outcome for qubit $a$. In terms of the influence matrix $T$,
\begin{equation}
  \label{KTrel}
K(a)|_b \in \{\sigma_\phi^{(b)},\sigma_{s\phi}^{(b)}\}  \Longleftrightarrow [T]_{ba} = 1,\;\;\;\;\forall a \in (O_{\text{comp}})^c,\, b\in \Omega.  
\end{equation}
Note that in the first line of Eq.~(\ref{CorrCorr}) we allow $K(a)|_a =\sigma_{s\phi}^{(a)}$ only because we are admitting closed time-like curves in the present discussion. $K(a)|_a =\sigma_{s\phi}^{(a)}$ means that the measurement basis for qubit $a$ depends on the outcome $s_a$ of the measurement of qubit $a$. This amounts to a closed time-like curve only involving qubit $a$ (self-loop) and is an obstacle to deterministic runnability.

Because the qubits $a \in O_{\text{comp}}$ have no correction operations, $T_{ba}=0$ for all $b \in \Omega$, and thus 
\begin{equation}
\label{OOc}
O_{\text{comp}} \subseteq O.
\end{equation}  
Thus, $O_{\text{comp}}$ and the correction operators $\{K(a), a \in (O_{\text{comp}})^c\}$ completely determine the influence matrix $T$ and hence the temporal relation among the measurements. 

Conversely, if $O_{\text{comp}}$ and $\{K(a), a \in (O_{\text{comp}})^c\}$ are unknown, then the constraints Eq.~(\ref{CorrCorr}) pose self-consistency conditions on them. We discuss these conditions further below.\medskip

As we have seen in the concrete three-qubit example above, for a given MBQC the relation between the choice $\textbf{q}$ of measurement bases and the measurement outcomes $\textbf{s}$ in general allows for an offset term $H\textbf{g}$, c.f. Eq.~(\ref{TO7a}). Thus, the measurement bases for a certain set of qubits can be freely chosen until the initially arbitrary $\textbf{g}$ becomes fixed. This observation leads to 

\begin{Def}[Gauge input set]\label{GIS}For a given MBQC with input set $I$, the gauge input set $I_{\text{gauge}}\subseteq I$ is a set of qubits such that for each $i \in I_{\text{gauge}}$ the parameter $q_i$ specifying the locally measured observable $O_i[q_i]$ can be freely chosen. 
\end{Def}

Analogously to the correction operations, there will be gauge operators which implement the `corrections' of measurement {\em{bases}} for the qubits in $I_{\textbf{gauge}}$. The definition of the gauge operators ensures that for all $i \in I_{\text{gauge}}$ the corresponding $q_i$ can be changed individually without changing the others.

\begin{Def}[Gauge operators]\label{GaugeOp} For an MBQC with a given stabilizer resource state $|\Psi\rangle$, fixed set $\Sigma$ of measurement planes, computational output set $O_{\text{comp}}$ and gauge input set $I_{\text{gauge}}$, for each $i \in I_{\text{gauge}}$, the corresponding gauge operator $\overline{K}(i) \in {\cal{S}}(|\Psi\rangle)$ is a Pauli operator satisfying the conditions
\begin{equation}
\label{CorrGauge}
\begin{array}{rcll}
  \overline{K}(i)|_i &=& \sigma_\phi^{(i)},\\
  \overline{K}(i)|_j &=& I^{(j)},& \forall j \in (I_{\text{gauge}} \cap (O_{\text{comp}})^c) \backslash i,\\
  \overline{K}(i)|_k &\in & \{I^{(k)},\sigma_\phi^{(k)}\}, & \forall k \in ((I_{\text{gauge}})^c \cap (O_{\text{comp}})^c) \backslash i, \\
\overline{K}(i)|_l &\in & \{I^{(l)},\sigma_s^{(l)}\}, & \forall l \in (I_{\text{gauge}} \cap O_{\text{comp}}) \backslash i. 
\end{array}
\end{equation}
\end{Def}
Like previously for the correction operations, Eq.~(\ref{CorrGauge}) poses self-consistency condition on the possible sets $I_{\text{gauge}}$ and $\{\overline{K}(i), i \in I_{\text{gauge}}\}$.\medskip 

\begin{Lemma}
\label{indepL}
The correction operators $K(a)$ of Eq.~(\ref{CorrCorr}), $a \in (O_{\text{comp}})^c$, and the gauge operators $\overline{K}(i)$ of Eq.~(\ref{CorrGauge}), $i \in I_{\text{gauge}}$, are independent. That is, for all sets $J \subset I_{\text{gauge}}$, $L \subset (O_{\text{comp}})^c$ with $J \neq \emptyset\,\vee\, L \neq \emptyset$, $\tilde{K}(J,L):=\prod_{i \in J}\overline{K}(i)\,\prod_{a \in L}K(a) \neq I^{(\Omega)}$.
\end{Lemma}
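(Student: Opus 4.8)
The plan is to recast the statement as a rank/independence question over $\mathbb{Z}_2$ using the symplectic description of Pauli operators adapted to the local measurement frames. For each qubit $a$ I would write every single-qubit Pauli (up to phase) as $(\sigma_s^{(a)})^{v_a}(\sigma_\phi^{(a)})^{w_a}$ with $v_a,w_a\in\mathbb{Z}_2$, so that $I^{(a)}\mapsto(0,0)$, $\sigma_\phi^{(a)}\mapsto(0,1)$, $\sigma_s^{(a)}\mapsto(1,0)$, and $\sigma_{s\phi}^{(a)}\mapsto(1,1)$ by the relation $\sigma_s=i\sigma_{s\phi}\sigma_\phi$. Under multiplication of Pauli operators these bit vectors $(\textbf{v},\textbf{w})$ add modulo $2$, up to an overall phase I can ignore; the only fact I need is that an operator equals $I^{(\Omega)}$ only if its entire symplectic vector vanishes. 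Hence to prove $\tilde{K}(J,L)\neq I^{(\Omega)}$ it suffices to exhibit a single nonzero component.

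The key observation is that the tables Eq.~(\ref{CorrCorr}) and Eq.~(\ref{CorrGauge}) have been engineered so that the $\sigma_s$-sector and the $\sigma_\phi$-sector separate the two families on two different index sets. First I would read off from Eq.~(\ref{CorrCorr}) that, restricted to $(O_{\text{comp}})^c$, the $\textbf{v}$-vector of each correction operator is the indicator vector $\textbf{e}_a$: indeed $K(a)|_a\in\{\sigma_s,\sigma_{s\phi}\}$ both carry $v=1$, while $K(a)|_b\in\{I,\sigma_\phi\}$ carry $v=0$ for $b\in(O_{\text{comp}})^c\backslash a$. Next I would check, running through all four lines of Eq.~(\ref{CorrGauge}), that every gauge operator has $\textbf{v}(\overline{K}(i))|_{(O_{\text{comp}})^c}=\textbf{0}$, since on $(O_{\text{comp}})^c$ its local factors are drawn only from $\{I,\sigma_\phi\}$. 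Dually, restricting the $\sigma_\phi$-sector to $I_{\text{gauge}}$, Eq.~(\ref{CorrGauge}) gives $\textbf{w}(\overline{K}(i))|_{I_{\text{gauge}}}=\textbf{e}_i$, because $\overline{K}(i)|_i=\sigma_\phi$ has $w=1$ whereas the factors on $I_{\text{gauge}}\backslash i$ (namely $I$, or on $I_{\text{gauge}}\cap O_{\text{comp}}$ either $I$ or $\sigma_s$) all carry $w=0$.

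With these three facts the conclusion follows by a case split on whether $L$ is empty. If $L\neq\emptyset$, I would compute $\textbf{v}(\tilde{K}(J,L))|_{(O_{\text{comp}})^c}$; the gauge factors drop out and the sum collapses to $\sum_{a\in L}\textbf{e}_a$, whose support is exactly $L$, hence nonzero, so $\tilde{K}(J,L)\neq I^{(\Omega)}$. If instead $L=\emptyset$ and $J\neq\emptyset$, I would compute $\textbf{w}(\tilde{K}(J,L))|_{I_{\text{gauge}}}=\sum_{i\in J}\textbf{e}_i\neq\textbf{0}$, again forcing $\tilde{K}(J,L)\neq I^{(\Omega)}$.

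I expect the only real care to lie in the second step of the middle paragraph: verifying that the gauge operators contribute nothing to the $\sigma_s$-sector on \emph{all} of $(O_{\text{comp}})^c$ requires confirming that the index sets in Eq.~(\ref{CorrGauge}) partition $(O_{\text{comp}})^c$ into $i$ (when $i\in(O_{\text{comp}})^c$), $I_{\text{gauge}}\cap(O_{\text{comp}})^c$, and $(I_{\text{gauge}})^c\cap(O_{\text{comp}})^c$, on each of which $v=0$; and similarly that $I_{\text{gauge}}$ is partitioned by the lines governing the $\sigma_\phi$-sector. The genuine insight is choosing the right invariants — recognizing that corrections are ``standard-basis'' in the $\sigma_s$-sector on $(O_{\text{comp}})^c$ while gauges are invisible there, and gauges are ``standard-basis'' in the $\sigma_\phi$-sector on $I_{\text{gauge}}$. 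Once these are in hand, the two families live on disjoint coordinate blocks and no cancellation between a nonempty $L$-part and the $J$-part can occur, so independence is immediate.
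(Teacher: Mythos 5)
Your proof is correct and follows essentially the same route as the paper's: both arguments rest on the observation that on $(O_{\text{comp}})^c$ only $K(a)$ itself carries a $\sigma_s$-component at $a$ while every other $K(\cdot)$ and all $\overline{K}(\cdot)$ carry none there, and dually that on $I_{\text{gauge}}$ only $\overline{K}(i)$ carries a $\sigma_\phi$-component at $i$. The paper packages this as an indirect cancellation argument in operator language rather than as a direct computation of symplectic vectors, but the mathematical content is identical.
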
 

{\em{Proof of Lemma~\ref{indepL}.}} (indirect) Assume that there exists a pair $J,L$, with $J \neq \emptyset\,\vee\, L \neq \emptyset$, such that $\tilde{K}(J,L) = I^{(\Omega)}$. Then, for all $b \in (O_{\text{comp}})^c$, $\tilde{K}(J,L)|_b=1$. Now, with Eq.~(\ref{CorrCorr}), $K(b)|_b \in \{\sigma_s^{(b)}, \sigma_{s\phi}^{(b)}\}$, and $K(c)|_b \in  \{\sigma_\phi^{(b)}, I^{(b)}\}$ for all $c \in (O_{\text{comp}})^c\backslash b$. With Eq.~(\ref{CorrGauge}), $\overline{K}(i)|_b \in \{\sigma_\phi^{(b)}, I^{(b)}\}$ for all $i \in I_{\text{gauge}}$. Therefore, no other $K(\cdot)$, $\overline{K}(\cdot)$ can cancel a $\sigma_s^{(b)}$-contribution from $K(b)$ to $\tilde{K}(J,L)$. Hence, $b \not\in L$ for all $b \in (O_{\text{comp}})^c$, and thus $L = \emptyset$. By an analogous argument, no $K(\cdot)$ can cancel the $\sigma_\phi^{(i)}$-contribution of $\overline{K}(i)$ to $\tilde{K}(J,L)$, hence $i \not \in J$ for all $i \in I_{\text{gauge}}$, and $J=\emptyset$. Thus, $\tilde{K}(J,L) = I^{(\Omega)} \Longrightarrow J,L=\emptyset$. Contradiction. Hence, the $K(a)$, $\overline{K}(i)$ are independent. $\Box$ \medskip

{\em{Remark:}} For any given MBQC with fixed temporal relation, the computational output set $O_{\text{comp}}$ is a subset of the output set $O$, see Eq.~(\ref{OOc}). But $O_{\text{comp}}$ and $O$ are not necessarily equal. Likewise, $I_{\text{gauge}}\subseteq I$ by definition, but $I_{\text{gauge}}$ and $I$ may not be equal. To illustrate this point, we consider the following two examples.

{\em{Example 1.}} Consider the 3-qubit cluster state of Section~\ref{OTO}, with measurement planes $[X/Y]$ for all three qubits. We consider the correction operators $K(1)=K_2$, $K(2)=K_3$ for qubits 1 and 2, and gauge operator $\overline{K}(1)=K_1K_3$.  $O_{\text{comp}}=\{3\}$ and $I_{\text{gauge}}=\{1\}$ are then permitted by Eqs.~(\ref{CorrCorr}) and (\ref{CorrGauge}), respectively. As was discussed previously for the above choice of correction operations, $I=\{1\}$ and $O=\{3\}$. Thus, in the present example $I_{\text{gauge}}=I$ and $O_{\text{comp}}=O$. 

{\em{Example 2.}} Consider a Greenberger-Horne-Zeilinger state $|GHZ\rangle = (|000\rangle + |111\rangle)/\sqrt{2}$ as resource state, with all three qubits measured in a basis in the $[X,Y]$-plane. We use the stabi\-lizer elements $K(1):=Z_1Z_3=\sigma_s^{(1)}\sigma_s^{(3)}$ and $K(2):=Z_2Z_3=\sigma_s^{(2)}\sigma_s^{(3)}$ as correction operations for qubits 1 and 2, and $\overline{K}(1):=X_1X_2X_3 = \sigma_\phi^{(1)}\sigma_\phi^{(2)}\sigma_\phi^{(3)}$ as gauge operator. Then, the choice $I_{\text{gauge}}=\{1\}$ and $O_{\text{comp}}=\{3\}$ is admitted by Eqs.~(\ref{CorrCorr}) and (\ref{CorrGauge}), respectively. On the other hand, this is an example of a temporarily flat MQC, $T=0$. Therefore, $I=O=\{1,2,3\}$, and $O_{\text{comp}}\neq O$, $I_{\text{gauge}} \neq I$.

\begin{Lemma}\label{sizes}
For any MBQC on a stabilizer state, $|I_{\text{gauge}}| \leq |O_{\text{comp}}|$. 
\end{Lemma}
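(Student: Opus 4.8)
My plan is to count dimensions using the independence established in Lemma~\ref{indepL}, together with the commutation structure forced by the measurement planes. The gauge operators $\overline{K}(i)$, $i \in I_{\text{gauge}}$, and the correction operators $K(a)$, $a \in (O_{\text{comp}})^c$, all lie in the abelian stabilizer group ${\cal S}(|\Psi\rangle)$, so they pairwise commute. The key is to pair them against a well-chosen set of ``test'' Pauli operators and read off the inequality from a rank/nullity argument. Concretely, I would look at the restriction of each operator to the index set $O_{\text{comp}}$: by Eq.~(\ref{CorrCorr}), a correction operator $K(a)$ for $a \in (O_{\text{comp}})^c$ has $K(a)|_b \in \{I, \sigma_\phi, \sigma_{s\phi}\}$ on $b \in (O_{\text{comp}})^c$ and is unconstrained on $O_{\text{comp}}$, whereas by the last line of Eq.~(\ref{CorrGauge}) a gauge operator $\overline{K}(i)$ restricted to $O_{\text{comp}}$ lives only in $\{I,\sigma_s\}$.

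The first concrete step is to introduce, for each qubit $c \in O_{\text{comp}}$, the single-qubit Pauli $\sigma_s^{(c)}$ as a probe, and form the $\mathbb{Z}_2$-bilinear ``commutator'' pairing $\langle P, Q\rangle$ equal to $0$ if $P,Q$ commute and $1$ if they anticommute. Since $\sigma_s^{(c)}$ anticommutes with $\sigma_\phi^{(c)}$ and $\sigma_{s\phi}^{(c)}$ but commutes with $\sigma_s^{(c)}$ and $I^{(c)}$, pairing $\sigma_s^{(c)}$ against $\overline{K}(i)$ detects whether $\overline{K}(i)|_c \in \{\sigma_\phi^{(c)},\sigma_{s\phi}^{(c)}\}$ --- but by the fourth line of Eq.~(\ref{CorrGauge}) this never happens, so every $\overline{K}(i)$ commutes with every probe $\sigma_s^{(c)}$. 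I would then argue the opposite for the output-side probes: I want to exhibit, for each $i \in I_{\text{gauge}}$, a probe Pauli (supported on $O_{\text{comp}}$, built from the $\sigma_\phi$ components there) that distinguishes $\overline{K}(i)$ from all the correction operators, thereby injecting $I_{\text{gauge}}$ into a space of dimension $|O_{\text{comp}}|$.

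The cleanest route is therefore a linear-algebra injection. Map each $\overline{K}(i)$ to its $\mathbb{Z}_2$-vector of $\sigma_s$-exponents on the qubits of $O_{\text{comp}}$, i.e. the vector $\mathbf{v}^{(i)} \in \mathbb{Z}_2^{O_{\text{comp}}}$ with $[\mathbf{v}^{(i)}]_c = 1$ iff $\overline{K}(i)|_c = \sigma_s^{(c)}$. Because these operators commute and because of the structural constraints of Eq.~(\ref{CorrGauge}) (gauge operators are trivial or $\sigma_\phi$ off $O_{\text{comp}}$ and trivial or $\sigma_s$ on $O_{\text{comp}}$), I expect that a nontrivial linear combination $\sum_{i\in J}\mathbf{v}^{(i)} = 0$ would force the product $\prod_{i \in J}\overline{K}(i)$ to have trivial $\sigma_s$-content on $O_{\text{comp}}$, while its $\sigma_\phi$-content on $I_{\text{gauge}}$ is fixed to be nontrivial by the first two lines of Eq.~(\ref{CorrGauge}); I would then show this product can be expressed through, or contradicts independence against, the correction operators, contradicting Lemma~\ref{indepL}. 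This yields linear independence of the $\mathbf{v}^{(i)}$, hence $|I_{\text{gauge}}| \le \dim \mathbb{Z}_2^{O_{\text{comp}}} = |O_{\text{comp}}|$.

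The main obstacle I anticipate is precisely establishing that the assignment $\overline{K}(i)\mapsto \mathbf{v}^{(i)}$ is injective, i.e.\ that the $\sigma_s$-pattern on $O_{\text{comp}}$ alone already separates the gauge operators modulo the correction operators. The danger is that two gauge operators could agree on their $O_{\text{comp}}$-restriction yet differ on $(O_{\text{comp}})^c$; to rule this out I must use commutativity within the stabilizer group to convert ``agreeing on the $\sigma_s$-pattern over $O_{\text{comp}}$'' into ``the product lies in the span of the correction operators,'' and then invoke Lemma~\ref{indepL} to force the combination to be trivial. Making that reduction airtight --- correctly tracking how the $\sigma_\phi$ versus $\sigma_s$ bookkeeping interacts with the abelian commutation relations --- is the delicate part; everything else is routine $\mathbb{Z}_2$ rank counting.
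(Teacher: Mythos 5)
There is a genuine gap at the step you yourself flag as delicate: the linear independence of the vectors $\mathbf{v}^{(i)}$ of $\sigma_s$-exponents on $O_{\text{comp}}$ is false in general. First, Eq.~(\ref{CorrGauge}) constrains $\overline{K}(i)$ on $O_{\text{comp}}$ only at qubits of $I_{\text{gauge}} \cap O_{\text{comp}}$; on $(I_{\text{gauge}})^c \cap O_{\text{comp}}$ a gauge operator is unconstrained and may be $\sigma_\phi$ or $\sigma_{s\phi}$ there, so your premise that ``$\overline{K}(i)$ restricted to $O_{\text{comp}}$ lives only in $\{I,\sigma_s\}$'' already fails. More importantly, the paper's own GHZ example kills the injection: there $\overline{K}(1)=X_1X_2X_3=\sigma_\phi^{(1)}\sigma_\phi^{(2)}\sigma_\phi^{(3)}$, $O_{\text{comp}}=\{3\}$, and the $\sigma_s$-exponent of $\overline{K}(1)$ on $O_{\text{comp}}$ is zero, so $\mathbf{v}^{(1)}=\mathbf{0}$ and your family is linearly dependent even though the lemma holds (with equality). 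Your fallback --- that a product $P=\prod_{i\in J}\overline{K}(i)$ with vanishing $\sigma_s$-pattern on $O_{\text{comp}}$ must lie in the span of the correction operators and hence contradict Lemma~\ref{indepL} --- cannot be made to work either: such a $P$ is a pure product of $\sigma_\phi$'s (it carries $\sigma_\phi$ at each $i\in J$, so $P\neq I$), whereas any nontrivial product $\prod_{a\in L}K(a)$ necessarily retains an uncancellable $\sigma_s$-component at each $a\in L$ by Eq.~(\ref{CorrCorr}). So $P$ is never a nontrivial product of corrections, no contradiction arises, and the argument stalls. The underlying problem is that the gauge operators are distinguished by their $\sigma_\phi$-content on $I_{\text{gauge}}$, not by anything visible on $O_{\text{comp}}$.

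The paper's proof (folded into the proof of Lemma~\ref{NF}) is a plain dimension count that you should adopt: the stabilizer group of a stabilizer state on $\Omega$ has exactly $|\Omega|$ independent generators; by Lemma~\ref{indepL} the $|(O_{\text{comp}})^c|$ correction operators together with the $|I_{\text{gauge}}|$ gauge operators are independent; hence $|(O_{\text{comp}})^c| + |I_{\text{gauge}}| \leq |\Omega|$, i.e.\ $|I_{\text{gauge}}| \leq |O_{\text{comp}}|$. If you want to keep the flavour of an ``injection into a space of dimension $|O_{\text{comp}}|$,'' the correct target is not $\mathbb{Z}_2^{O_{\text{comp}}}$ via $\sigma_s$-patterns but the quotient ${\cal S}(|\Psi\rangle)/\langle K(a)\rangle_{a\in (O_{\text{comp}})^c}$, which has dimension $|O_{\text{comp}}|$ and in which the images of the $\overline{K}(i)$ are independent precisely by Lemma~\ref{indepL} --- but at that point you have reproduced the counting argument, and the probe/pairing machinery is unnecessary.
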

We prove Lemma~\ref{sizes} in Section~\ref{stabGNF}.

\begin{Def}
A pair $I_{\text{gauge}}$, $O_{\text{comp}}$ is called extremal iff $|I_{\text{gauge}}| =|O_{\text{comp}}|$.
\end{Def}
As will become clear in the next section, extremal pairs $I_{\text{gauge}}$, $O_{\text{comp}}$ are easier to handle than general pairs, and are not very restrictive (c.f. Theorem~\ref{Extr}).\medskip

We still need to relate the classical output vector $\textbf{o}$ appearing in Eq.~(\ref{TO7b}) to the set $O_{\text{comp}}$. To this end, we make the following

\begin{Def}[Optimal classical output]\label{oco} A classical output vector $\textbf{o}$, with processing relations $\textbf{o} = Z \textbf{s} + R\textbf{g}$, is optimal iff the following conditions hold  
\begin{enumerate}
\item{{\em{Maximality:}} Upon left-multiplication by an invertible matrix, $Z$ can be brought into a unique normal form 
\begin{equation}
  \label{ZN}
  Z \sim \left({\tt{Z}}|I\right),
\end{equation}  
where the column split is between $(O_{\text{comp}})^c|O_{\text{comp}}$, and}
\item{{\em{Determinism:}} For the matrix ${\tt{Z}}$ in Eq.~(\ref{ZN}),
\begin{equation}
\label{ZD}
[{\tt{Z}}]_{ij}=1 \Longleftrightarrow K(j)|_i \in \{ \sigma_s^{(i)}, \sigma_{s\phi}^{(i)}\}.
\end{equation}}
\end{enumerate}
\end{Def}
In other words Eq.~\eqref{ZD} informs us that the $j$th column of ${\tt{Z}}$ is simply the 
restriction of the  support of $K(j)$ to $O_{\text{comp}}$.

The reason for defining an `optimal classical output' besides a `classical output' is the following: One could, in principle, run an MBQC perfectly deterministically and then choose $\textbf{o}$ such that nothing is outputted at all, or all outputted bits, independent of the measurement angles chosen, are zero guaranteed or perfectly random guaranteed. The above definition of an optimal classical output eliminates such choices. Maximality says that there is one bit of optimal output per qubit of $O_{\text{comp}}$. The determinism condition can be understood from the correction procedure explained in Eq.~(\ref{corr1}). For a qubit $j \in (O_{\text{comp}})^c$, we account for the undesired outcome $s_j=1$ by inserting $K(j)$ into the state overlap $\langle \varphi_{\text{loc}}|\Psi\rangle = \langle \varphi_{\text{loc}}|K(j)|\Psi\rangle$. Now consider a qubit $i \in O_{\text{comp}}$. By Eq.~(\ref{ZN}), $s_i$ contributes to the output bit $o_i$, the $i$-th bit of $\textbf{o}$. If $K(j)|_i \in \{\sigma_s^{(i)},\sigma_{s\phi}^{(i)}\}$, then the insertion of $K(j)$ into the overlap flips $s_i$, i.e., $s_i \longrightarrow s_i \oplus 1$. This needs to be taken into account when reading out $s_i$. The linear combination $s_j \oplus s_i$ remains unaffected by the correction for $s_j$. 

We have now made the definitions needed to state and prove our results on temporal order in MBQC. In Section~\ref{stabGNF} we establish a normal form for the stabilizer generator matrix of the resource state $|\Psi\rangle$. This normal form is the basis for our results on MBQC temporal order, which are stated in Sections~\ref{Intdep} and \ref{GTO}.

\subsection{Normal form of the resource state stabilizer}
\label{stabGNF}  

Let us briefly review which pieces of information specify an MBQC. A priori, there are four: the set of measurement angles, the set $\Sigma$ of measurement planes, the resource state $|\Psi\rangle$ and the classical processing relations Eq.~(\ref{TO7a},\ref{TO7b}). The measurement angles entirely drop out of all our considerations about temporal order. Next, we observe that when specifying the measurement planes and the resource state separately, we really specify too much. Starting from a given pair $|\Psi\rangle ,\Sigma$ of resource state and set of measurement planes, for any local Clifford unitary $U$, the pair $U|\Psi\rangle, U(\Sigma)$ obtained by applying $U$ to both the measurement planes $\Sigma$ and the stabilizer state $|\Psi\rangle$ is again a valid pair, i.e., it consists of a set of a stabilizer state and a set of measurement planes. Furthermore, it amounts to exactly the same computation as the original pair. The pair $|\Psi\rangle ,\Sigma$ is thus redundant. To remove this redundancy, we combine the measurement planes and the stabilizer state $|\Psi\rangle$ into the stabilizer generator matrix ${\cal{G}}(|\Psi\rangle)$ in the $\sigma_\phi/\sigma_s$-stabilizer basis, 
\begin{equation}
\label{Stab}
{\cal{G}}(|\Psi\rangle) = (\Phi||S). 
\end{equation}
Therein, the columns to the left (right) form the $\sigma_\phi$ ($\sigma_s$-) part of the stabilizer generator matrix. ${\cal{G}}(|\Psi\rangle)$ comprises all information from the resource state $|\Psi\rangle$ and the set of measurement planes $\Sigma$ relevant for the discussion of MBQC.  We do not need to know the state and the measurement planes separately. 

Now note that the correction operators $K(a)$, $a \in (O_{\text{comp}})^c$, and the gauge operators $\overline{K}(i)$, $i \in I_{\text{gauge}},$ are all elements of the stabilizer ${\cal{S}}(|\Psi\rangle)$, and, by Lemma~\ref{indepL}, are independent. Thus, they either form or can be completed to a set of generators for ${\cal{S}}(|\Psi\rangle)$. This observation leads us to the following
\begin{Lemma}\label{NF}
For any MBQC on a stabilizer state $|\Psi\rangle$ with extremal $I_{\text{gauge}}$, $O_\text{comp}$, the generator matrix ${\cal{G}}$ of ${\cal{S}}(|\Psi\rangle)$ can be written in the normal form
 \begin{equation}
\label{GNF}
\begin{array}{c} \mbox{ } \\ \mbox{ } \\ \\ {\cal{G}} \cong\end{array} 
 \begin{array}{c}
    \sigma_\phi\hspace{1.8cm}\sigma_s\mbox{ }\\
    \mbox{}\;I_{\text{gauge}}\;\;(I_{\text{gauge}})^c\; (O_{\text{comp}})^c\; O_{\text{comp}}\\ \\
  \left(\begin{array}{c|c||c|c}
    \parbox{0.9cm}{\begin{center}$0$\end{center}} &  \parbox{0.9cm}{\begin{center}${\tt{T}}^T$\end{center}} & \parbox{0.9cm}{\begin{center}$I$\end{center}} & \parbox{0.9cm}{\begin{center}${\tt{Z}}^T$\end{center}}\\ \hline
    \parbox{0.9cm}{\begin{center}$I$\end{center}} & \parbox{0.9cm}{\begin{center}${\tt{H}}^T$\end{center}} & \parbox{0.9cm}{\begin{center}$0$\end{center}} & \parbox{0.9cm}{\begin{center}${\tt{R}}^T$\end{center}}
  \end{array}\right).
\end{array}
\end{equation}
The matrices $\tt{H}$, $\tt{R}$, $\tt{T}$, $\tt{Z}$ are related to the matrices $H$, $R$, $T$, $Z$ governing the classical processing of measurement outcomes in MBQC, $\textbf{q} = T \textbf{s} + H \textbf{g} \mod 2$ and $\textbf{o} = Z \textbf{s} + R \textbf{g} \mod 2$, via
\begin{equation}
  \label{HRTZshape}
   T =  \left(\begin{array}{c|c} 0 & 0 \\ \hline  {\tt{T}} & 0 \end{array}\right), \;
\begin{array}{rccc}
H  =  \left(\begin{array}{c} I \\ \hline {\tt{H}} \end{array}\right),\; Z = \big(\, {\tt{Z}}\, |\, I\,\big)
\end{array},\, R = \tt{R}.
\end{equation} 
\end{Lemma}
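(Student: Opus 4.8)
The plan is to observe that, in the extremal case, the correction operators $K(a)$ and the gauge operators $\overline{K}(i)$ \emph{already} constitute a full set of stabilizer generators, so that Eq.~(\ref{GNF}) is obtained simply by taking these operators as the rows of ${\cal{G}}$ and grouping the columns by the two relevant partitions of $\Omega$. There are $|(O_{\text{comp}})^c|=n-|O_{\text{comp}}|$ correction operators and $|I_{\text{gauge}}|$ gauge operators, all of which are independent elements of ${\cal{S}}(|\Psi\rangle)$ by Lemma~\ref{indepL}. Since ${\cal{S}}(|\Psi\rangle)$ has rank $n$, independence forces $(n-|O_{\text{comp}}|)+|I_{\text{gauge}}|\le n$, that is $|I_{\text{gauge}}|\le|O_{\text{comp}}|$, which incidentally proves Lemma~\ref{sizes}. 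In the extremal case $|I_{\text{gauge}}|=|O_{\text{comp}}|$ equality holds, the count is exactly $n$, and the $K(a),\overline{K}(i)$ therefore form a generating set which I would place as the rows of ${\cal{G}}$, correction operators on top and gauge operators on the bottom. The symbol ``$\cong$'' in Eq.~(\ref{GNF}) then absorbs this choice of generators, together with an ordering of the $\sigma_\phi$-columns as $[I_{\text{gauge}}\,|\,(I_{\text{gauge}})^c]$ and of the $\sigma_s$-columns as $[(O_{\text{comp}})^c\,|\,O_{\text{comp}}]$.

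Next I would read off the four constant blocks after encoding each single-qubit Pauli in the $\sigma_\phi/\sigma_s$ basis of Eq.~(\ref{Stab}) by $I\mapsto(0,0)$, $\sigma_\phi\mapsto(1,0)$, $\sigma_s\mapsto(0,1)$ and $\sigma_{s\phi}\mapsto(1,1)$. The two identity blocks are immediate: the first line of Eq.~(\ref{CorrCorr}) gives $s$-bit $1$ for $K(a)|_a$ and the second line gives $s$-bit $0$ for $K(a)|_b$ with $b\in(O_{\text{comp}})^c\backslash a$, so the $\sigma_s$-part of the correction rows restricted to $(O_{\text{comp}})^c$ is the identity; dually, the first three lines of Eq.~(\ref{CorrGauge}) yield $\phi$-bit $\delta_{ij}$ for $\overline{K}(i)|_j$ with $j\in I_{\text{gauge}}$, so the $\sigma_\phi$-part of the gauge rows restricted to $I_{\text{gauge}}$ is the identity. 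For the two zero blocks, the $\sigma_s$-part of every $\overline{K}(i)$ on $(O_{\text{comp}})^c$ vanishes because Eq.~(\ref{CorrGauge}) admits only $I$ or $\sigma_\phi$ there, while the $\sigma_\phi$-part of every $K(a)$ on $I_{\text{gauge}}$ vanishes because $I_{\text{gauge}}\subseteq I$ means those qubits have empty backward cones, so $[T]_{ia}=0$ for $i\in I_{\text{gauge}}$, which by Eq.~(\ref{KTrel}) is exactly the vanishing of the $\phi$-bit of $K(a)$ at $i$.

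It then remains to identify the free blocks ${\tt{T}},{\tt{Z}},{\tt{H}},{\tt{R}}$ with the processing matrices. By Eq.~(\ref{KTrel}) the $\phi$-bit of $K(a)$ at $b\in(I_{\text{gauge}})^c$ equals $[T]_{ba}$; this populates the block ${\tt{T}}^T$ and, combined with the vanishing of the rows of $T$ on $I\supseteq I_{\text{gauge}}$ and of the columns of $T$ on $O_{\text{comp}}$ (Eq.~(\ref{OOc})), reproduces the block shape of $T$ in Eq.~(\ref{HRTZshape}). By the determinism condition Eq.~(\ref{ZD}) the $\sigma_s$-support of $K(j)$ on $O_{\text{comp}}$ is exactly the $j$-th column of ${\tt{Z}}$, giving the block ${\tt{Z}}^T$ and $Z=({\tt{Z}}\,|\,I)$. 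For the gauge blocks I would use the action Eq.~(\ref{G1}): applying $\overline{K}(i)$ realizes the shift $g_i\mapsto g_i+1$, sending $\textbf{s}\mapsto\textbf{s}+\textbf{v}$ and $\textbf{q}\mapsto\textbf{q}+\textbf{w}$ with $\textbf{v},\textbf{w}$ the $\sigma_s$- and $\sigma_\phi$-parts of $\overline{K}(i)$. Requiring Eq.~(\ref{TO7a}) to be covariant under this shift forces the $i$-th column of $H$ to equal $\textbf{w}+T\textbf{v}$, and requiring the optimal output Eq.~(\ref{TO7b}) to be invariant forces the $i$-th column of $R$ to equal $Z\textbf{v}$.

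The step I expect to be the main obstacle is verifying that these two cross-terms vanish, so that $H$ and $R$ come out exactly as the $\sigma_\phi$- and $\sigma_s$-parts of the gauge operators — i.e. as the blocks $I,{\tt{H}}$ and ${\tt{R}}$ of Eq.~(\ref{HRTZshape}) — rather than as some twisted combination. This is precisely where the two zero blocks established above pay off: the zero block for the gauge rows says $\textbf{v}$ is supported entirely on $O_{\text{comp}}$, while $O_{\text{comp}}\subseteq O$ (Eq.~(\ref{OOc})) makes the columns of $T$ indexed by $O_{\text{comp}}$ vanish, so $T\textbf{v}=0$ and the $i$-th column of $H$ reduces to $\textbf{w}$; the same support statement kills the ${\tt{Z}}\,\textbf{v}|_{(O_{\text{comp}})^c}$ contribution to $Z\textbf{v}=({\tt{Z}}\,|\,I)\textbf{v}$, leaving $\textbf{v}|_{O_{\text{comp}}}$, which is the $i$-th column of ${\tt{R}}$. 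Apart from this bookkeeping — and the care needed to keep the two independent column orderings aligned with the row orderings so that the identity blocks genuinely appear as $I$ — everything is a direct transcription of Eqs.~(\ref{CorrCorr}), (\ref{CorrGauge}), (\ref{KTrel}) and (\ref{ZD}) into the binary generator matrix.
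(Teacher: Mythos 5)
Your proposal is correct and follows essentially the same route as the paper: count the correction and gauge operators, use their independence (Lemma~\ref{indepL}) to conclude they exhaust the $|\Omega|$ stabilizer generators in the extremal case, and then read off the blocks from Eqs.~(\ref{CorrCorr}), (\ref{CorrGauge}), (\ref{KTrel}) and (\ref{ZD}). Your explicit verification that the cross-terms $T\textbf{v}$ and $Z\textbf{v}$ vanish is just a more formal rendering of the paper's direct statement that $\overline{K}(i)$ flips only the bases in $(I_{\text{gauge}})^c$ and the outcomes in $O_{\text{comp}}$, so the two arguments coincide in substance.
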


\begin{proof}
By Definition~\ref{CorrOp}, a correction operator $K(a)$ exists for every $a\in (O_{\text{comp}})^c$. By~Eq.~(\ref{CorrCorr}) these operators have no support in  $I \supseteq I_{\text{gauge}}$ and are $|(O_{\text{comp}})^c|$ in number and must take the following form:
$\left(\begin{array}{c|c||c|c} 0 & A& I & B\end{array} \right)$, where the $\sigma_{\phi}$ part is 
split as $I_{\text{gauge}}$, $I_{\text{gauge}}^c$ while the $\sigma_s$-part is split as 
$O_{\text{comp}}$, $(O_{\text{comp}})^c$.
The gauge  operators are, by definition, of the form: $\left(\begin{array}{c|c||c|c} I & C& 0 &D \end{array} \right)$, where the column splits are as for the correction operators. 
Since there are $|\Omega|$ generators for the stabilizer ${\cal{G}}(|\Psi\rangle) $, of which
$|(O_{\text{comp}})^c|$ are already accounted for, there can be at most 
$|O_{\text{comp}}|$ independent  gauge operators, i.e., $|I_{\text{gauge}}|\leq |O_{\text{comp}}|$ which proves Lemma~\ref{sizes}.
In the present setting, the pair $I_{\text{gauge}}$, $O_{\text{comp}}$ is extremal by assumption, 
thus these two sets of generators exhaust the stabilizer generators and we can write the stabilizer
as 
\begin{equation}
\label{GN2}
\begin{array}{c} \mbox{ } \\ \mbox{ } \\ \\ {\cal{G}}(|\Psi\rangle) = \end{array} 
 \begin{array}{c}
    \mbox{}\;I_{\text{gauge}}\;\;(I_{\text{gauge}})^c\; (O_{\text{comp}})^c\; O_{\text{comp}}\\ \\
  \left(\begin{array}{c|c||c|c}
    \parbox{0.9cm}{\begin{center}$0$\end{center}} &  \parbox{0.9cm}{\begin{center}$A$\end{center}} & \parbox{0.9cm}{\begin{center}$I$\end{center}} & \parbox{0.9cm}{\begin{center}$B$\end{center}} \\ \hline
     \parbox{0.9cm}{\begin{center}$I$\end{center}} & \parbox{0.9cm}{\begin{center}$C$\end{center}} & \parbox{0.9cm}{\begin{center}$0$\end{center}} & \parbox{0.9cm}{\begin{center}$D$\end{center}}
  \end{array}\right),
\end{array}
\end{equation}
for suitable matrices $A$, $B$, $C$ and $D$. 
We now need to identify these matrices. By definition, $I_{\text{gauge}} \subseteq I$.  Measurement outcomes on qubits $a \in O_{\text{comp}}$ are not corrected, hence $fc(a)=\emptyset$ for all $a \in O_{\text{comp}}$, and $O_{\text{comp}} \subseteq O$ follows from the definition of the output set $O$. Then, the influence matrix $T$ takes  the form
\begin{equation}
\label{Tshape}
T =  \left(\begin{array}{c|c} 0 & 0 \\ \hline  {\tt{T}} & 0 \end{array}\right),
\end{equation}
where the column split is $(O_{\text{comp}})^c|O_{\text{comp}}$ and the row split is $I_{\text{gauge}}|(I_{\text{gauge}})^c$. 
Now consider the correction operator  $K(a)$ for $a \in (O_{\text{comp}})^c$ in the upper part of ${\cal{G}}(|\Psi\rangle)$ in Eq.~(\ref{GN2}). We already know from Eq.~\eqref{KTrel}, that the $\sigma_{\phi}$ part of $K(a)$ is the forward cone of $a$. Therefore we must have  $A={\tt{T}}^T$. Further comparing, with Eq.~\eqref{ZD}
which states that the restriction of the $\sigma_{s}$ part of the correction operator $K(a)$ to 
$O_{\text{comp}}$ is the  $a$th column of $\tt{Z}$. But this is precisely the $a$th row of $B$, thus 
we infer that  $B={\tt{Z}}^T$.

Next, consider row $i$ of the lower part of ${\cal{G}}(|\Psi\rangle)$ in Eq.~(\ref{GN2}). Row $i$ is $(0,..,0,1,0,..,0|\textbf{c}^T||\textbf{0}|\textbf{d}^T)$. The corresponding stabilizer operator $\overline{K}(i)$, when inserted into the overlap $\langle \varphi_{\text{loc}}|\Psi\rangle$ as in Eq.~(\ref{corr1}), flips the measurement basis at qubit $i$ and of qubits $l \in (I_{\text{gauge}})^c$ with $[\textbf{c}]_l=1$. It further flips the measurement outcomes at qubits $m \in O_{\text{comp}}$ with $[\textbf{d}]_m=1$. Therefore,
$$
\begin{array}{lcl}
  H =  \left(\begin{array}{c} I \\ \hline {\tt{H}} \end{array}\right),
  & \mbox{with} & {\tt{H}} = C^T,\; \mbox{and}\\
  R = {\tt{R}}, & \mbox{with} & {\tt{R}} = D^T.
\end{array}
$$
We thus arrive at the normal form Eq.~(\ref{GNF}). 
\end{proof}

\section{Interdependence of resource state and temporal order}
\label{Intdep}

Let us return to our discussion from the beginning of Section~\ref{stabGNF}, on which pieces of information are needed to describe an MBQC. At this stage, apart from the set of measurement angles which do not enter our discussion, we remain with two pieces of data specifying a given MBQC, namely ${\cal{G}}(|\Psi\rangle)$ and the processing relations Eq.~(\ref{TO7a}), (\ref{TO7b}). But it doesn't stop there. As the results of \cite{Gflow}, \cite{EffFlow}, \cite{Mhalla} show, the resource state $|\Psi\rangle$, the measurement planes $\Sigma$ and the influence matrix $T$---being part of the classical processing relations Eq.~(\ref{TO7a}), (\ref{TO7b})---are not independent. Specifically, given the sets $I$ of first-measurable and $O$ of last-measurable qubits in addition to $|\Psi\rangle$ and $\Sigma$, the temporal order (generated by $T$) can be worked out completely.

Here we prove a statement about the interdependence of ${\cal{G}}(|\Psi\rangle)$ and the MBQC classical processing relations which goes the opposite direction. Namely we show that the classical processing relations Eq.~(\ref{TO7a}), (\ref{TO7b}) uniquely specify the stabilizer generator matrix ${\cal{G}}(|\Psi\rangle)$, i.e., the pair $|\Psi\rangle$, $\Sigma$ up to equivalence; See Theorem~\ref{OneToOneTwo} below. Thus, only two pieces of data are needed to specify an MBQC that satisfies the determinism constraints, namely the measurement angles and the classical processing relations for the measurement outcomes.

A further question is whether the temporal relations compatible with a resource state $|\Psi\rangle$ and set of measurement planes $\Sigma$ fit into a common framework. In this regard, we show that the classical processing relations (containing the temporal order) for MBQC with a fixed resource state and set of measurement planes, for extremal pairs $I_{\text{gauge}}, O_{\text{comp}}$, are in one-to-one correspondence with the bases of  ${\cal{G}}(|\Psi\rangle)$, c.f. Theorem~\ref{OneToOneThree}. 

\subsection{Results}

We now present four theorems on the mutual dependence of the resource state and the classical processing relations. 

\begin{Theorem}
  \label{OneToOneOne}
  Consider an MBQC on a stabilizer state $|\Psi\rangle$, with fixed measurement planes and an extremal pair of gauge input set $I_{\text{gauge}}$ and computational output set $O_{\text{comp}}$. Then, the relations $\textbf{q} = T \textbf{s} + H \textbf{g} \mod 2$, and $\textbf{o} = Z \textbf{s} + R \textbf{g} \mod 2$ for an optimal output $\textbf{o}$ are unique.
\end{Theorem}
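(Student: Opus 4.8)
The plan is to reduce the assertion to the uniqueness of the normal form Eq.~(\ref{GNF}) and then invoke the elementary uniqueness of a \emph{reduced} generating matrix once its pivot columns are fixed. By the identities Eq.~(\ref{HRTZshape}) of Lemma~\ref{NF}, the matrices $T,H,Z,R$ are assembled entirely out of the four blocks ${\tt{T}},{\tt{H}},{\tt{Z}},{\tt{R}}$ of ${\cal{G}}$, so it suffices to show that, with $|\Psi\rangle$, the measurement planes (equivalently the fixed $\sigma_\phi/\sigma_s$ coordinates), and the extremal pair $I_{\text{gauge}},O_{\text{comp}}$ all held fixed, the normal form of ${\cal{G}}(|\Psi\rangle)$ in Eq.~(\ref{GNF}) is uniquely determined. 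The output blocks ${\tt{Z}},{\tt{R}}$ are then pinned down through the optimality conditions of Definition~\ref{oco}: the determinism relation Eq.~(\ref{ZD}) identifies ${\tt{Z}}$ with the $\sigma_s$-restriction of the correction operators to $O_{\text{comp}}$, and ${\tt{R}}$ with the analogous restriction of the gauge operators, so uniqueness of the normal form immediately yields uniqueness of $Z=({\tt{Z}}|I)$ and $R={\tt{R}}$ for an optimal output.

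First I would make the row-space picture explicit. Regard the stabilizer ${\cal{S}}(|\Psi\rangle)$, in the $\sigma_\phi/\sigma_s$ check representation, as an $n$-dimensional binary subspace $V\subseteq\mathbb{F}_2^{2n}$, with columns partitioned as in Eq.~(\ref{GNF}) by the fixed sets $I_{\text{gauge}},(I_{\text{gauge}})^c,(O_{\text{comp}})^c,O_{\text{comp}}$. The distinguished ``pivot'' columns are the $\sigma_\phi$-columns indexed by $I_{\text{gauge}}$ (carrying the identity block in the gauge-operator rows) together with the $\sigma_s$-columns indexed by $(O_{\text{comp}})^c$ (carrying the identity block in the correction-operator rows); call this set $P$. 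Here is exactly where extremality is used: $|P| = |I_{\text{gauge}}| + |(O_{\text{comp}})^c| = |O_{\text{comp}}| + (n-|O_{\text{comp}}|) = n = \dim V$.

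Next I would verify that the coordinate projection $\pi_P\colon V\to\mathbb{F}_2^{P}$ is an isomorphism. Surjectivity is precisely the content of Lemma~\ref{NF}: exhibiting the normal form produces, for each index in $P$, a stabilizer element whose $P$-restriction is the corresponding standard basis vector, so these images span $\mathbb{F}_2^{P}$. Independence of the correction and gauge operators (Lemma~\ref{indepL}), together with $\dim V=|P|=n$, then forces $\pi_P$ to be bijective. The normal form Eq.~(\ref{GNF}) is nothing but the reduced representative relative to $P$: its restriction to the pivot columns is $I$, and the two zero blocks simply record that each pivot column contains a single $1$, lying in its own pivot row. Uniqueness now follows from the elementary fact that any matrix whose rows span $V$ and whose restriction to $P$ equals $I$ must have, as its row labelled by $p\in P$, the unique vector $\pi_P^{-1}(e_p)\in V$; injectivity of $\pi_P$ leaves no freedom. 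Hence the non-pivot entries ${\tt{T}},{\tt{H}},{\tt{Z}},{\tt{R}}$ are determined, and with them $T,H,Z,R$ via Eq.~(\ref{HRTZshape}).

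I expect the main obstacle to be conceptual rather than computational. One must be careful that Definitions~\ref{CorrOp} and~\ref{GaugeOp} do \emph{not} by themselves fix the operators $K(a),\overline{K}(i)$: for instance Definition~\ref{CorrOp} constrains only the $\sigma_s$-part of $K(a)$ on $(O_{\text{comp}})^c$, leaving the $\sigma_\phi$-part on $I_{\text{gauge}}$ and the $\sigma_s$-part on $O_{\text{comp}}$ free. The residual freedom is exactly what is removed by the structural split between measurement-outcome dependence (the $\textbf{s}$-terms $T,Z$) and gauge dependence (the $\textbf{g}$-terms $H,R$), which forces the reduced form: the $I_{\text{gauge}}$-rows and $O_{\text{comp}}$-columns of $T$ vanish, and symmetrically for the gauge block. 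The step deserving the most care is confirming that this physically motivated split coincides with the reduced-form condition relative to $P$, and that extremality is precisely what guarantees $|P|=\dim V$, so that the reduced form carries no leftover degrees of freedom and the processing relations are genuinely unique.
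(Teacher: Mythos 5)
Your proposal is correct and follows essentially the same route as the paper, which disposes of Theorem~\ref{OneToOneOne} in one line as "an immediate consequence of Lemma~\ref{NF}". Your contribution is to make the implicit uniqueness step explicit: extremality gives $|P|=\dim V$ for the pivot set $P$, so the projection $\pi_P$ is an isomorphism and every row of the normal form is the unique preimage of a standard basis vector, which is exactly the justification the paper omits.
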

That is, once the resource state $|\Psi\rangle$, the measurement planes and $I_{\text{gauge}}$, $O_{\text{comp}}$ are fixed, there is no freedom left to choose the classical processing relations. They are uniquely determined by the former. In particular, for fixed stabilizer state $|\Psi\rangle$ and measurement planes, $T=T(I_{\text{gauge}},O_{\text{comp}})$, $H=H(I_{\text{gauge}},O_{\text{comp}})$ etc. 

A corollary of Theorem~\ref{OneToOneOne} is that given the measurement planes and an extremal pair $I_{\text{gauge}}$, $O_{\text{comp}}$, the resource state $|\Psi\rangle$ uniquely determines the influence matrix $T$. One may ask how restrictive a condition the extremality of the pair $I_{\text{gauge}}$, $O_{\text{comp}}$ is. In this regard, note

\begin{Theorem}\label{Extr} Consider an MBQC on a fixed resource stabilizer state for fixed measurement planes, with an influence matrix $T$ and input and output sets $I(T)$, $O(T)$, such that no qubit $a \in I^c$ can be individually gauged with respect to $I(T)$, $O(T)$. Then, there exists an extremal pair $I_{\text{gauge}} \subseteq I$, $O_{\text{comp}} \subseteq O$ such that $T = T(I_{\text{gauge}},O_{\text{comp}})$.
\end{Theorem}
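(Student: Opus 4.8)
The plan is to run the construction behind Lemma~\ref{NF} in reverse: starting from the given influence matrix $T$, I build correction and gauge operators inside $\mathcal{S}(|\Psi\rangle)$ by a basis-completion argument and then read off an extremal pair. Throughout I work in the binary $(\sigma_\phi|\sigma_s)$-representation of Eq.~(\ref{Stab}), writing each stabilizer generator as a row $(\mathbf{w}|\mathbf{v})$ with $\mathbf{w}$ its $\sigma_\phi$-support and $\mathbf{v}$ its $\sigma_s$-support.

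First I would read $I=I(T)$ (the zero rows of $T$) and $O=O(T)$ (the zero columns) off the matrix, c.f. Definition~\ref{IOdef}. Since $T$ arises from an MBQC on the fixed resource, every non-output qubit $a\in O^c$ carries a correction operator $K(a)\in\mathcal{S}(|\Psi\rangle)$ whose $\sigma_\phi$-support equals the forward cone $fc(a)$ stored in column $a$ of $T$, c.f. Eq.~(\ref{KTrel}). By Eq.~(\ref{CorrCorr}) these $|O^c|$ operators have $\sigma_s$-support $e_a$ on $O^c$, so the family $\mathcal{K}_0:=\langle K(a):a\in O^c\rangle$ is independent and its $\sigma_s$-restriction to $O^c$ is an identity block. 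Because each input qubit has empty backward cone, no forward cone meets $I$, so every $K(a)$ is automatically $\sigma_\phi$-trivial on $I$.

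Next I would pass to $\mathcal{S}(|\Psi\rangle)/\mathcal{K}_0$, of dimension $|O|$, representing each class by the unique generator whose $\sigma_s$-support is confined to $O$ (the identity block of $\mathcal{K}_0$ cancels all $\sigma_s$-weight on $O^c$). I split these representatives into those with zero $\sigma_\phi$-part, spanning the pure-$\sigma_s$-on-$O$ subspace $\mathcal{Z}$ of dimension $r$, and those with nonzero $\sigma_\phi$-part, numbering $|O|-r$. Row-reducing the $\sigma_s$-on-$O$ supports of a basis of $\mathcal{Z}$ selects $r$ pivot qubits $O_{\text{corr}}\subseteq O$, and the reduced generators are valid correction operators $\sigma_s^{(a)}(\cdots)$ for $a\in O_{\text{corr}}$ with empty forward cone, hence leaving $T$ unchanged. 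Setting $O_{\text{comp}}:=O\setminus O_{\text{corr}}$ gives $|O_{\text{comp}}|=|O|-r$. The remaining $|O|-r$ representatives are the candidate gauge operators, to be pivoted on distinct input qubits to yield the $\overline{K}(i)$ of Eq.~(\ref{CorrGauge}); if this succeeds then $|I_{\text{gauge}}|=|O|-r=|O_{\text{comp}}|$, i.e. the pair is extremal.

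The main obstacle, and the only place the hypothesis enters, is this final pivoting. A cleaned candidate $g'$ (with $\sigma_s$-support in $O$ and nonzero $\sigma_\phi$-part) becomes a valid gauge operator only if part of its $\sigma_\phi$-support sits on an input qubit, and to pivot the whole family on distinct inputs I need their $\sigma_\phi|_I$ parts independent. Were either to fail, some nonzero combination $g''$ would have $\sigma_\phi$-support entirely inside $I^c$ while still matching the support pattern of Eq.~(\ref{CorrGauge}); since its representative is unique modulo $\mathcal{K}_0$ and cannot be absorbed into a correction operator without handing an output qubit a forward cone (contradicting $O=O(T)$), $g''$ would witness a qubit $a\in I^c$ that can be individually gauged --- exactly what the hypothesis forbids. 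Hence every candidate has an input pivot and the $\sigma_\phi|_I$ parts are independent; elimination among the candidates produces gauge operators obeying all of Eq.~(\ref{CorrGauge}) with $I_{\text{gauge}}\subseteq I$ (being $\sigma_s$-trivial on $O^c$ and $\sigma_\phi$-trivial on the other pivots, the triviality clauses hold automatically). By Lemma~\ref{indepL} the correction and gauge operators are independent, and by the count they exhaust the $|\Omega|$ stabilizer generators, so $(I_{\text{gauge}},O_{\text{comp}})$ is extremal; since the correction operators reproduce the forward cones of $T$, Theorem~\ref{OneToOneOne} yields $T=T(I_{\text{gauge}},O_{\text{comp}})$.
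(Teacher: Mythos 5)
Your argument is correct and is essentially the paper's own proof in different clothing: the quotient by $\mathcal{K}_0$ and the choice of unique representatives with $\sigma_s$-support confined to $O$ is exactly the paper's block row-reduction of ${\cal{G}}(|\Psi\rangle)$ into the forms of Eqs.~(\ref{GN3b})--(\ref{GN4}), your subspace $\mathcal{Z}$ and pivot set $O_{\text{corr}}$ are the rows $(0\,|\,0\,||\,0\,|\,D')$ and the set $\Delta O$, and your pivoting obstruction is precisely the paper's middle block $(0\,|\,C''\,||\,0\,|\,D'')$, eliminated by the same dichotomy (a correction operator for an output qubit would give it a nonempty forward cone; a gauge operator would individually gauge a qubit of $I^c$). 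The counting that yields $|I_{\text{gauge}}|=|O_{\text{comp}}|$ and the appeal to Theorem~\ref{OneToOneOne} likewise coincide with the paper's conclusion.
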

The input set $I(T)$ and the output set $O(T)$ which appear in Theorem~\ref{Extr} are uniquely specified by $T$ through Definition~\ref{IOdef}. $T(I_{\text{gauge}},O_{\text{comp}})$ is uniquely specified by the pair $I_{\text{gauge}}$, $O_{\text{comp}}$ through Theorem~\ref{OneToOneOne}.

Theorem~\ref{Extr} states that all temporal relations for an MBQC, subject to the extra condition on the qubits which can be individually gauged, arise from extremal pairs $I_{\text{gauge}}$, $O_{\text{comp}}$. By establishing Theorem~\ref{Extr} we trade the condition of the pairs $I_{\text{gauge}}$, $O_{\text{comp}}$ being extremal for the condition that no qubit in $I^c$ can be individually gauged. The latter is a more meaningful condition. Suppose a qubit $a$ in $I^c$ could be individually gauged wrt $I$, $O$. Then $\overline{K}(a)$ exists. For any $b$ with $K(b)|_a=\sigma_\phi^{(a)}$, $\tilde{K}(b) :=K(b)\overline{K}(a)$ is a valid correction operator for qubit $b$, and $\tilde{K}(b)|_a = I^{(a)}$. Hence, $a$ could be removed from all forward cones and thereby be made a qubit in $I$. By imposing the extra condition in Theorem~\ref{Extr}, we exclude temporal relations where certain qubits could be in the input set $I$ but aren't.\medskip 

Theorem~\ref{OneToOneOne} is mute on the question of which extremal pairs $I_{\text{gauge}}$, $O_{\text{comp}}$ are admissible. Theorem~\ref{OneToOneThree} below describes how much freedom remains for the choice of the classical processing relations, given ${\cal{G}}(|\Psi\rangle)$.

\begin{Theorem}\label{OneToOneThree} For MBQC with a fixed resource stabilizer state $|\Psi\rangle$ and fixed measurement planes, the classical processing relations for extremal $I_{\text{gauge}}$, $O_{\text{comp}}$, as specified by the matrices $H$, $R$, $T$, $Z$ and the sets $I_{\text{gauge}}, O_{\text{comp}}$, are in one-to-one correspondence with the bases of the matroid ${\cal{G}}(|\Psi\rangle)$.  
\end{Theorem}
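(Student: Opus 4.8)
The plan is to exhibit an explicit bijection between the extremal processing data and the bases of the vector matroid of ${\cal G}(|\Psi\rangle)$. First recall that ${\cal G}(|\Psi\rangle)$ is an $n\times 2n$ matrix over $\mathbb{Z}_2$ (with $n=|\Omega|$) whose $2n$ columns, labelled $\sigma_\phi^{(a)}$ and $\sigma_s^{(a)}$ for $a\in\Omega$, constitute the ground set of the matroid; since the $n$ stabilizer generators are independent the column space has rank $n$, so a basis is any $n$-element column subset forming an invertible $n\times n$ submatrix. To an extremal pair $I_{\text{gauge}},O_{\text{comp}}$ (equipped with its correction and gauge operators) I would associate the column set
\[
  B:=\{\sigma_\phi^{(i)}\mid i\in I_{\text{gauge}}\}\cup\{\sigma_s^{(a)}\mid a\in(O_{\text{comp}})^c\}.
\]

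For the forward direction I would invoke Lemma~\ref{NF}: after the invertible left-multiplication bringing ${\cal G}(|\Psi\rangle)$ into the normal form Eq.~(\ref{GNF}), the columns of $B$ are exactly the blocks $\left(\begin{smallmatrix}0\\I\end{smallmatrix}\right)$ and $\left(\begin{smallmatrix}I\\0\end{smallmatrix}\right)$, so the restriction of ${\cal G}(|\Psi\rangle)$ to $B$ is the anti-diagonal matrix $\left(\begin{smallmatrix}0&I\\I&0\end{smallmatrix}\right)$. Extremality gives $|B|=|I_{\text{gauge}}|+|(O_{\text{comp}})^c|=|O_{\text{comp}}|+(n-|O_{\text{comp}}|)=n$, and the anti-diagonal matrix is invertible, so $B$ is a basis. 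Reading the remaining columns off Eq.~(\ref{GNF}) recovers ${\tt T},{\tt H},{\tt Z},{\tt R}$ and hence, via Eq.~(\ref{HRTZshape}), the processing relations; thus the assignment (relation)$\,\mapsto B$ is well defined.

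For the reverse direction I would begin with an \emph{arbitrary} basis $B$, set $P:=\{a\mid\sigma_\phi^{(a)}\in B\}$ and $Q:=\{a\mid\sigma_s^{(a)}\in B\}$, and declare $I_{\text{gauge}}:=P$, $O_{\text{comp}}:=Q^c$. Because $|B|=n$ forces $|P|+|Q|=n$, the identity $|I_{\text{gauge}}|=|P|=n-|Q|=|O_{\text{comp}}|$ holds automatically, so extremality is free. As $B$ is a basis the submatrix ${\cal G}(|\Psi\rangle)|_B$ is invertible, and left-multiplying by $M:=\left(\begin{smallmatrix}0&I\\I&0\end{smallmatrix}\right)({\cal G}(|\Psi\rangle)|_B)^{-1}$ forces the $B$-columns into anti-diagonal form, thereby casting the entire matrix into the shape Eq.~(\ref{GNF}); reading off ${\tt T},{\tt H},{\tt Z},{\tt R}$ and applying Eq.~(\ref{HRTZshape}) produces candidate processing relations, whose rows are the candidate correction operators $K(a)$, $a\in(O_{\text{comp}})^c$, and gauge operators $\overline{K}(i)$, $i\in I_{\text{gauge}}$.

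The hard part is verifying that these rows genuinely satisfy the support conditions Eqs.~(\ref{CorrCorr}) and (\ref{CorrGauge}), so that $(I_{\text{gauge}},O_{\text{comp}})$ is an admissible extremal pair in the sense of Definitions~\ref{CorrOp} and \ref{GaugeOp} (and not merely a formal one). The key observation I would establish is that these support conditions are \emph{equivalent} to the anti-diagonal block structure of ${\cal G}(|\Psi\rangle)|_B$: the requirement $K(a)|_a\in\{\sigma_s^{(a)},\sigma_{s\phi}^{(a)}\}$ together with $K(a)|_b\in\{I^{(b)},\sigma_\phi^{(b)}\}$ for $b\in(O_{\text{comp}})^c\backslash a$ says exactly that the $\sigma_s$-part of the upper block restricted to $(O_{\text{comp}})^c$ is the identity, while $\overline{K}(i)|_i=\sigma_\phi^{(i)}$ with the vanishing conditions of Eq.~(\ref{CorrGauge}) says the $\sigma_\phi$-part of the lower block restricted to $I_{\text{gauge}}$ is the identity and carries no $\sigma_s$-support on $(O_{\text{comp}})^c$; both hold precisely because the normalization forced ${\cal G}(|\Psi\rangle)|_B=\left(\begin{smallmatrix}0&I\\I&0\end{smallmatrix}\right)$, which is achievable if and only if $B$ is independent. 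In short, ``$B$ is a basis'' and ``$(I_{\text{gauge}},O_{\text{comp}})$ admits the prescribed operators'' are literally the same statement. To conclude that the two maps are mutually inverse I would note that the basis read off from the normal form built from $B$ is $B$ itself, and that by Theorem~\ref{OneToOneOne} the pair $(I_{\text{gauge}},O_{\text{comp}})$ together with the fixed ${\cal G}(|\Psi\rangle)$ determines the processing relations uniquely; hence starting from a relation, forming $B$, and running the reverse construction returns the same relation, establishing the claimed one-to-one correspondence.
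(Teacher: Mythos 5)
Your proposal is correct and follows essentially the same route as the paper's proof: both identify an extremal pair with the column set $\Phi|_{I_{\text{gauge}}} \cup S|_{(O_{\text{comp}})^c}$, use the normal form of Lemma~\ref{NF} to see that this set is a basis (its restriction being the invertible anti-diagonal block), and invoke Theorem~\ref{OneToOneOne} for uniqueness of the matrices $\tt{H},\tt{R},\tt{T},\tt{Z}$ given the pair. You are merely more explicit than the paper --- in particular about verifying the support conditions Eqs.~(\ref{CorrCorr}), (\ref{CorrGauge}) in the reverse direction, a step the paper dispatches with ``by definition of extremal.''
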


After we have justified our restriction to extremal pairs of gauge input and computational output sets in Theorem~\ref{Extr} and have characterized the set of temporal relations compatible with a given resource state and set of measurement planes in Theorem~\ref{OneToOneThree}, we now return to Theorem~\ref{OneToOneOne}, and show that a converse also holds. 
\begin{Theorem}
\label{OneToOneTwo}
Consider an MBQC on a stabilizer state $|\Psi\rangle$, with classical processing relations $\textbf{q} = T \textbf{s} + H \textbf{g} \mod 2$, $\textbf{o} = Z\textbf{s} + R\textbf{g} \mod 2$ for an optimal classical output $\textbf{o}$, such that $\text{rk}\, H = \text{rk}\,Z$. Then the classical processing relations uniquely specify the stabilizer generator matrix ${\cal{G}}(|\Psi\rangle)$ in the $\sigma_\phi/\sigma_s$-basis, i.e. the resource stabilizer state $|\Psi\rangle$ and set $\Sigma$ of measurement planes up to equivalence.
\end{Theorem}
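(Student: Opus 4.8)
The plan is to read Lemma~\ref{NF} backwards. That lemma shows that for an extremal pair $I_{\text{gauge}},O_{\text{comp}}$ the generator matrix ${\cal{G}}(|\Psi\rangle)$ equals, up to a change of generators, the single normal form Eq.~(\ref{GNF}), whose four blocks ${\tt{T}},{\tt{H}},{\tt{Z}},{\tt{R}}$ are precisely the small matrices extracted from the processing relations via Eq.~(\ref{HRTZshape}). Hence, once I show that the hypothesis forces extremality and that the block data (in particular the sets $I_{\text{gauge}},O_{\text{comp}}$ and the induced row/column splits) are recoverable intrinsically from $T,H,Z,R$, the reconstruction of ${\cal{G}}(|\Psi\rangle)$ is immediate and uniqueness of the state-plus-planes up to equivalence follows.

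First I would recover the data $O_{\text{comp}},{\tt{Z}}$ and $I_{\text{gauge}},{\tt{H}}$ from the relations. By the maximality clause of Definition~\ref{oco}, optimality of $\textbf{o}$ brings $Z$ into the unique form $Z\sim({\tt{Z}}\,|\,I)$, whose identity block pins down $O_{\text{comp}}$ (the pivot columns) and ${\tt{Z}}$, giving $\text{rk}\,Z=|O_{\text{comp}}|$. Choosing $\textbf{g}$ so that $H$ has full column rank (discarding redundant gauge directions), the structure of the gauge operators in Definition~\ref{GaugeOp}---each $\overline{K}(i)$ flips $q_i$ and no other $q_j$ with $j\in I_{\text{gauge}}$---forces $H=\left(\begin{smallmatrix} I\\ {\tt{H}}\end{smallmatrix}\right)$ with row split $I_{\text{gauge}}\,|\,(I_{\text{gauge}})^c$, so the top identity block identifies $I_{\text{gauge}}$ and $\text{rk}\,H=|I_{\text{gauge}}|$. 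The hypothesis $\text{rk}\,H=\text{rk}\,Z$ then yields $|I_{\text{gauge}}|=|O_{\text{comp}}|$, i.e.\ the pair is extremal. The remaining blocks are read off directly: ${\tt{T}}$ is the sole nonzero block of $T$ in Eq.~(\ref{HRTZshape}), and $R={\tt{R}}$.

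With extremality in hand I would invoke Lemma~\ref{NF}. Any stabilizer state producing these relations has, for this extremal pair, a generator matrix equivalent to Eq.~(\ref{GNF}); since every block there equals one of ${\tt{T}},{\tt{H}},{\tt{Z}},{\tt{R}}$ already fixed above, the normal form is completely determined by the processing relations. Because row operations on ${\cal{G}}$ only replace generators by products, they leave the stabilizer group---hence $|\Psi\rangle$---and the $\sigma_\phi/\sigma_s$ column partition---hence $\Sigma$---invariant. Thus any two states consistent with the given relations have equal normal forms and coincide up to equivalence, which is the converse of Theorem~\ref{OneToOneOne}.

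The crux, and the reason the rank hypothesis cannot be dropped, is the counting behind extremality. The correction operators number $|(O_{\text{comp}})^c|$ and the gauge operators number $|I_{\text{gauge}}|$, and by Lemma~\ref{indepL} they are independent; their total is $|\Omega|$ exactly when $|I_{\text{gauge}}|=|O_{\text{comp}}|$. Only then do they exhaust a generating set of ${\cal{S}}(|\Psi\rangle)$, so that every generator is visible in $T,H,Z,R$. In the non-extremal case permitted by Lemma~\ref{sizes}, there remain $|O_{\text{comp}}|-|I_{\text{gauge}}|$ further independent generators that the processing relations never see, and uniqueness fails. The main obstacle in the write-up is therefore not the reconstruction but verifying that $O_{\text{comp}}$, $I_{\text{gauge}}$ and the two partitions are genuinely intrinsic to $T,H,Z,R$---that the uniqueness clause of Definition~\ref{oco} together with the full-rank normalization of $H$ leaves no residual ambiguity in the splits used to assemble Eq.~(\ref{GNF}).
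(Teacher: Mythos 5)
Your reconstruction matches the paper's steps i) and ii): normalize $Z$ and $H$, use $\text{rk}\,H=\text{rk}\,Z$ to conclude extremality, and assemble the normal form Eq.~(\ref{GNF}) via Lemma~\ref{NF}. But there is a genuine gap exactly where you park it in your final paragraph. You assert that the identity blocks of the normalized $Z$ and $H$ ``pin down'' $O_{\text{comp}}$ and $I_{\text{gauge}}$, and then describe the remaining work as verifying that these sets are ``genuinely intrinsic'' to $T,H,Z,R$. They are not. Given only the processing relations, the choice of which columns of $Z$ form the identity block (i.e.\ which subset of $O$ is declared to be $O_{\text{comp}}$), and which rows of $H$ carry the identity block (which subset of $I$ is $I_{\text{gauge}}$), is in general not unique: any maximal independent set of the relevant columns will do. The paper's GHZ example makes this concrete: there $T=0$, so $I=O=\{1,2,3\}$ and $Z=(1\;1\;1)$, and $O_{\text{comp}}$ may be taken to be $\{1\}$, $\{2\}$ or $\{3\}$, each choice producing a differently partitioned normal form. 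The statement you would need---uniqueness of the splits---is therefore false, and a proof resting on it cannot close.

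The paper's resolution is different and occupies the bulk of its argument (steps iii and iv, the commutative diagram Eq.~(\ref{eq:cd})): although the pair $(I_{\text{gauge}},O_{\text{comp}})$ is not unique, any two admissible choices yield row-equivalent stabilizer generator matrices, hence the same $|\Psi\rangle$ and $\Sigma$. A change of $I_{\text{gauge}}$ corresponds to right-multiplication of $H$ by an invertible $\Lambda$, which translates into invertible row operations on the lower block of Eq.~(\ref{GNF}). A change of $O_{\text{comp}}$ is handled by invoking the matroid structure of Theorem~\ref{OneToOneThree} to reduce to a single basis exchange $O_{\text{comp}}\rightarrow O_{\text{comp}}\setminus\{i\}\cup\{j\}$, and then exhibiting explicit row transformations $M(\Lambda)$ on ${\cal{G}}(|\Psi\rangle)$ that reproduce the transformation $\Lambda$ of $Z$ and $R$. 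Without this invariance argument your proof shows only that each choice of splits determines \emph{a} stabilizer matrix, not that the result is independent of the choice---and that independence is the actual content of the theorem.
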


\noindent
{\em{Remark:}} For Theorem~\ref{OneToOneTwo} it does not matter whether or not the classical processing relations codify a temporal relation which is a partial order.\medskip

\noindent
{\em{Remark:}} There is a constructive procedure for obtaining ${\cal{G}}(|\Psi\rangle)$ (in the $\sigma_\phi/\sigma_s$-basis) from the linear processing relations. If the processing relation did not stem from an actual computation but rather was ``made up'', the resulting ${\cal{G}}(|\Psi\rangle)$ may not be a valid stabilizer generator matrix. I.e., the rows of ${\cal{G}}(|\Psi\rangle)$ may correspond to Pauli operators which do not pairwise commute.

\subsection{Proofs of Theorems~\ref{OneToOneOne}-\ref{OneToOneTwo}} 

Theorem~\ref{OneToOneOne} is an immediate consequence of Lemma~\ref{NF}. \medskip

\begin{proof}[Proof of Theorem~\ref{Extr}]
Assume that $I$ is valid input set and $O$ is a valid output set for a given MBQC. Then, the stabilizer generator matrix of the resource state can be written in the $\sigma_\phi/\sigma_s$-basis as
\begin{equation}
\label{GN3}
\begin{array}{c} \mbox{ } \\ \mbox{ } \\ {\cal{G}}(|\Psi\rangle) = \end{array} 
 \begin{array}{c}
    \mbox{}\;I \hspace{1cm} I^c\hspace{1cm} O^c\hspace{0.85cm} O\\ 
  \left(\begin{array}{c|c||c|c}
    \parbox{0.9cm}{\begin{center}$0$\end{center}} &  \parbox{0.9cm}{\begin{center}${\tilde{\tt{T}}}^T$\end{center}} & \parbox{0.9cm}{\begin{center}$I$\end{center}} & \parbox{0.9cm}{\begin{center}$A$\end{center}} \\ \hline
     \parbox{0.9cm}{\begin{center}$B$\end{center}} & \parbox{0.9cm}{\begin{center}$C$\end{center}} & \parbox{0.9cm}{\begin{center}$0$\end{center}} & \parbox{0.9cm}{\begin{center}$D$\end{center}}
  \end{array}\right),
\end{array}
\end{equation}
for some matrices $A$, $B$, $C$ and $D$. The influence matrix $T$ can be obtained as
\begin{equation}
  \label{T}
  T^T = \left( \begin{array}{c|c} 0 & {\tilde{\tt{T}}}^T \\ \hline 0 & 0\end{array} \right),
\end{equation}
with the column split between $I$ and $I^c$, and the row split between $O^c$ and $O$. 
The matrices $B$ and $(B|C)$ do not necessarily have maximal row-rank. By row transformations of 
$(B|C||0|D)$ we extract the dependent rows, and obtain
\begin{equation}
\label{GN3b}
\begin{array}{c} \mbox{ } \\ \mbox{ } \\  {\cal{G}}(|\Psi\rangle) = \end{array} 
 \begin{array}{c}
    \mbox{}\;I \hspace{1cm} I^c\hspace{1cm} O^c\hspace{0.85cm} O\\ 
  \left(\begin{array}{c|c||c|c}
    \parbox{0.9cm}{\begin{center}$0$\end{center}} &  \parbox{0.9cm}{\begin{center}${\tilde{\tt{T}}}^T$\end{center}} & \parbox{0.9cm}{\begin{center}$I$\end{center}} & \parbox{0.9cm}{\begin{center}$A$\end{center}} \\ \hline
     \parbox{0.9cm}{\begin{center}$0$\end{center}} & \parbox{0.9cm}{\begin{center}$0$\end{center}} & \parbox{0.9cm}{\begin{center}$0$\end{center}} & \parbox{0.9cm}{\begin{center}$D'$\end{center}}
\\ \hline
     \parbox{0.9cm}{\begin{center}$0$\end{center}} & \parbox{0.9cm}{\begin{center}$C''$\end{center}} & \parbox{0.9cm}{\begin{center}$0$\end{center}} & \parbox{0.9cm}{\begin{center}$D''$\end{center}}
\\ \hline
     \parbox{0.9cm}{\begin{center}$B'''$\end{center}} & \parbox{0.9cm}{\begin{center}$C'''$\end{center}} & \parbox{0.9cm}{\begin{center}$0$\end{center}} & \parbox{0.9cm}{\begin{center}$D'''$\end{center}}
  \end{array}\right),
\end{array}
\end{equation}
However, note that each row in the third set of rows in the above matrix can either be interpreted as a correction operator $K(a)$, with non-empty forward cone $fc(a)$,  for some $a \in O$, or a gauge operator $\overline{K}(i)$ for some $i\in I^c$. The former is ruled out because every $a \in O$ must have an empty forward cone.  The latter  is ruled out by the assumption that no qubit outside the set $I$ can be individually gauged. Therefore that set of rows must identically vanish.

Since ${\cal{G}}(|\Psi\rangle)$ has full row rank, so does the matrix $D'$ appearing in Eq.~(\ref{GN3b}). We may then choose a set $\Delta O \subseteq O$ such that the columns of $D'$ indexed by $\Delta O$ form a maximal independent set. We set $O_{\text{comp}}:=O\backslash \Delta O$.  
Then, by further row transformations which do not affect $\tilde{\tt{T}}$, the matrix in Eq.~(\ref{GN3b}) can be converted to
\begin{equation}
\label{GN4}
\begin{array}{c} \mbox{ } \\ \mbox{ } \\  {\cal{G}}(|\Psi\rangle) = \end{array} 
 \begin{array}{c}
    \mbox{ }\mbox{ }\;I \hspace{1cm} I^c\hspace{1cm} O^c\hspace{0.7cm} \Delta O \hspace{0.5cm} O_{\text{comp}}\\ 
  \left(\begin{array}{c|c||c|c|c}
    \parbox{0.9cm}{\begin{center}$0$\end{center}} &  \parbox{0.9cm}{\begin{center}${\tilde{\tt{T}}}^T$\end{center}} & \parbox{0.9cm}{\begin{center}$I$\end{center}} & \parbox{0.9cm}{\begin{center}$0$\end{center}} & \parbox{0.9cm}{\begin{center}$A'$\end{center}} \\ \hline
     \parbox{0.9cm}{\begin{center}$0$\end{center}} & \parbox{0.9cm}{\begin{center}$0$\end{center}} & \parbox{0.9cm}{\begin{center}$0$\end{center}} &  \parbox{0.9cm}{\begin{center}$I$\end{center}} & \parbox{0.9cm}{\begin{center}$A''$\end{center}} \\ \hline
     \parbox{0.9cm}{\begin{center}$B'''$\end{center}} & \parbox{0.9cm}{\begin{center}$C'''$\end{center}} & \parbox{0.9cm}{\begin{center}$0$\end{center}} &  \parbox{0.9cm}{\begin{center}$0$\end{center}} & \parbox{0.9cm}{\begin{center}$D'''$\end{center}}
  \end{array}\right).
\end{array}
\end{equation}
$B'''$ has full row rank by construction. We can therefore find a set $I_{\text{gauge}} \subseteq I$ such that the columns of $B'''$ indexed by $I_{\text{gauge}}$ form a maximal independent set. For any such set $I_{\text{gauge}}$ we can convert the matrix in Eq.~(\ref{GN4}) fully into the normal form Eq.~(\ref{GNF}) without affecting $T$.

For any of the above choices for $I_{\text{gauge}} \subseteq I$ and $O_{\text{comp}}\subseteq O$, the resulting influence matrix $T(I_{\text{gauge}}, O_{\text{comp}})$ can be extracted as 
\begin{equation}
  \label{Tpr}
  T(I_{\text{gauge}}, O_{\text{comp}})^T = \left( \begin{array}{c|c} 0 & \tilde{\tt{T}}^T \\ \hline 0 & 0  \\ \hline 0 & 0\end{array} \right),
\end{equation}
with the column split between $I$ and $I^c$, and the row split between $O^c$, $\Delta O$ and $O_{\text{comp}}=O\backslash \Delta O$. By comparison of Eqs.~(\ref{T}) and (\ref{Tpr}) we verify $T = T(I_{\text{gauge}},O_{\text{comp}})$. 
\end{proof}

{\em{Remark:}}
Comparing Eq.\eqref{GN4} with the normal form in Eq.~\eqref{GNF}, we can write the stabilizer matrix in a slightly varied form that can be useful later.
\begin{eqnarray}
\label{GN5}
\begin{array}{c} \mbox{ } \\ \mbox{ } \\ {\cal{G}}(|\Psi\rangle) = \end{array} 
 \begin{array}{c}
    I_{\text{gauge}}\hspace{0.7cm} \Delta I \hspace{0.7cm} I^c\hspace{0.8cm} O^c\hspace{0.7cm} \Delta O \hspace{0.5cm} O_{\text{comp}}\\ 
  \left(\begin{array}{c|c|c||c|c|c}
 \parbox{0.9cm}{\begin{center}$0$\end{center}} &    \parbox{0.9cm}{\begin{center}$0$\end{center}} &  \parbox{0.9cm}{\begin{center}${\tilde{\tt{T}}}^T$\end{center}} & \parbox{0.9cm}{\begin{center}$I$\end{center}} & \parbox{0.9cm}{\begin{center}$0$\end{center}} & \parbox{0.9cm}{\begin{center}${\tt{Z}}_1^T$\end{center}} \\ \hline
 \parbox{0.9cm}{\begin{center}$0$\end{center}} &     \parbox{0.9cm}{\begin{center}$0$\end{center}} & \parbox{0.9cm}{\begin{center}$0$\end{center}} & \parbox{0.9cm}{\begin{center}$0$\end{center}} &  \parbox{0.9cm}{\begin{center}$I$\end{center}} & \parbox{0.9cm}{\begin{center}${\tt{Z}}_2^T$\end{center}} \\ \hline
 \parbox{0.9cm}{\begin{center}$I$\end{center}} &     \parbox{0.9cm}{\begin{center}${\tt{H}}_1^T$\end{center}} & \parbox{0.9cm}{\begin{center}${\tt{H}}_2^T$\end{center}} & \parbox{0.9cm}{\begin{center}$0$\end{center}} &  \parbox{0.9cm}{\begin{center}$0$\end{center}} & \parbox{0.9cm}{\begin{center}${\tt{R}}^T$\end{center}}
  \end{array}\right),
\end{array}
\end{eqnarray}
where ${\tt{Z}} = ({\tt{Z}}_1\,|\,{\tt{Z}}_2)$
and where ${\tt{H}}^T = ({\tt{H}}_1^T \,| \, {\tt{H}}_2^T)$.

\begin{proof}[Proof of Theorem~\ref{OneToOneThree}]
Denote by ${\cal{B}}$ the set of bases of ${\cal{G}}$, and by ${\cal{T}}$ the set of extremal classical processing relations of form Eq.~(\ref{TO7a}), (\ref{TO7b}), specified by the triple $(I_{\text{gauge}}, O_{\text{comp}}, \{ {\tt{H}},{\tt{R}},{\tt{T}},{\tt{Z}}\})$. Then, the mapping $h: {\cal{B}} \longrightarrow {\cal{T}}$ exists and is a bijection. (1) Existence of $h$: By the normal form Eq.~(\ref{GNF}) of ${\cal{G}}= (\Phi||S)$, for a given basis $B({\cal{G}})$ the sets $I_{\text{gauge}}$ and $O_{\text{comp}}$ are extracted as follows. A qubit $i$ is in $I_{\text{gauge}}$ if and only if the corresponding column of $\Phi$ appears in $B({\cal{G}})$. A qubit $a$ is in $O_{\text{comp}}$ if and only if the corresponding column of $S$ does not appear in $B({\cal{G}})$. Knowing $I_{\text{gauge}}$ and $O_{\text{comp}}$, $\{ {\tt{H}},{\tt{R}},{\tt{T}},{\tt{Z}}\}$ is uniquely determined via Theorem~\ref{OneToOneOne}. (2) Surjectivity of $h$: By definition of ``extremal''. (3) Injectivity of $h$: given $I_{\text{gauge}}$ and $O_{\text{comp}}$, $B(G)=\left( \Phi|_{I_{\text{gauge}}}\,| \, S_{O_{\text{comp}}}\right)$ is unique. 
\end{proof}

\begin{proof}[Proof of Theorem~\ref{OneToOneTwo}]
We divide the proof into the following steps: i) First, with $\text{rk}\,H =\text{rk}\, Z$, the processing relations stem from an extremal pair $I_{\text{gauge}}$,$O_{\text{comp}}$. We show that given an extremal pair $I_{\text{gauge}}$, $O_{\text{comp}}$, 
the matrices $\tt{H}$, $\tt{R}$, ${\tt{T}}$ and $\tt{Z}$ 
are uniquely determined by the classical  processing relations~Eq.~(\ref{TO7a},\ref{TO7b}). ii) From
these matrices we can derive the corresponding normal form of the resource state uniquely.  Let  us 
denote this by $\mathcal{N}$. In iii) and iv) we show that the following diagram commutes, which establishes the  equivalence of normal forms of all extremal pairs. 
\begin{eqnarray}
\begin{CD}
(I_{\text{gauge}}, O_{\text{comp}})  @>{\Lambda}>> (I_{\text{gauge}}', O_{\text{comp}}')\\
@VV \tt{T},\tt{H},\tt{Z},\tt{R}V  @VV{\tt{T}',\tt{H}',\tt{Z}',\tt{R}'}V \\
\mathcal{N}@> {{M(\Lambda)}}>> \mathcal{N}'
\end{CD}\label{eq:cd}
\end{eqnarray}
In Eq.~\eqref{eq:cd}, $M(\Lambda)$ is a transformation on the normal form $\mathcal{N}$, dependent on $\Lambda$.
Because of their independence, we consider the transformations $I_{\text{gauge}} \rightarrow I_{\text{gauge}}'$ and $ O_{\text{comp}}\rightarrow O_{\text{comp}}'$ separately in iii) and iv) respectively.

\begin{compactenum}[i)]
\item
Extracting $\tt{H}$, $\tt{R}$, ${\tt{T}}$ and $\tt{Z}$: Given any set  of inputs $I$ and outputs $O$ by Theorem~\ref{Extr}, we know that there always 
exists an   extremal pair $(I_{\text{gauge} }, O_{\text{comp}})$,  where $I_{\text{gauge} } \subseteq I, O_{\text{comp}}\subseteq O$. 
By Eq.~(\ref{Tshape}),  $\tt{T}$ is uniquely specified by $T$.  By the assumption of the classical output being optimal, $Z$ can be brought into the unique normal form of Eq.~(\ref{ZN}) by left multiplication with an invertible matrix  and (permuting the columns if necessary), ${\tt{Z}}$ can be extracted. 
%
%
The matrix $H$ may not appear in Eq.~(\ref{TO7a}) in its normal form Eq.~(\ref{HRTZshape}), 
nonetheless
for an invertible matrix $\Lambda$, the vector $\textbf{q}$ specifying the measurement bases is invariant under $H \longrightarrow H \Lambda$, $\textbf{g} \longrightarrow \Lambda^{-1}\textbf{g}$.  
By definition of $I_{\text{gauge}}$, every qubit in $I_{\text{gauge}}$ can be individually gauged 
with respect to $I_{\text{gauge}}$, $O_{\text{comp}}$. 
Therefore, (up to row permutations), we can choose $\Lambda$ such that 
\begin{equation}
  \label{HNF}
H \Lambda =  \left(\begin{array}{c} I \\ \hline {\tt{H}} \end{array}\right),
\end{equation}
where the row split is between $I_{\text{gauge}}$ (upper) and $(I_{\text{gauge}})^c$ (lower). $\Lambda$ and ${\tt{H}}$ are unique.
The classical output $\textbf{o}$ is invariant under the transformation $R \longrightarrow R \Lambda',\, \textbf{g} \longrightarrow (\Lambda')^{-1}\textbf{g}$. However, since Eqs.~(\ref{TO7a}) and (\ref{TO7b}) refer to $\textbf{g}$ in the same basis, $\Lambda'$ is now fixed: $\Lambda'=\Lambda$. Then, ${\tt{R}} = R\Lambda$, with the $\Lambda$ of Eq.~(\ref{HNF}).

\item Assembling the normal form: Using the unique matrices $\tt{H}$, $\tt{R}$, $\tt{T}$, $\tt{Z}$, by Lemma~\ref{NF}, we can now assemble the normal form Eq.~(\ref{GNF}) of the stabilizer generator matrix for the resource state $|\Psi\rangle$. By assumption of Theorem~\ref{OneToOneTwo}, the matrices $H$, $R$, $T$, $Z$ describe a valid computation, and the normal form Eq.~(\ref{GNF}) derived from them must thus yield a valid description of a quantum state. In particular, all Pauli operators specified by the rows of ${\cal{G}}(\{\tt{H}, \tt{R}, \tt{T}, \tt{Z}\})$ in Eq.~(\ref{GNF}) must commute. (The rows are independent by design of the normal form.) We have thus constructed a description of $|\Psi\rangle$. Since $\tt{H}$, $\tt{R}$, $\tt{T}$, $\tt{Z}$ are unique, so is $|\Psi\rangle$. \medskip

We now proceed to construct the stabilizer ${\cal{S}}(|\Psi\rangle)$ from the processing relations when $I_{\text{gauge}}$ and $O_{\text{comp}}$ are not specified. From the classical processing relation Eq.~(\ref{TO7a}) we can still extract the input set $I$ and the output set $O$,  by testing which rows and columns of $T$ identically vanish. Then, the possible choices for $I_{\text{gauge}}$ and $O_{\text{comp}}$ are limited to $I_{\text{gauge}} \subseteq I$ and $O_{\text{comp}} \subseteq O$ by definition.

\noindent
\item Equivalence under $\Lambda:I_{\text{gauge}}\rightarrow I_{\text{gauge}}'$. In order to prove this
we rely on the slightly variant version of the normal form of ${\cal{G}}(|\Psi\rangle)$ as shown in Eq.~(\ref{GN5})  which also includes additional detail about the correction operators.
As noted above, the different normal forms for $H$ can be interconverted by  right-multiplication with an invertible matrix $\Lambda$ (change of basis for $\textbf{g}$), i.e. $H, {\tt{R}} \longrightarrow H\Lambda, R\Lambda$, where $H^T=(I|{{\tt{H}}^T})$. Under such a transformation, the upper part of the normal form Eq.~(\ref{GNF}) for ${\cal{G}}(|\Psi\rangle)$ remains unchanged, and the lower part is transformed $(H^T||0|R^T) \longrightarrow  \Lambda^T(H^T||0|{{\tt{R}}^T})$. Invertible row transformations on ${\cal{G}}(|\Psi\rangle)$ leave $|\Psi\rangle$ unchanged, and $|\Psi\rangle$ is thus independent on the precise choice of $I_{\text{gauge}} \subseteq I$.  

\item  Equivalence under $\Lambda:O_{\text{comp}}\rightarrow O_{\text{comp}}'$.
Proving that a different choice of $O_{\text{comp}}$ does not change the stabilizer state is a little
more complicated. We proceed in the following manner. From Theorem~\ref{OneToOneThree},
we know that $I_{\text{gauge}}\cup O_{\text{comp}}^c $ are the bases of a matroid.
Therefore for two distinct
computational output sets $O_{\text{comp}}$ and $O_{\text{comp}}''$, there exists another computational
output set $O_{\text{comp}}'= O_{\text{comp}}\setminus \{i \} \cup \{j \}$, where 
$i\in O_{\text{comp}}\setminus O_{\text{comp}}''$ and $j\in O_{\text{comp}}''\setminus O_{\text{comp}}$.
Therefore, it  suffices if we show that the stabilizer state does not change if we change the computational output set from $O_{\text{comp}}$ to $O_{\text{comp}}'$.
Assume that the classical relation for the computational output set $O_{\text{comp}}$ is given as
\begin{eqnarray}
\mathbf{o} = Z \mathbf{s} + R  \mathbf{g}  = ({\tt{Z}}_1\, | {\tt{Z}}_2 | I )\mathbf{s} + {\tt{R}} \mathbf{g}.
\end{eqnarray}
where the column split of $Z$ is between $O_{\text{comp}}^c$, $\Delta O$, and $O_{\text{comp}}$. The 
corresponding normal form (with the column split in $\sigma_{\phi}$-part $I_{\text{gauge}}|I_{\text{gauge}}^c$ ) is
\begin{eqnarray}
  \left(\begin{array}{c|c||c|c|c}
    0&  \tilde{\tt{T}}^T & I & 0 & {\tt{Z}}_1^T \\ \hline
     0 & 0 & 0 &  I & {\tt{Z}}_2^T \\ \hline
     I & \mbox{\tt{H}}^T & 0 &  0 & {\tt{R}}^T
  \end{array}\right),\label{eq:nform2}
\end{eqnarray}
where ${\tt{T}}^T= \left(\begin{array}{c}  \tilde{\tt{T}}^T\\\hline 0 \end{array} \right)$.
Suppose that we transform $O_{\text{comp}}$ to $O_{\text{comp}} \setminus \{i \}\cup \{j\} $,
where $i\in O_{\text{comp}}$  and $j\in \Delta O $. Without loss of generality assume that $i$ is the 
last column of ${\tt{Z}}_2$. (It cannot be an all zero column because, then it would not be possible for it to be in 
$O_{\text{comp}}$.) Let ${\tt{Z}}_1= \left(\begin{array}{c}x^T\\\hline Z_A \end{array}\right)$ and 
${\tt{Z}}_2 = \left(\begin{array}{c|c} a^T&1\\\hline Z_B&b  \end{array}\right)$. 
Then  $\Lambda = \left(\begin{array}{c|c} 1&0 \\\hline b &I\end{array}\right)$ acting on 
$Z$ achieves the transformation $O_{\text{comp}}$ to $O_{\text{comp}}'$. 
\begin{eqnarray}
\Lambda ({\tt{Z}}_1\,|\,{\tt{Z}}_2| I )  &=& \left(\begin{array}{c|c|c|c|c}x^T &a^T&1&1&0\\\hline Z_A+bx^T& Z_B+ba^T&0&b& I  \end{array}\right)\\ 
&\sim& \left(\begin{array}{c||c|c||c|c}x^T &a^T&1&1&0\\\hline Z_A+bx^T& Z_B+ba^T&b&0& I  \end{array}\right)  = ({\tt{Z}}_1'\,||\,{\tt{Z}}_2'|| I)
\end{eqnarray}
We claim that this same transformation can be effected by row transformations of ${\cal{G}}(|\Psi\rangle)$. First let us focus on the middle set of rows in ${\cal{G}}(|\Psi\rangle)$, namely the 
correction operators for $\Delta O$. Then
acting by $M(\Lambda)= \left(\begin{array}{c|c} I & a \\\hline 0 &1\end{array}\right)$ gives us
\begin{eqnarray}
M(\Lambda)(0 || 0 |I | {\tt{Z}}_2^T) & = & \left( \begin{array}{c||c|c|c|c|c}0&0&I & a & 0 & Z_B^T+ab^T\\\hline0&0& 0 & 1& 1& b^T\end{array}\right)\\
&\sim& \left( \begin{array}{c||c|c|c|c|c}0&0&I & 0 & a & Z_B^T+ab^T\\\hline 0&0&0 & 1& 1& b^T\end{array}\right)  = (0||0|I | {\tt{Z}}_2'^T ).\label{eq:lastRow}
\end{eqnarray}
Now if take the last row in Eq.~\eqref{eq:lastRow}, namely $(0|| 0 | 0 |1|b^T) = c$ 
 and add  $x c$ to the top set of rows in Eq.~\eqref{eq:nform2} we  obtain 
\begin{eqnarray}
 \left( \begin{array}{c|c||c|c|c|c|c}0&\tilde{\tt{T}}^T&I&0 & x & 0 & Z_A^T+xb^T\end{array}\right)
&\sim& \left( \begin{array}{c|c||c|c|c}0&\tilde{\tt{T}}^T&I&0  & {\tt{Z}_1}'^T\end{array}\right)
\end{eqnarray}
showing the equivalence of $\tt{Z}$ and $\tt{Z}'$. The equivalence of $\tt{R}$ and $\tt{R}'$
under $\Lambda$ can be shown in exactly the same fashion as for ${\tt{Z}}_1$ and ${\tt{Z}}_1'$.
\end{compactenum}
This concludes the proof that the extremal classical relations completely determine the stabilizer
state. 
\end{proof}

\section{Flipping measurement planes and temporal invariance}
\label{Flip}

In this section, we introduce a second group of symmetry transformations on MBQCs which is related to local complementation \cite{LC1} - \cite{LC3}. These transformations leave the temporal relation of the measurements in a any given MBQC unchanged. An addition, if the temporal relation is a partial order, the distribution of the computational output is left unchanged. 

\subsection{Flipping measurement planes}
\label{FP}

According to Eq.~(\ref{Obs}), for any qubit $a$ in a given resource state, the local observable measured to drive the computation is
$$
O_a[q_a]= \cos \varphi_a\, \sigma_\phi^{(a)}  +  (-1)^{q_a}\sin \varphi_a\, \sigma_{s\phi}^{(a)}.
$$
Therein, $q_a$ is a linear function of the measurement outcomes $\{s_b,\, b \in \Omega\}$, c.f. Eq.~(\ref{TO7a}). 

Let's see what happens if we use a different rule for the adjustment of measurement bases, namely
\begin{equation}
\label{lObsPr}
O_a'[q_a]= (-1)^{q_a} \cos \varphi_a\, \sigma_\phi^{(a)}  + \sin \varphi_a\, \sigma_{s\phi}^{(a)},
\end{equation}
That is, if $q_a=1$, to obtain $O'_a[1]$ we are flipping the observable $O_a[0]$ about the $\sigma_{s\phi}$-axis rather than the $\sigma_\phi$-axis. Comparing Eqs. (\ref{Obs}) and (\ref{lObsPr}), we find that 
\begin{equation}
\label{ObsId}
O'_a[q_a] \equiv (-1)^{q_a}O_a[q_a],
\end{equation}
independent of the measurement angle $\varphi_a$. The measurements of $O'_a[q_a]$ and  $O_a[q_a]$ are always in the same basis for the same $q_a$, and the measured eigenvalues differ by a factor of $(-1)^{q_a}$. 

We call the transformation $\tau_s[i]: O_i[q_i] \longrightarrow O'_i[q_i]$ {\em{flipping of the measurement plane at qubit $i$}}. On the elementary degrees of freedom, namely the resource state $|\Psi\rangle$, the Pauli observables $\sigma_s^{(i)}$, $\sigma_{s\phi}^{(i)}$ (action on $\sigma_s^{(i)}$ is implied), and the measurement angle $\varphi_i$, the flipping $\tau_s[i]$ acts as
\begin{equation}
\label{tau_s}
\tau_s[i]: \begin{array}{lcl} \sigma_\phi^{(i)} &\longleftrightarrow& \sigma_{s\phi}^{(i)},\\ \varphi_i &\longrightarrow& (-1)^{q_i}\frac{\pi}{2} - \varphi_i,\\
|\Psi\rangle & \longrightarrow & |\Psi\rangle, \end{array}
\end{equation}
The action of $\tau_s[i]$ on the Pauli operators $\sigma^{(j)}$ and the measurement angles $\varphi_j$, for $j\neq i$ is trivial. All our considerations are independent of the values of the measurement angles. In particular, the second part of the transformation Eq.~(\ref{tau_s}) does not affect temporal order.

We now discuss the effect of the flipping $\tau_s[a]$, $a \in \Omega$ on a given MBQC. Denote the measurement outcome of a measurement of $O'_a(q_a)$ by $s'_a$. By Eq.~(\ref{ObsId}), the following two measurement procedures are always equivalent. (I) Measuring $O_a[q_a]$ and outputting $s_a$, and (II) Measuring $O'_a[q_a]$ and outputting $s'_a + q_a \mod 2$. We may call the device that performs Procedure I a $\phi$-box, and the device that performs Procedure II a $s\phi'$-box. Then, 
\begin{center}
\includegraphics[width=6cm]{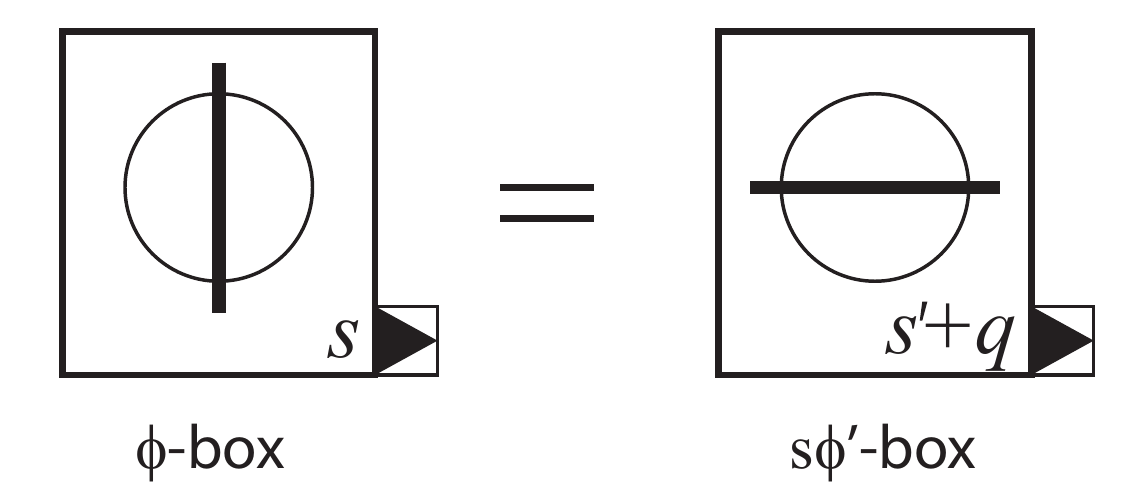}.
\end{center}
The prime in the $s\phi'$-box accounts for the fact that not the measurement outcome $s'_a$ itself is outputted, but rather the locally post-processed value $s'_a+q_a \mod 2$. Now, instead of outputting $s'_a+q_a$, the device at $a$ may only output $s'_a$ (that is an $s\phi$-box), and the classical post-processing relations for the adaption of measurement bases are modified accordingly, i.e. $s_a \longrightarrow s_a'+q_a \mod 2$. Can the resulting relations again be written in a form $\textbf{q} = T'\textbf{s}' + H'\textbf{g}$?  
 
We now attempt transforming a $\phi$-box into an $s\phi$-box at qubit $a$. The vectors of measurement outcomes $\textbf{s}$ and $\textbf{s}'$ are related via $\textbf{s} = \textbf{s}' + e_ae_a^T\textbf{q}$. Inserting this relation into Eq.~(\ref{TO7a}), we obtain 
\begin{equation}
\label{star}
(I+T \textbf{e}_a \textbf{e}_a^T)\textbf{q} = T\textbf{s}'  + H\textbf{g} \mod 2.
\end{equation} 

Case I: $T_{aa}=0$. Physically, this means that the measurement basis at the flipped qubit $a$ does not depend on the measurement outcome at $a$, before the transformation. Multiplying Eq.~(\ref{star}) with $\textbf{e}_a^T$ from the left yields $q_a = \textbf{e}_a^TT\textbf{s}'+\textbf{e}_a^TH\textbf{g} \mod 2$. Inserting back into Eq.~(\ref{star}), we obtain
\begin{equation}
  \label{Tlc}
  T' = T + T \textbf{e}_a \textbf{e}_a^T T \mod 2.
\end{equation}
Likewise,
\begin{equation}
  \label{TlcCons}
  H' = H \oplus T \textbf{e}_a \textbf{e}_a^TH, \; Z' = Z \oplus  Z \textbf{e}_a \textbf{e}_a^T T,\; R' = R \oplus  Z \textbf{e}_a \textbf{e}_a^T H.  
\end{equation}
Eqs.~(\ref{Tlc}) and (\ref{TlcCons}) completely describe the effect of the flipping $\tau_s[a]$ of the measurement plane at qubit $a$ on the classical processing relations Eq.~(\ref{TO7a}), (\ref{TO7b}). 

{\em{Remark:}} If the matrices $H,R,T,Z$ are given their normal form Eq.~(\ref{HRTZshape}) wrt the pair $I_{\text{gauge}},O_{\text{comp}}$ then the flipping of the measurement plane at any vertex $a$ with $T_{aa}=0$ leaves this normal form intact. We can therefore state a transformation rule equivalent to Eqs.~(\ref{Tlc}), (\ref{TlcCons}) for the matrices $\tt{H},\tt{R},\tt{T},\tt{Z}$. This rule is, in fact, simpler. We assemble the composite matrix  
\begin{equation}
\label{TppL}
T_{\text{ext}} =  \left(\begin{array}{c|c|c|c} 0 & 0 & 0 & 0\\ I & 0 & 0 &0 \\ \hline {\tt{H}} & {\tt{T}}  & 0 & 0\\ \hline {\tt{R}} & {\tt{Z}} & I & 0 \end{array}\right),
\end{equation}
which is a square matrix of size $(|\Omega|+|I_{\text{gauge}}|+|O_{\text{comp}}|) \times (|\Omega|+|I_{\text{gauge}}|+|O_{\text{comp}}|)$. The effect of flipping of the measurement plane at $a$ then is
\begin{equation}
\label{TlcExt}
T_{\text{ext}} \longrightarrow T'_{\text{ext}} = T_{\text{ext}} + T_{\text{ext}}\, \textbf{e}_a \textbf{e}_a^T\, T_{\text{ext}} \mod 2.
\end{equation}
Thus, the rule is just the same as Eq.~(\ref{Tlc}) for the original influence matrix $T$.

Furthermore, $T_{\text{ext}}$ is the influence matrix for MBQC on a bigger resource state $|\Psi'\rangle$ constructed from $|\Psi\rangle$. For the support $\Omega'$ of $|\Psi'\rangle$ we require two additional sets of qubits, $I'$ and $O'$, with $|I'|=|O'|=|I_{\text{gauge}}|=|O_{\text{comp}}|$, such that $\Omega' = \Omega \cup I' \cup O'$. $|\Psi'\rangle$ is obtained from $|\Psi\rangle$ by the following construction:
\begin{equation}
\label{PsiExt}
\parbox{5.5cm}{\includegraphics[width=5.5cm]{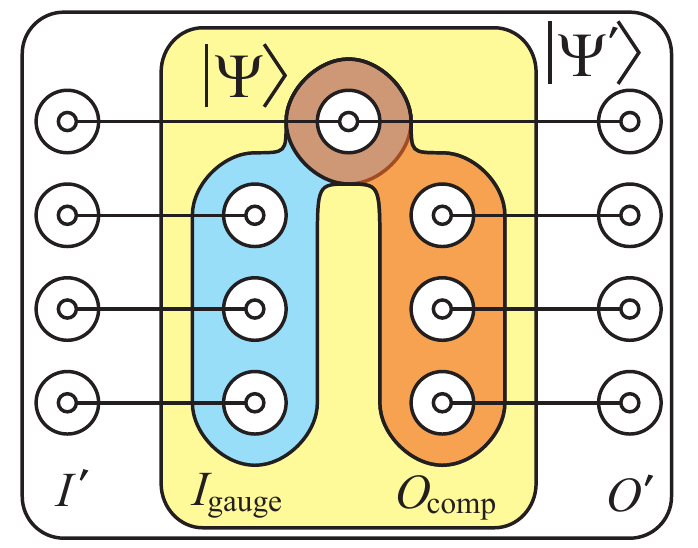}}
\end{equation} 
Therein, the gates $\parbox{1cm}{\includegraphics[width=1cm]{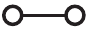}}=\Lambda_{s}$ are $\sigma_s$-controlled $\sigma_s$-gates, i.e., $\Lambda_s \sigma_s^{(1)} \Lambda_s^\dagger = \sigma_s^{(1)}$, $\Lambda_s \sigma_\phi^{(1)} \Lambda_s^\dagger = \sigma_\phi^{(1)} \otimes \sigma_s^{(2)}$, etc, and the extra qubits in $I'$ and $O'$ are initially prepared in the eigenstate of $\sigma_\phi$ with eigenvalue 1. With the definition Eq.~(\ref{PsiExt}) of $|\Psi'\rangle$, the labelling of the blocks of rows and columns for the matrix on the r.h.s. of Eq.~(\ref{TlcExt}) is $I'|(O_{\text{comp}})^c | O_{\text{comp}} | O'$ for the columns and $I' | I_{\text{gauge}} | (I_{\text{gauge}})^c| O'$ for the rows. 

We thus find that all information in the classical MBQC processing relations is temporal information for the computation on a slightly extended resource state.\medskip

Case 2: $T_{aa}=1$. In this case, the measurement basis at $a$ does depend on the measurement outcome at $a$. This is an example for a closed time-like curve (only involving the measurement device at $a$), and an obstacle to deterministic runnability. Now, the matrix $I\oplus T \textbf{e}_a \textbf{e}_a^T$ on the left side in Eq.~(\ref{star}) is not invertible, $\textbf{e}_A^T(I \oplus T \textbf{e}_a \textbf{e}_a^T) =0$. Hence, the relation (\ref{star}) can not be solved for $\textbf{q}$ in this case. There is no relation Eq.~(\ref{TO7a}) with the same sets $I_{\text{gauge}}$, $O_{\text{comp}}$ before and after flipping.\medskip

We now discuss the consequences of flipping measurement planes for the above two cases.

\subsection{Flipping measurement planes and local complementation}

We now return to the above Case 1, namely when flipping of a measurement plane yields a computation with a new relation Eq.~(\ref{TO7a}). Note that the computation before and after the flip generate the same output distribution. Flipping a $\phi$-box into an $s\phi'$-box is an equivalence transformation, only based on the operator identity Eq.~(\ref{ObsId}). Changing an $s\phi'$-box into an $s\phi$-box is again an equivalence transformation, provided it can be carried out. 

The influence matrices $T$ and $T'$ before and after the flipping, respectively, are in general not the same, c.f. Eq.~(\ref{Tlc}). However, $T$ and $T'$ still generate the same temporal order, as we now show.
\begin{Lemma}
\label{TOpres}
Be $T$ an influence matrix with $T_{ii}=0$. Then, $T$ and $T' = T \oplus T \textbf{e}_i \textbf{e}_i^T T$ generate the same temporal relation under transitivity.  
\end{Lemma}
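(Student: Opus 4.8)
The plan is to translate the algebraic update $T' = T \oplus T\textbf{e}_i\textbf{e}_i^T T$ into a statement about the directed graph whose arcs encode $\prec$, and then argue that the two transitive closures coincide. First I would compute the correction term entrywise: since $\textbf{e}_i\textbf{e}_i^T$ extracts the $(i,i)$-slot, one gets $[T\textbf{e}_i\textbf{e}_i^T T]_{ba} = T_{bi}T_{ia}$, so that $[T']_{ba} = T_{ba}\oplus T_{bi}T_{ia}$. Recalling the convention $a\prec b \Leftrightarrow b\in fc(a)\Leftrightarrow T_{ba}=1$, this says precisely that the flip toggles the arc $a\to b$ exactly for those pairs admitting a direct two-step path $a\to i\to b$ (that is, $a\prec i$ and $i\prec b$ as one-step relations), and leaves every other arc untouched.

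The crux of the argument is the observation that, because $T_{ii}=0$, no arc incident to $i$ is ever toggled. Indeed an arc $a\to i$ would be toggled only if $T_{ii}=1$ (its toggle amount is $T_{ii}T_{ia}$, i.e. it needs a path $a\to i\to i$), and likewise an arc $i\to b$ would require $T_{ii}=1$; both are excluded. I would record this as $T'_{bi}=T_{bi}$ and $T'_{ia}=T_{ia}$, which also follows by direct substitution. Hence the in- and out-neighbourhoods of $i$ are identical for $T$ and $T'$.

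With this in hand I would prove $R^+ = (R')^+$, writing $R,R'$ for the one-step relations defined by $T,T'$ and $(\cdot)^+$ for transitive closure, by mutual inclusion. For $R^+\subseteq (R')^+$ it suffices to show every arc of $R$ lies in $(R')^+$: an arc $a\to b$ of $R$ that survives is already in $R'$, while an arc that is deleted must be a toggled pair, so $a\prec i$ and $i\prec b$ held in $R$ and, by the preceding paragraph, still hold in $R'$; thus the detour $a\to i\to b$ witnesses $a\,(R')^+\,b$. The reverse inclusion follows by symmetry: the flip is an involution ($T'' = T$, by a one-line computation using $T'_{bi}=T_{bi}$, $T'_{ia}=T_{ia}$ and $T_{ii}=0$) and $T'_{ii}=0$, so the identical argument with the roles of $T$ and $T'$ exchanged gives every arc of $R'$ inside $R^+$, hence $(R')^+\subseteq R^+$.

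The main obstacle, and really the only subtlety, is the preservation of the arcs touching $i$; everything else is formal manipulation of the transitive closure. This is exactly where the hypothesis $T_{ii}=0$ enters, and it is what guarantees that any deleted shortcut $a\to b$ can always be recovered through the still-present two-step detour through $i$.
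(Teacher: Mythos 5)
Your proof is correct and follows essentially the same route as the paper's: you show that a deleted arc $a\to b$ is always recoverable via the preserved detour $a\to i\to b$ (which is exactly where $T_{ii}=0$ enters), and you obtain the reverse inclusion from the fact that the flip is an involution. Your entrywise verification that arcs incident to $i$ are unchanged and that $T''=T$ is a slightly more explicit rendering of the paper's case analysis and its appeal to $\tau_s[i]^2=I$, but it is not a different argument.
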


{\em{Proof of Lemma~\ref{TOpres}.}} Let's introduce a shorthand $a \rightarrow c$ for $c \in fc(a)$ (meaning that the measurement outcome at $a$ influences the measurement basis at $c$). Now, we have to show that $e \prec_T f \Longleftrightarrow e \in \prec_{T'} f$, for any $T'$ generated from $T$ by the transformation Eq.~(\ref{Tlc}).

(I) ``$\Longrightarrow$'': Assume that $e \prec_T f$. Then there exists a sequence of measurement events $e \rightarrow m_1 \rightarrow m_2 \rightarrow ..\rightarrow a \rightarrow c \rightarrow .. \rightarrow f$. Can we break the arrow $a \rightarrow c$, say? To investigate this, let us rewrite the transformation rule Eq.~(\ref{Tlc}) for the flip $\tau_s[i]$ as
\begin{equation}
\label{Tpr3}
\tau_s[i]: \begin{array}{lclr}
  \textbf{fc}(a) &\longrightarrow& \textbf{fc}(a) \oplus \textbf{fc}(i),& \mbox{if } i \in fc(a),\\ 
  \textbf{fc}(a) &\longrightarrow& \textbf{fc}(a),& \mbox{if } i \not\in fc(a),
\end{array}
\end{equation}
Case 1: $a \rightarrow i$ before the transformation $\tau_s[i]$. Then, $\textbf{fc}'(a) = \textbf{fc}(a) \oplus \textbf{fc}(i)$. Since $T_{ii}=0$ by assumption, $a \rightarrow i$ after the transformation $\tau_s[i]$. Sub-case 1a: $i \rightarrow c$ before the transformation $\tau_s[i]$. Since $T_{ii}=0$ ($i \not\in fc(i)$), $i \rightarrow c$ after the transformation $\tau_s[i]$. Thus $a \rightarrow i \rightarrow c$ after the transformation, and hence $a \prec_{T'} c$.
Sub-case 1b: $i \not\rightarrow c$ before $\tau_s[i]$. Then, $a \rightarrow c$ remains after the transformation. Case 2:  $a \not \rightarrow i$ before the transformation $\tau_s[i]$. Then $a \rightarrow c$ after $\tau_s[i]$.
Thus, in all cases $a \prec_{T'} c$, and therefore $e \prec_{T'} f$.

 (II) ``$\Longleftarrow$'': From Eq.~(\ref{Tpr3}), $\tau_s[i]^2 = I$. $\Box$\medskip

Apply a series of transformations Eq.~(\ref{Tlc}) on an initial influence matrix $T$ with vanishing diagonal part may produce an influence matrix with a non-vanishing diagonal part. Thus, the application of the transformation Eq.~(\ref{Tlc}) is restricted. To circumvent this problem, we introduce a modified transformation
\begin{equation}
  \label{Tlc2}
  \tilde{\tau}[i]: T \longrightarrow T' =  T + T\textbf{e}_i \textbf{e}_i^TT +{\cal{D}}(T\textbf{e}_i \textbf{e}_i^TT) \mod 2.
\end{equation} 
Clearly, this transformation takes influence matrices with vanishing diagonal part to influence matrices with vanishing diagonal part, and thereby avoids the problem of restricted applicability of transformation Eq.~(\ref{Tlc}). Note that the transformation Eq.~(\ref{Tlc2}) has the form of local complementation, albeit the influence matrix $T$ that it acts on will in general not be symmetric.

But what is the physical significance of transformation Eq.~(\ref{Tlc2})? The only additional effect of the transformation $\tilde{\tau}[i]$ over $\tau_s[i]$ is the cancelling of the diagonal part of the influence matrix after the transformation, c.f. the last term in Eq.~(\ref{Tlc2}). This can be achieved by a local unitary that exchanges $\sigma_{s\phi}\leftrightarrow \sigma_s$ on a respective qubit. The action of $\tilde{\tau}[i]$ on the elementary degrees of freedom therefore is
\begin{equation}
\label{Tlc3}
\tilde{\tau}[i]: \begin{array}{lcll} \sigma_\phi^{(i)} &\longleftrightarrow& \sigma_{s\phi}^{(i)},\\
 \varphi_i &\longrightarrow& (-1)^{q_i}\frac{\pi}{2} - \varphi_i, \\ 
\sigma_s^{(j)} & \longleftrightarrow & \sigma_{s\phi}^{(j)}, & \forall j \in fc(i) \cap bc(i),\\
|\Psi\rangle &\longrightarrow & |\Psi\rangle. 
\end{array}
\end{equation}
We find that the local measured operators for all qubits $j \in bc(i) \cap fc(i)$ change in a way that cannot be accommodated by a change of the respective measurement angle. For those qubits, the new measured observables  lie in a different equatorial plane of the Bloch sphere. Therefore,  the transformation Eq.~(\ref{Tlc3}), unlike the transformation Eq.~(\ref{Tlc}), does {\em{not}} necessarily map a given computation onto itself. What it does, however, is mapping a given computation to a computation with the same temporal relation. 

\begin{Lemma}
\label{TI}
Be $T$ an influence matrix with $T_{ii}=0$. Then, $T$ and $T' = T \oplus  T \textbf{e}_i \textbf{e}_i^T T \oplus {\cal{D}}(T \textbf{e}_i \textbf{e}_i^T T)$ generate the same temporal relation under transitivity.  
\end{Lemma}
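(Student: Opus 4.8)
The plan is to reduce Lemma~\ref{TI} to the already-proven Lemma~\ref{TOpres} by isolating the single difference between the two transformations, which lives entirely on the diagonal. I would write $\hat{T} := T \oplus T\textbf{e}_i\textbf{e}_i^T T$ for the influence matrix produced by the plain flip $\tau_s[i]$ of Lemma~\ref{TOpres}, and $T' := \hat{T}\oplus{\cal{D}}(T\textbf{e}_i\textbf{e}_i^T T)$ for the matrix produced by $\tilde{\tau}[i]$ in Eq.~(\ref{Tlc2}). Since the extra term ${\cal{D}}(T\textbf{e}_i\textbf{e}_i^T T)$ is purely diagonal, $\hat{T}$ and $T'$ agree in every off-diagonal entry and can differ only on the diagonal. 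Computing $[\hat{T}]_{aa} = T_{aa}\oplus T_{ai}T_{ia}$ and $[T']_{aa}=T_{aa}$, one sees that the two matrices differ at a vertex $a$ exactly when $T_{ai}T_{ia}=1$, i.e. precisely for $a \in fc(i)\cap bc(i)$. Because Lemma~\ref{TOpres} already gives $\prec_T\,=\,\prec_{\hat{T}}$, it suffices to prove $\prec_{\hat{T}}\,=\,\prec_{T'}$.

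First I would dispose of the non-reflexive pairs. Since $\hat{T}$ and $T'$ share all off-diagonal entries, they define exactly the same edges between distinct vertices and differ only in which self-loops are present. For any $e\neq f$, a directed walk from $e$ to $f$ can be shortened by deleting every self-loop step without changing its endpoints or destroying it (a walk between distinct vertices contains at least one non-loop step), and the surviving steps all lie in the common off-diagonal structure. Hence $e\prec_{\hat{T}}f \Leftrightarrow e\prec_{T'}f$ whenever $e\neq f$, and the only thing that could still differ is the reflexive part, i.e. which pairs $a\prec a$ are generated.

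The heart of the argument, and the step I expect to be the main obstacle, is to show that toggling these diagonal self-loops carries no information for the transitive closure. For each $a$ at which $\hat{T}$ and $T'$ differ we have $a\in fc(i)\cap bc(i)$, that is $a\to i$ and $i\to a$ in $T$. Using the hypothesis $T_{ii}=0$, the flip leaves the off-diagonal entries $[\hat{T}]_{ia}=T_{ia}=1$ and $[\hat{T}]_{ai}=T_{ai}=1$ intact, and since $T'$ coincides with $\hat{T}$ off the diagonal, the length-two closed walk $a\to i\to a$ is present in both matrices. This $2$-cycle already forces $a\prec a$ in both transitive closures, regardless of whether the self-loop at $a$ is present. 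Therefore toggling the self-loop at any such $a$ alters neither the reflexive nor the non-reflexive part of the relation, which yields $\prec_{\hat{T}}=\prec_{T'}$.

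Combining the two equalities gives $\prec_T=\prec_{\hat{T}}=\prec_{T'}$, the desired statement; note that the argument uses only $T_{ii}=0$ and never requires the whole diagonal of $T$ to vanish. The subtlety to watch, and the reason the diagonal-cancelling term is harmless, is the degenerate case of a walk consisting of a single self-loop step at such an $a$ (establishing $a\prec a$ with no other edge used): rerouting it through the surviving $2$-cycle $a\to i\to a$ is exactly what covers this case and shows that the modified transformation $\tilde{\tau}[i]$ preserves the temporal order just as $\tau_s[i]$ does, consistent with the physical reading of Eq.~(\ref{Tlc3}).
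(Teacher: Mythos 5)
Your proposal is correct and follows essentially the same route as the paper's own proof: split $\tilde{\tau}[i]$ into the plain flip (handled by Lemma~\ref{TOpres}) followed by a toggle of diagonal entries, and observe that any vertex $a$ whose self-loop is toggled satisfies $T_{ai}=T_{ia}=1$, so the surviving two-cycle $a\to i\to a$ already forces $a\prec a$ in both relations. Your version is marginally more careful in two spots the paper glosses over --- you verify explicitly that self-loops are irrelevant to walks between distinct vertices, and you note that only $T_{ii}=0$ (not a fully vanishing diagonal) is needed --- but the underlying argument is the same.
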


\noindent{Proof of Lemma~\ref{TI}.} Assuming the initial influence matrix has vanishing diagonal part, we split the transformation $T \longrightarrow T \oplus  T \textbf{e}_i \textbf{e}_i^T T \oplus {\cal{D}}(T \textbf{e}_i \textbf{e}_i^T T)$ into two steps, namely $T \longrightarrow T' = T \oplus  T \textbf{e}_i \textbf{e}_i^T T$ and $T' \longrightarrow T'' = T' \oplus {\cal{D}}(T')$. By Lemma~\ref{TOpres}, $T$ and $T'$ generate the same temporal order. Now assume that $T'_{kk}=1$ for some $k \in \Omega$, $k \neq i$. This requires that $T_{ik}=T_{ki}=1$. Then, we also have $T_{ik}'=T_{ki}'=1$ and $T_{ik}''=T_{ki}''=1$. Thus, $k \prec_{T'} k$ and $k \prec_{T''} k$. The closed time-like curve involving $k$ is not changed by setting $T''_{kk}=0$. All other relations trivially remain unaffected by the transformation $T' \longrightarrow T''$. $\Box$  \medskip

\section{MBQC -- a toy model for quantum space time?}
\label{BRKctc}

In attempts to unify the theory of general relativity with quantum mechanics, often the viewpoint is taken that spacetime is not an independent construct, but rather a consequence of the laws of quantum mechanics. Once this assertion is spelled out, the natural next step is to identify the key quantum property which yields a mechanism for generating temporal order, and to illustrate this mechanism in a toy model. 

We do not solve any puzzle of quantum gravity here, but argue that measurement-based quantum computation possesses certain properties that one expects to find in a toy model generating spacetime from none. Namely, (i) Despite its origin in non-relativistic quantum mechanics, time in MBQC is a binary relation among spacetime events (here quantum-mechanical measurements). In computations without closed time-like curves---which are the ones of practical interest---this relation is a partial ordering. There is no external time parameter. (ii) MBQC invokes a quantum mechanical principle that strongly constrains the possible temporal orders, namely that the logical processing is not affected by the randomness inherent in quantum mechanical measurement. This principle can be formulated in terms of an underlying symmetry group.     

Below we discuss aspects of MBQC that we find are of interest for toy models generating spacetime. These are: an analogue of Malement's theorem, forward cones arising as solutions of a wave equation, and event horizons. We emphasize that these analogies arise at a formal level---MBQC is about bits, not matter fields.   

\subsection{An MBQC counterpart of Malament's theorem?}

Malament's theorem \cite{conTC} states that in any spacetime manifold the light cones determine the metric up to a conformal factor. Here we argue that Theorem~\ref{OneToOneTwo} is an MBQC counterpart of that. To begin, we must identify an MBQC counterpart of spacetime. We say that the MBQC equivalent of a spacetime is the entire measurement-based quantum computation, with its resource state and temporally ordered measurement events. The reason for this identification is that in Malament's setting of General Relativity, spacetime is all there is to reason about while in our setting it is the process of MBQC. We further identify the spacetime points with the location of measurement events.

Recall that an MBQC is fully specified by (i) the stabilizer generator matrix ${\cal{G}}$ of the resource state, (ii) the set of measurement angles, and (iii) the linear processing relations Eq.~(\ref{TO7a}), (\ref{TO7b}) for the adaption of measurement bases and generating the output. Theorem~\ref{OneToOneTwo} states that (iii) determines (ii), i.e. all of the MBQC except the measurement angles.

We can thus read Theorem~\ref{OneToOneTwo} as an MBQC counterpart of Malament's theorem if we make two further identifications: (a) the light cone structure in GR corresponds to the linear processing relations in MBQC, and (b) the conformal factor in the spacetime metric at every spacetime point corresponds to the measurement angle at every measurement location.

Regarding (a), sure, the linear processing relations contain the forward cones in MBQC thorough the matrix $T$, but don't they also contain the additional matrices $H$, $R$ and $Z$ that have nothing to do with MBQC temporal order? In this regard, note that by doubling the qubits on the input and output boundaries $I_{\text{gauge}}$ and $O_{\text{comp}}$ in the support of the resource state, the influence matrix $T_{\text{ext}}$ of the extended resource state $|\Psi''\rangle$ comprises all information about $H$, $R$, $T$ and $Z$ (c.f. Section~\ref{FP}). In this sense, all information in the processing relations is temporal.

Regarding (b), we presently do not know of a physical reason for identifying a scale factor for the metric at any spacetime point with a measurement angle at every MBQC spacetime point. However, we note that both Malamat's theorem and its MBQC analogy Theorem~\ref{OneToOneTwo}, in their respective settings, leave one real-valued parameter per spacetime point unfixed.

\subsection{Forward cones satisfy a wave equation}
\label{FWCwave}

We consider an MBQC on a resource stabilizer state in graph normal form $|G\rangle$. We show that if the measurement plane is $(X/Y)$ for every qubit then the  forward cones satisfy a wave equation. Here, $(X/Y)$ is an unordered pair such that we are left with both the possibilities of $\sigma_\phi = X$ and $\sigma_\phi = Y$. The choices may differ for different qubits. 

Following \cite{GodsRoyl}, we define the discretization of the Laplacian $\Delta = \sum_{i=1}^d\frac{d^2}{dx_i^2}$ as $\Delta:=DD^T$, where $D$ is the incidence matrix of the graph $G$. In our case, all addition is mod 2. An equivalent formulation then is
\begin{equation}
  \label{DefDel}
  \Delta:= \Gamma + {\cal{D}} \mod 2,
\end{equation}
with $\Gamma$ the adjacency matrix of $G$ and ${\cal{D}}$ a diagonal matrix such that $[{\cal{D}}]_{vv}=\deg(v)$, for all $v \in V(G)$. The discretized Laplacian
 can act on the characteristic vector $\textbf{f}_a$ of $fc(a)$, yielding $\Delta \textbf{f}_a$. For further use we define the function $f_a: V(G) \longrightarrow \mathbb{Z}_2$, by $f_a(v) = 1 (0)$ if $v \in fc(a)$ ($v \not \in fc(a)$). Then, the action of $\Delta$ on $f_a$ is defined through $\Delta f_a(v):=[\Delta \textbf{f}_a]_v$. 

The definition Eq.~(\ref{DefDel}) of the Laplacian is in accordance with intuition, as the following example for two-dimensional lattice graphs (embedded on a torus) shows. There,
$$
  \begin{array}{rcl}
 (\Delta f_a)(x,y) &=& [(f_a(x+1,y) - f_a(x,y)) - (f_a(x,y) - f_a(x-1,y))] +\\
&& + [(f_a(x,y+1) - f_a(x,y)) - (f_a(x,y) - f_a(x,y-1))] \mod 2\\
    &=& f_a(x+1,y) + f_a(x-1,y) + f_a(x,y+1) + f_a(x,y-1) \mod 2\\
    &=& [(\Gamma + {\cal{D}})\textbf{f}_a]_{(x,y)} \mod 2.
  \end{array}
$$
In the last line, ${\cal{D}}=0$ because all vertices have degree 4. Back to general graphs $G$, the offset ${\cal{D}}$ in Eq.~(\ref{LaplGen}) is necessary such that $\Delta \textbf{f} = \textbf{0}$ whenever $f=\mbox{const}$. We now have the following
\begin{Lemma}
\label{Wavel}
Consider MBQC on a graph state $|G\rangle$ where every qubit is measured in the $(X,Y)$-plane. Then, the forward cones satisfy a wave equation with position-dependent mass $m$,
\begin{equation}
  \label{LaplGen}
  m(v) f_a(v)  + \Delta f_a(v) = \delta_{a,v}, \; \forall \, v \in O_{\text{comp}}^c,
\end{equation}
Therein, addition is mod 2, and $m(v)=deg(v)+b(v) \mod 2$ with $b(v)= 0 (1)$ if the measurement plane at $v$ is $[X,Y]$ (is $[Y,X]$). 
\end{Lemma}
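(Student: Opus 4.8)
The plan is to work entirely in the $X/Z$ symplectic representation of the Pauli group and to read the forward cone directly off the correction operator $K(a)$ defined in Eq.~(\ref{CorrCorr}). First I would record the specialization to the graph-state, $(X,Y)$-plane setting. Because every measurement plane is $(X,Y)$, we have $\sigma_\phi^{(v)},\sigma_{s\phi}^{(v)} \in \{X,Y\}$ for each $v$, and in each case $\sigma_s^{(v)} = i\,\sigma_{s\phi}^{(v)}\sigma_\phi^{(v)} = \pm Z^{(v)}$, so $\sigma_s^{(v)}$ is $Z^{(v)}$ up to a sign that is invisible in $\mathbb{Z}_2$. The stabilizer generators are $K_c = X_c\bigotimes_{b|(c,b)\in E(G)}Z_b$, which in the $X/Z$ representation is the row $(\textbf{e}_c\,|\,\Gamma\textbf{e}_c)$, with $\Gamma$ the adjacency matrix.

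The key identification is that the forward cone of $a$ coincides with the set of generators used to build $K(a)$. Writing $K(a) = \prod_{c\in W}K_c$ for the unique index set $W$, the $X$-part of $K(a)$ is the characteristic vector $\textbf{w}$ of $W$ and the $Z$-part is $\Gamma\textbf{w}$; hence on a vertex $v$ the operator $K(a)|_v$ equals $X^{w_v}Z^{(\Gamma\textbf{w})_v}$ up to phase. By Eq.~(\ref{KTrel}), $v\in fc(a)$ iff $K(a)|_v\in\{\sigma_\phi^{(v)},\sigma_{s\phi}^{(v)}\}=\{X,Y\}$, i.e.\ iff $K(a)$ carries an $X$ factor on $v$, i.e.\ iff $w_v=1$. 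Therefore $\textbf{f}_a=\textbf{w}$, and the $Z$-content of $K(a)$ is exactly $\Gamma\textbf{f}_a$.

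Next I would translate the two membership constraints of Eq.~(\ref{CorrCorr}) into a single $\mathbb{Z}_2$ identity for the $Z$-exponent on each $v\in(O_{\text{comp}})^c$. For $v=a$, $K(a)|_a\in\{\sigma_s^{(a)},\sigma_{s\phi}^{(a)}\}$; for $v\neq a$, $K(a)|_v\in\{I^{(v)},\sigma_\phi^{(v)}\}$. In both a short case check over the plane orientation (encoded by $b(v)$, with $\sigma_\phi=X$ when $b(v)=0$ and $\sigma_\phi=Y$ when $b(v)=1$) shows that the admissible pairs $(w_v,(\Gamma\textbf{f}_a)_v)$ are precisely those satisfying $b(v)\,f_a(v)+(\Gamma\textbf{f}_a)_v=\delta_{a,v}$. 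Concretely, $b(v)=0$ forces $(\Gamma\textbf{f}_a)_v$ to equal $\delta_{a,v}$ with $w_v$ free, while $b(v)=1$ forces $f_a(v)+(\Gamma\textbf{f}_a)_v=\delta_{a,v}$. Collecting over all $v\in(O_{\text{comp}})^c$ yields $b(v)f_a(v)+(\Gamma\textbf{f}_a)_v=\delta_{a,v}$.

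Finally I would recognize this as the claimed wave equation. Using $\Delta=\Gamma+{\cal{D}}$ with $[{\cal{D}}]_{vv}=\deg(v)$ and $m(v)=\deg(v)+b(v)$, one computes $m(v)f_a(v)+\Delta f_a(v)=2\deg(v)f_a(v)+b(v)f_a(v)+(\Gamma\textbf{f}_a)_v\equiv b(v)f_a(v)+(\Gamma\textbf{f}_a)_v \pmod 2$, since $2\deg(v)\equiv0$. This reproduces Eq.~(\ref{LaplGen}). The only genuinely delicate step is the case analysis of the third paragraph: one must verify that the Pauli-membership conditions constrain precisely the $Z$-exponent (and not the free $X$-exponent $f_a(v)$) in the combination dictated by $b(v)$, and must keep track of the fact that only the vertices in $(O_{\text{comp}})^c$ are constrained---vertices in $O_{\text{comp}}$ carry unrestricted $\sigma_s$ content and hence correctly fall outside the domain of the wave equation.
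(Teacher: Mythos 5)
Your proposal is correct and follows essentially the same route as the paper's proof: both identify the $X$-support of the correction operator $K(a)=X(\textbf{f}_a)Z(\Gamma\textbf{f}_a)$ with the forward cone via Eq.~(\ref{KTrel}), translate the membership conditions of Eq.~(\ref{CorrCorr}) into the constraint $b(v)f_a(v)+(\Gamma\textbf{f}_a)_v=\delta_{a,v}$ on $(O_{\text{comp}})^c$ by the same case analysis over the plane orientation, and absorb the $\deg(v)$ terms into $\Delta$ and $m$. Your case check and the observation that vertices in $O_{\text{comp}}$ are unconstrained match the paper's argument; no gap.
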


{\em{Proof.}} For a given qubit $b$, irrespective of whether the measurement plane at $b$ is $[X,Y]$ or $[Y,X]$, with Eq.~(\ref{CorrCorr}) for the correction operations $K(a)$ we find that $b \in fc(a)$ iff $K(a)|_b \in {X_b, Y_b}$. In words, $K(a)$ has Pauli operators $\sim X$ exactly in those places that are in the forward cone of qubit $a$.  Then, using the Definition~\ref{Gdef} of graph states, we find
\begin{equation}
\label{Kspec}
K(a) \equiv X(\textbf{f}_a)Z(\Gamma \textbf{f}_a),
\end{equation}
where $X(\textbf{g}):=\bigotimes_{v \in V(G)|[\textbf{g}]_v=1}X_v$. 

Special case: The measurement plane is $[X,Y]$ for all qubits $\left(\sigma_\phi^{(a)}=X_a,\, \forall a\in V(G)\right)$. Now, for all $b\neq a$ we require $K(a)|_b\in\{I_b,X_b\}$, and $K(a)|_a\in \{Z_a,Y_a\}$. Using Eq.~(\ref{Kspec}), we thus find the constraint
\begin{equation}
  \label{Lapl1}
  \Gamma \textbf{f}_a|_{O_{\text{comp}}^c} = \textbf{e}_a.
\end{equation}
Therein, $\textbf{e}_a$ is a $|O_{\text{comp}}^c|$-component vector which has an entry 1 in the $a$th component, and zeros everywhere else. With Eq.~(\ref{DefDel}), this reproduces Eq.~(\ref{LaplGen}) for $m(v)=\deg(v)$, in accordance with Lemma~\ref{Wavel} for $b(v)\equiv0$.

General case. For $b\neq a$ we require that $K(a)|_b\in\{I_b,X_b\}$ if the measurement basis at $b$ is $[X,Y]$ and $K(a)|_b\in\{I_b,Y_b\}$ if the basis at $b$ is $[Y,X]$. Furthermore, we require that $K(a)|_a\in \{Z_a,Y_a\}$ if the measurement basis at $a$ is $[X,Y]$ and $K(a)|_a\in\{Z_a,X_a\}$ if the measurement basis at $a$ is $[Y,X]$. Using Eq.~(\ref{Kspec}) again, we find Eq.~(\ref{LaplGen}), with $m(v) = \deg(v) + b(v) \mod 2$ as required. $\Box$

\subsection{CTCs of length 1 and event horizons}

In this section we consider an MBQC that has a closed time-like curce of length 1 at a given qubit $i$. That is, the measurement basis at $i$ depends on the measurement outcome at $i$. We argue that if we flip the measurement plane at qubit $i$ we obtain an MBQC from which the closed time-like curve is removed, but instead qubit $i$ vanishes behind the MBQC counterpart of an event horizon. 

Be $i$ a qubit such that $T_{ii}=1$ before flipping the measurement plane at qubit $i$. Such a qubit $i$ cannot be in $I_{\text{gauge}}$, since $I_{\text{gauge}} \subseteq I$ by definition. If $T_{ii}=1$ then $i \in bc(i)$. The backward cones of all qubits in $i$ are empty by definition of $I$, however. Likewise, $i \not \in O_{\text{comp}}$. If $T_{ii}=1$ then $i \in fc(i)$. However, $fc(a)=\emptyset$ for all $a \in O_{\text{comp}}$. Thus, there is only one case to consider, namely $i \in (I_{\text{gauge}})^c\cap (O_{\text{comp}})^c$.

In this case, there exists a correction operator $K(i)$ for qubit $i$ before the flipping, $K(i) = \sigma_{s\phi} \otimes K(i)|_{\Omega \backslash i}$. After flipping at $i$, this operator turns into
\begin{equation}
  \tau_s[i](K(i)) = \sigma^{(i)}_\phi \otimes K(i)|_{\Omega \backslash i}=:\overline{K}'(i).
\end{equation} 
That is, the operator $\tau_s[i](K(i))$ resulting from flipping at $i$ is a gauge type operator, c.f. Eq.~(\ref{CorrGauge}). Thus, the flipping transformation $\tau_s[i]$ (when $T_{ii}=1$) enlarges $I_{\text{gauge}}$ by one qubit,
$$
\tau_s[i]:\; I_{\text{gauge}}  \longrightarrow I_{\text{gauge}} \cup \{i\},\;\; \mbox{if } T_{ii}=1.
$$
Furthermore, after the flipping at $i$ there no longer is a correction operation for qubit $i$, hence
$$
\tau_s[i]:\; O_{\text{comp}}  \longrightarrow O_{\text{comp}} \cup \{i\},\;\; \mbox{if } T_{ii}=1.
$$
This has two consequences. First, the forward cone of $i$ becomes empty. In particular $T_{ii}=0$ after the flipping. Thus, the closed time-like curve consisting of qubit $i$ has been removed. Second, an additional bit of optimal classical output is being created by the flipping at $i$. What does that output bit signify?

Recall that before the flipping at $i$, the rule for adjusting the measurement basis at $i$ is
$$
q_i \begin{array}{c} !\vspace{-2mm} \\= \vspace{-2mm} \\ \mbox{ }\end{array} s_i + \sum_{j \in J\backslash i}s_j \mod 2, 
$$
for some set $J \subseteq \Omega$. Here, we have dropped a constant offset $\textbf{h}^T\textbf{g}$ on the r.h.s. The symbol ``!'' above the equality means that equality is a requirement for the correctness of the computation, but it cannot be deterministically implemented. As follows from Eq.~(\ref{ObsId}), the measurement outcomes before and after the flip, $s_i$ and $s_i'$ are related via $s_i = s_i' \oplus q_i$. For all the other qubits, $s_j' = s_j$. Substituting this into the above relation, we obtain
\begin{equation}
\label{s_tau}
s_i' + \sum_{j \in J\backslash i}s_j' \mod 2 \begin{array}{c} !\vspace{-2mm} \\= \vspace{-2mm} \\ \mbox{ }\end{array} 0, \;\; \forall q_a' \in \mathbb{Z}_2.
\end{equation}
Thus, the additional output bit $o_i= s_i' + \sum_{j \in J\backslash i}s_j' \mod 2$ is a flag. If $o_i=0$ then the computation succeeded, and if $o_i=1$ then it did not.

Now suppose that the problem solved by the given MBQC is in NP. Then, this flag bit is not necessary. The remaining output may be efficiently checked for correctness anyway. Thus, one may safely discard the extra bit $o_i$ of output. Not post-selecting on $o_i=0$ can, if anything, only increase the success probability of the computation. We thus arrive at

\begin{Lemma}
  \label{CTCrem}
  Be ${\cal{M}}_1$ an MBQC with a classical output $\textbf{o}$ and influence matrix $T$ such that $T_{ii}=1$, i.e., ${\cal{M}}_1$ has a closed time-like curve involving a single qubit $i \in \Omega$. Be ${\cal{M}}_2$ the MBQC with the same classical output $\textbf{o}$, obtained from ${\cal{M}}_1$ by flipping the measurement plane at $i$. Then, the closed time-like curve of $i$ in ${\cal{M}}_1$ is removed in ${\cal{M}}_2$. Furthermore, if ${\cal{M}}_1$ solves a problem in the complexity class NP with probability $p$, then ${\cal{M}}_2$ solves the same problem with probability $\geq p$.
\end{Lemma}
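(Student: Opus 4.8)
The plan is to collect the two assertions of the Lemma from the analysis of this subsection, handling the removal of the closed time-like curve and the probability comparison separately.

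For the first assertion I would argue as follows. Since $T_{ii}=1$ means simultaneously $i \in bc(i)$ and $i \in fc(i)$, and since every qubit in $I_{\text{gauge}}\subseteq I$ has empty backward cone while every qubit in $O_{\text{comp}}$ has empty forward cone, the qubit $i$ must lie in $(I_{\text{gauge}})^c \cap (O_{\text{comp}})^c$. Consequently a correction operator $K(i)$ exists, and by the closed-time-like-curve clause of Eq.~(\ref{CorrCorr}) we may take $K(i)|_i = \sigma_{s\phi}^{(i)}$. Applying the flip $\tau_s[i]$, which by Eq.~(\ref{tau_s}) merely exchanges $\sigma_\phi^{(i)} \longleftrightarrow \sigma_{s\phi}^{(i)}$ and acts trivially on all other tensor factors, converts $K(i)$ into $\overline{K}'(i) = \sigma_\phi^{(i)} \otimes K(i)|_{\Omega\backslash i}$, which satisfies the defining conditions Eq.~(\ref{CorrGauge}) of a gauge operator. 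Thus $i$ enters $I_{\text{gauge}}$ and at the same time loses its correction operator, so that $fc(i) = \emptyset$ in $\mathcal{M}_2$ and in particular $T_{ii}'=0$; the length-$1$ closed time-like curve has been removed.

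For the second assertion I would trace what the self-consistency requirement of the curve becomes under the flip. Correctness of $\mathcal{M}_1$ demands the non-deterministically enforced equality $q_i = s_i + \sum_{j\in J\backslash i} s_j \mod 2$ for a suitable set $J$. Invoking the operator identity Eq.~(\ref{ObsId}), which yields $s_i = s_i' \oplus q_i$ while $s_j' = s_j$ for every $j\neq i$, this equality is equivalent to the vanishing of the flag bit $o_i = s_i' + \sum_{j\in J\backslash i} s_j' \mod 2$, exactly as in Eq.~(\ref{s_tau}). In $\mathcal{M}_1$ this flag is imposed by post-selection, whereas in $\mathcal{M}_2$ one runs the now curve-free computation, retains the output $\textbf{o}$, and discards $o_i$. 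The key point is that the $o_i=0$ branch of $\mathcal{M}_2$ carries, via the outcome correspondence $s_i = s_i'\oplus q_i$, precisely the output distribution of the post-selected (successful) runs of $\mathcal{M}_1$. Since the problem is in NP, correctness of the retained output $\textbf{o}$ can be verified directly, so the correct outputs produced on the $o_i=0$ branch alone already occur with probability $p$, matching $\mathcal{M}_1$, while the additional $o_i=1$ branches can only supply further verifiably correct outputs. Hence $\mathcal{M}_2$ succeeds with probability at least $p$.

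The main obstacle is the probability comparison in the last step, not the operator algebra of the first. One must make precise what ``solves the problem with probability $p$'' means for the post-selected curve computation $\mathcal{M}_1$ as opposed to the ordinary computation $\mathcal{M}_2$, and then show through the correspondence $s_i = s_i'\oplus q_i$ that the $o_i=0$ sector of $\mathcal{M}_2$ reproduces $\mathcal{M}_1$'s successful ensemble without loss. The NP hypothesis does the essential work: it is what licenses discarding the flag $o_i$ while still recognising a correct answer, and it is the reason the resulting bound is $\geq p$ rather than an equality.
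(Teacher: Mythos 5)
Your proposal is correct and follows essentially the same route as the paper: the identification $i \in (I_{\text{gauge}})^c \cap (O_{\text{comp}})^c$, the conversion of the correction operator $K(i)$ into the gauge-type operator $\overline{K}'(i)$ under $\tau_s[i]$ (whence $I_{\text{gauge}}$ and $O_{\text{comp}}$ each acquire $i$ and the self-loop disappears), and the reinterpretation of the post-selection condition as the discardable flag bit $o_i$ of Eq.~(\ref{s_tau}), with the NP hypothesis licensing the discard so the success probability cannot decrease. Your closing remark on reading $p$ as the joint probability of passing post-selection and being correct is a useful sharpening consistent with the paper's remark on the inefficiency of post-selection, but it does not change the argument.
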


{\em{Remark:}} Lemma~\ref{CTCrem} does not guard against the inefficiencies of post-selection, in particular if multiple CTCs of length 1 are being removed. While the success probability after removing the CTCs is guaranteed not to be smaller than for the original computation with the CTCs (which can only be executed using post-selection), neither it is provably significantly larger.\medskip 
 
{\em{Event horizons.}} Let us consider the flow of information between qubit $i$ whose measurement plane has been flipped and the other qubits. Before the flip (MBQC ${\cal{M}}_1$ of Lemma~\ref{CTCrem}), $i \in (O_{\text{comp}})^c \cap (I_{\text{gauge}})^c$. After the the flip (MBQC ${\cal{M}}_2$ of Lemma~\ref{CTCrem}), $i \in O_{\text{comp}} \cap I_{\text{gauge}}$. In ${\cal{M}}_2$, since $i \in I_{\text{gauge}}$, no information for the adaption of measurement basis is flowing into site $i$ from the other sites. Likewise, since $i \in O_{\text{comp}}$, no information for the adaption of measurement bases is flowing out of site $i$. Finally, because of the normal form Eq.~(\ref{HRTZshape}), the measurement outcome $s_i$ appears in only one readout bit. This readout bit is $o_i$ as given in l.h.s. of Eq.~(\ref{s_tau}), which is precisely the bit of classical output that can be discarded if the problem solved by the quantum computation is in NP. If $o_i$ is discarded, then no information is flowing out of the site $i$ at all. Thus, in summary, from the viewpoint of classical processing, qubit $i$ in ${\cal{M}}_2$ becomes entirely disconnected from the computation. It vanishes behind the MBQC counterpart of an event horizon. 

\begin{figure}
\begin{center}
  \includegraphics[width=15cm]{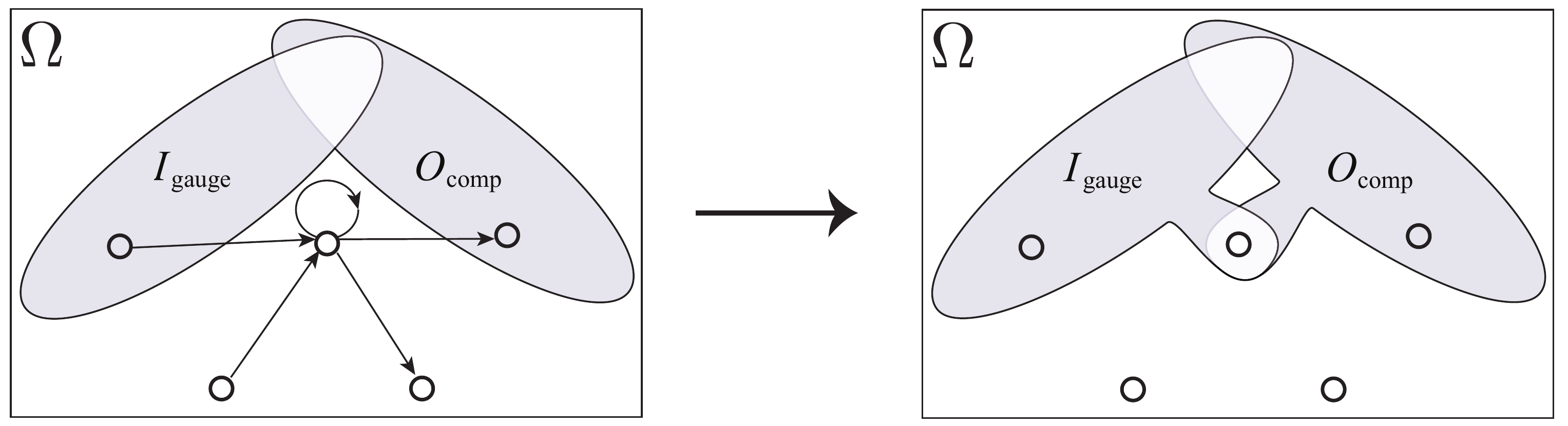}
  \caption{\label{CTCbreak} Breaking a closed time-like curve of length 1. The looped qubit becomes an element of $I_{\text{gauge}} \cap O_{\text{comp}}$ after flipping the measurement plane. As such both its forward cone ($O_{\text{comp}} \subseteq O$), and backward cone  ($I_{\text{gauge}} \subseteq I$) must be empty.}
\end{center}
\end{figure}

\section{Conclusions and outlook}
\label{Concl}

In this paper we have studied the constraints on temporal order in measurement-based quantum computation which arise from the principle that the randomness inherent in quantum measurement should not affect the logical processing. We have established a classification of temporal relations consistent with a given resource stabilizer state and set of measurement planes. Conversely, we have shown that the linear processing relations in measurement based quantum computation, subject to the above principle, specify the resource state and set of measurement planes up to equivalence. We identified gauge degrees of freedom which need to be included in order to establish the above results. Furthermore, we found a transformation that leaves the temporal order in every MBQC invariant and is related to local complementation. Finally, we pointed out formal MBQC analogues of a result and a piece of phenomenology in the theory of General Relativity, namely of Malament's theorem and event horizons.    

At this point, we are led to ask the following questions:
\begin{enumerate} 
\item{We introduced a group of gauge transformations Eq.~(\ref{G1}), and a group symmetry transformations Eq.~(\ref{Tlc2}), generated by flipping measurement planes. Both transformations preserve MBQC temporal orders. Can the two groups be unified?}
\item{Some of the temporal relations admitted by the matroid ${\cal{G}}(|\Psi\rangle)$ contain closed time-like curves. Given a stabilizer state $|\Psi\rangle$ and set of measurement planes $\Sigma$, can we find an algebraic (or other) structure which comprises only the partial orders? Can the partial order of measurements with the smallest set $O_{\text{comp}}$ be efficiently computed?}  
\item{The generators of the resource state stabilizer commute by definition. As pointed out in the second remark below Theorem~\ref{OneToOneTwo}, this commutativity condition constrains the MBQC linear processing relations, including temporal order. What is the physical meaning of this constraint to the processing relations?} 
\item{\label{QDet}In MBQC, the link between the randomness in quantum mechanical measurement and temporal order is the principle that the randomness of measurement outcomes should not affect the logical processing. In a context more general than quantum computation, what could this principle be replaced by?}
\end{enumerate}

\noindent
\textbf{Acknowledgements.} This work is supported by NSERC, Cifar and MITACS.


\begin{thebibliography}{99}
\providecommand{\urlalt}[2]{\href{#1}{#2}}
\providecommand{\doi}[1]{doi:\urlalt{http://dx.doi.org/#1}{#1}}
\providecommand{\arxiv}[1]{arXiv:\urlalt{http://arxiv.org/abs/#1}{#1}}

\bibitem{NC}
M.A. Nielsen and I.L. Chuang, {\em{Quantum Information and Computation}}, Cambridge University Press, 2000. 

\bibitem{circ1}
D. Deutsch, {\em{Quantum Computational Networks}}, Proc. R. Soc. Lond. A \textbf{425} 73 (1989), \doi{10.1098/rspa.1989.0099}.

\bibitem{circ2}
A.C. Yao, {\em{Quantum circuit complexity}}, Proc. 34th Annual Symposium on Foundations of Computer Science (FOCS), 352 (1993), \doi{10.1109/SFCS.1993.366852}.

\bibitem{Adi}
E. Farhi et {\em{al.}}, {\em{A Quantum Adiabatic Evolution Algorithm Applied to Random Instances of an NP-Complete Problem}}, Science \textbf{20}, 472 (2001), \doi{10.1126/science.1057726}.

\bibitem{GoChua}
 Gottesman D,  Chuang IL., {\em{Demonstrating the viability of universal quantum computation using teleportation and single-qubit operations}}, {\it Nature\/} 402: 390-93 (1999), \doi{10.1038/46503}.

\bibitem{RB01}
R. Raussendorf and H.J. Briegel, {\em{A One-Way Quantum Computer}}, Phys. Rev. Lett. \textbf{86}, 5188 (2001), \doi{10.1103/PhysRevLett.86.5188}.

\bibitem{ML}
D. Leung, {\em{Quantum Computation by Measurments}}, Int. J. Quant. Infor. \textbf{2}, 33 (2004), \doi{10.1142/S0219749904000055}. 


\bibitem{NestUR}
M. van den Nest et {\em{al.}}, {\em{Universal Resources for Measurement-Based Quantum Computation}}, Phys. Rev. Lett. \textbf{97}, 150504 (2006), \doi{10.1103/PhysRevLett.97.150504}.

\bibitem{Chen10}
X. Chen, R. Duan, Z. Ji, and B. Zeng, {\em{Quantum State Reduction for Universal Measurement Based Computation}}, Phys. Rev. Lett. {\bf 105}, 020502 (2010), \doi{10.1103/PhysRevLett.105.020502}.

\bibitem{AKLTpi}
A. Miyake, {\em{Quantum computational capability of a 2D valence bond solid phase}}, Ann. Phys. \textbf{326}, 1656 (2011), \doi{10.1016/j.aop.2011.03.006}.

\bibitem{AKLTubc}
T.C. Wei, I. Affleck and R. Raussendorf, {\em{Affleck-Kennedy-Lieb-Tasaki State on a Honeycomb Lattice is a Universal Quantum Computational Resource}}, Phys. Rev. Lett. \textbf{106}, 070501 (2011), \doi{10.1103/PhysRevLett.106.070501}.

\bibitem{GrossFlammiaEisert}
D. Gross,  S.T. Flammia, J. Eisert, {\em{Most Quantum States Are Too Entangled To Be Useful As Computational Resources}}, Phys. Rev. Lett. \textbf{102}, 190501 (2009), \doi{10.1103/PhysRevLett.102.190501}.

\bibitem{BremnerMoraWinter}
 M.J. Bremner, C. Mora, A.  Winter, {\em{Are Random Pure States Useful for Quantum Computation?}}, Phys. Rev. Lett. \textbf{102}, 190502 (2009), \doi{10.1103/PhysRevLett.102.190502}.

\bibitem{Gflow}
D.E. Browne, E. Kashefi, M. Mhalla, S. Perdrix, {\em{Generalized flow and determinism in measurement-based quantum computation}}, New J. Phys. \textbf{9}, 250 (2007), \doi{10.1088/1367-2630/9/8/250}.

\bibitem{EffFlow}
M. Mhalla, S. Perdrix, {\em{Finding Optimal Flows Efficiently}}, Proc. of 35th ICALP, 857 (2008), {\catcode`\_13\relax\doi{10.1007/978-3-540-70575-8_70}}.

\bibitem{Mhalla}
M. Mhalla et al, {\em{Which graph states are useful for quantum information processing?}}, \arxiv{1006.2616}.


\bibitem{LC1}
A. Bouchet, {\em{Recognizing locally equivalent graphs}}, Discr. Math. \textbf{114}, 75 (1993), \doi{10.1016/0012-365X(93)90357-Y}.

\bibitem{LC2}
M. Van den Nest, J. Dehaene, B. De Moor, {\em{Graphical description of the action of local Clifford transformations on graph states}}, Phys. Rev. A \textbf{69}, 022316 (2004), \doi{10.1103/PhysRevA.69.022316}. 

\bibitem{LC3}
R. Brijder, H.J. Hoogeboom, {\em{The Group Structure of Pivot and Loop Complementation on Graphs and Set Systems}}, \arxiv{0909.4004v4}.


\bibitem{AEGR}
A. Einstein, {\em{The Field Equations of Gravitation}}, Sitzungsberichte der Preussischen Akademie der Wissenschaften zu Berlin: 844 (1915).

\bibitem{AEGR2}
A. Einstein, {\em{The Foundation of the General Theory of Relativity}}, Annalen der Physik \textbf{354}, 769 (1916).

\bibitem{conTC} 
D.B. Malament, {\em{The class of continuous timelike curves determines the topology of spacetime}}, J. Math. Phys. 18, 1399 (1977), \doi{10.1063/1.523436}.

\bibitem{NielQCc}
M.A. Nielsen, {\em{Cluster-state quantum computation}}, \arxiv{quant-ph/0504097}.

\bibitem{RBB03}
R. Raussendorf, D.E. Browne and H.J. Briegel, {\em Measurement-based quantum computation on cluster states}, Phys. Rev. A \textbf{68}, 022312 (2003), \doi{10.1103/PhysRevA.68.022312}.

\bibitem{Debbie1}
P. Aliferis, D.W. Leung, {\em{Computation by measurements: A unifying picture}}, Phys. Rev. A \textbf{70}, 062314 (2004), \doi{10.1103/PhysRevA.70.062314}.

\bibitem{Debbie2}
A.M. Childs, D.W. Leung, M.A. Nielsen, {\em{Unified derivations of measurement-based schemes for quantum computation}}, Phys. Rev. A \textbf{71}, 032318 (2005), \doi{10.1103/PhysRevA.71.032318}. 

\bibitem{Overview}
H.J. Briegel {\em{et al.}}, {\em{Measurement-based quantum computation}}, Nature Phys. \textbf{5}, 19 (2009), \doi{10.1038/nphys1157}.  

\bibitem{Stab}
D. Gottesman, {\em{Stabilizer Codes and Quantum Error Correction}}, Ph.D. thesis, California Institute of Technology (1997), \arxiv{quant-ph/9705052v1}.

\bibitem{vdN1}
M. van den Nest {\em{et al.}}, {\em{Universal Resources for Measurement-Based Quantum Computation}}, Phys. Rev. Lett. \textbf{97}, 150504 (2006), \doi{10.1103/PhysRevLett.97.150504}. 

\bibitem{vdN2}
M. van den Nest {\em{et al.}}, {\em{Classical simulation versus universality in measurement-based quantum computation}}, Phys. Rev. A \textbf{75}, 012337 (2007), \doi{10.1103/PhysRevA.75.012337}. 

\bibitem{RB02}
  R. Raussendorf and H.J. Briegel, {\em{Computational model underlying the one-way quantum computer}}, Quant Inf Comp \textbf{6}, 443 (2002).


\bibitem{CTC}
R.D. da Silva, E. Galvao and E. Kashefi, {\em{Closed timelike curves in measurement-based quantum computation}}, Phys. Rev. A \textbf{83}, 012316 (2011), \doi{10.1103/PhysRevA.83.012316}.

\bibitem{BennettCTC}
C.H. Bennett and B. Schumacher, unpublished. See \urlalt{http://www.research.ibm.com/people/b/bennetc/QUPONBshort.pdf}{http://www.research\-.ibm.com/\-people/b/\-bennetc/\-QUPONBshort.pdf} (2004).

\bibitem{SvetlichnyCTC}
G. Svetlichny, {\em{Effective Quantum Time Travel}}, \arxiv{0902.4898v1}.

\bibitem{Grassl}
M. Grassl, priv. comm. 2002. 





\bibitem{Thorne}
M.S. Morris, K.S. Thorne, and U. Yurtsever, {\em{Wormholes, Time Machines, and the Weak Energy Condition}}, Phys. Rev. Lett. \textbf{61}, 1446 (1988), \doi{10.1103/PhysRevLett.61.1446}. 

\bibitem{GodsRoyl}
C. Godsil and G. Royle, {\em{Algebraic Graph Theory}}, Springer New York (2001).



\end{thebibliography}


\appendix

\section{Invariance under the gauge transformations}
\label{GTO}

\subsection{Gauge transformations and temporal order}

In this section we provide a different angle at Theorem~\ref{OneToOneOne}, namely we show that the gauge transformations Eq.~(\ref{G1}) impose severe constraints on the possible temporal orders for a given stabilizer state and set of measurement planes.

The gauge transformations Eq.~(\ref{G1}) act on $\textbf{q}$ and $\textbf{s}$, and can therefore have a non-trivial effect on the classical processing relation Eq.~(\ref{TO7a}), $\textbf{q} = T\textbf{s} + H \textbf{g}$. We now study this effect. Since the transformations Eq.~(\ref{G1}) are caused by the insertion of stabilizer operators into the state overlap in Eq.~(\ref{corr1}), they do not change the physical situation. Therefore, the temporal relations before and after any such transformation must be equally valid, although not necessarily identical. By insertion of the stabilizer operator into the overlap $\langle \Phi_{\text{loc}}|\Psi\rangle$, the stabilizer of $|\Psi\rangle$ and the the sets $I_{\text{gauge}}$, $O_{\text{comp}}$ do not change. Therefore, by Theorem~\ref{OneToOneOne}, the matrices $T$ and $H$ do not change. Thus, besides $\textbf{q}$ and $\textbf{s}$, all that can change in the relation Eq.~(\ref{TO7a}) under a transformation Eq.~(\ref{G1}) is $\textbf{g}$. The following two viewpoints are always equivalent: (A) The relation $\textbf{q} = f_{\textbf{g}}(\textbf{s})$, under the action Eq.~(\ref{G1}) of a $G_K$ on $(\textbf{s},\textbf{q})$ is changed into an equivalent such relation $\textbf{q} = f_{\textbf{g}'}(\textbf{s})$, with $G_K: \textbf{g} \longrightarrow \textbf{g}'$. (B) The relation $\textbf{q} = \tilde{f}(\textbf{s},\textbf{g})$ {\em{remains invariant}} under all transformations $G_K$, acting on the triple $(\textbf{s},\textbf{q},\textbf{g})$. We choose the latter viewpoint.

We now infer the action of the transformations $G_K$ on $\textbf{g}$. Without loss of generality we assume that the relations Eq.~(\ref{TO7a}) are given with $H$ in its normal form Eq.~(\ref{HRTZshape}),
$$
\textbf{q} = T \textbf{s} +  \left( \begin{array}{c} I \\ \hline {\tt{H}}  \end{array} \right) \textbf{g} \mod 2.
$$ 
Furthermore, we assume that the pair $I_{\text{gauge}}$, $O_{\text{comp}}$ is extremal. Since $I_{\text{gauge}} \subseteq I$ by definition, for each $i \in I_{\text{gauge}}$, $q_i$ only depends on $\textbf{g}$ but not on the measurement outcomes $\textbf{s}$, $q_i = g_i$. Now, the correction operators $K(a)$, $a \in (O_{\text{comp}})^c$ derived from the normal form Eq.~(\ref{GNF}) of ${\cal{G}}(|\Psi\rangle)$, $K(a)|_{I_{\text{gauge}}}$  has no $\sigma_\phi$-part. Therefore, the corresponding transformations $G_K$ do not flip $q_i$, for all $i \in I_{\text{gauge}}$. In order to preserve the relation $q_i=g_i$, they thus leave $\textbf{g}$ unchanged. Now consider the other stabilizer generators, $\overline{K}(i)$, $i \in I_{\text{gauge}}$, obeying the conditions Eq.~(\ref{CorrGauge}). By construction, $G_{\overline{K}(i)}$ flips $q_i$ but no other $q_j$, for $i\neq j \in I_{\text{gauge}}$. Hence, to preserve the relations $q_i=g_i$, it must also flip $g_i$, but no other $g_j$, $i \neq j \in I_{\text{gauge}}$. Thus, for a stabilizer element $K=\bigotimes_{a \in \Omega}(\sigma_s^{(a)})^{v_a}(\sigma_\phi^{(a)})^{w_a}$,
\begin{equation}
  \label{G2}
  G_K: \; \mathbf{g} \longrightarrow \mathbf{g} \oplus \mathbf{w}|_{I_{\text{gauge}}}.
\end{equation}  
Therein, we have assumed that the basis choice for $\textbf{g}$ is such that the matrix $H$ appearing in Eq.~(\ref{TO7a}) is of normal form Eq.~(\ref{HRTZshape}).

We have now fully specified the action  of $G_K$ on the triple $(\textbf{q}, \textbf{s}, \textbf{g})$, c.f. Eq.~(\ref{G1}), (\ref{G2}). MBQCs satisfy the invariance condition
\begin{equation}
  \label{GK1}
    \textbf{q} = T\textbf{s} + H\textbf{g}  \mod 2 \Longleftrightarrow   G_K(\textbf{q}) = T\, G_K(\textbf{s}) + H\, G_K(\textbf{g}) \mod 2, \, \forall K \in {\cal{S}}(|\Psi\rangle).
\end{equation}
It is evident that the requirement (\ref{GK1}) of invariance of the processing relations (\ref{TO7a}) under the gauge transformations poses constraints on the possible matrices $T$  and ${\tt{H}}$. In fact, as we show below, given $O_{\text{comp}}$ the matrices $T$ and ${\tt{H}}$ are uniquely specified uniquely by the above invariance condition.\medskip

To check the invariance condition Eq.~(\ref{GK1}) in a specific case, we return to our 3-qubit cluster state example of Section~\ref{OTO}. We consider the effect of the transformations induced by generators $K_1=\sigma_\phi^{(1)}\sigma_s^{(2)}$, $K_2=\sigma_s^{(1)}\sigma_{\phi}^{(2)}\sigma_s^{(3)}$ and $K_3=\sigma_s^{(2)}\sigma_\phi^{(3)}$ on the processing relations Eq.~(\ref{TrelEx}). As noted earlier, $I_{\text{gauge}}=\{1\}$. Then, with Eqs.~(\ref{G1}) and (\ref{G2}),
\begin{equation}
  \label{Gex3}
  \begin{array}{rlll}
    G_{K_1}:& \textbf{q} \longrightarrow \textbf{q} \oplus (1,0,0)^T,& \textbf{s} \longrightarrow \textbf{s} \oplus (0,1,0)^T, & g_1 \longrightarrow g_1 \oplus 1, \\
    G_{K_2}:& \textbf{q} \longrightarrow \textbf{q} \oplus (0,1,0)^T,& \textbf{s} \longrightarrow \textbf{s} \oplus (1,0,1)^T, & g_1 \longrightarrow g_1,\\
    G_{K_3}:& \textbf{q} \longrightarrow \textbf{q} \oplus (0,0,1)^T,& \textbf{s} \longrightarrow \textbf{s} \oplus (0,1,0)^T, & g_1 \longrightarrow g_1.
  \end{array}
\end{equation}
It is easily checked that the relation Eq.~(\ref{TrelEx}) is invariant under the transformations  $G_{K_1}$, $G_{K_2}$ and $G_{K_3}$ of Eq.~(\ref{Gex3}). However, if the transformations are restricted to $\textbf{q}$, $\textbf{s}$, the relation Eq.~(\ref{TrelEx}) is no longer invariant under the transformation induced by $K_1$.\medskip

We now return to the general case and show that, given the set $O_{\text{comp}}$ and the action Eq.~(\ref{G1}), (\ref{G2}) of the gauge transformations on the triple $(\textbf{q},\textbf{s},\textbf{g})$, the invariance condition Eq.~(\ref{GK1}) uniquely specifies the classical processing relations Eq.~(\ref{TO7a}) for the adaption of measurement bases.

Recall that we write the stabilizer generator matrix for $|\Psi\rangle$ in the $\sigma_\phi/\sigma_s$-basis as ${\cal{G}}(|\Psi\rangle) = (\Phi||S)$. Then, for the stabilizer generator $K_a \in {\cal{S}}(|\Psi\rangle)$ corresponding to the $a$-th row of ${\cal{G}}(|\Psi\rangle)$, with Eq.~(\ref{G1}) the action of the gauge transformation $G_{K_a}$ on $\textbf{s}$, $\textbf{q}$ is
\begin{equation}
  G_{K_a}:\; \textbf{s} \longrightarrow \textbf{s} \oplus \text{row}_a(S),\; \textbf{q} \longrightarrow \textbf{q} \oplus \text{row}_a(\Phi).
\end{equation}
With Eq.~(\ref{G2}), the action of $G_{K_a}$ on $\textbf{g}$ is
\begin{equation}
   G_{K_a}: \longrightarrow \textbf{g} \oplus \text{row}_a(\Phi)|_{I_{\text{gauge}}}.
\end{equation}
Here, $\text{row}_a(\Phi)|_{I_{\text{gauge}}}$ denotes $\text{row}_a(\Phi)$ restricted to $I_{\text{gauge}}$. Then, the condition Eq.~(\ref{GK1}) for invariance of $\textbf{q} =T \textbf{s} + H \textbf{g}$ under $G_{K_a}$ becomes 
$$
\text{row}_a(\Phi) = T  \text{row}_a(S) + H \text{row}_a(\Phi)|_{I_{\text{gauge}}} \mod 2.
$$
This condition must hold for all stabilizer generators $K_a$ simultaneously, hence
\begin{equation}
  \label{SymCon}
  \Phi^T = T S^T + H \Phi^T|_{I_{\text{gauge} \times \Omega}} \mod 2.
\end{equation} 
By definition, the qubits in $I_{\text{gauge}}$ have empty backward cones, and the qubits in $O_{\text{comp}}$ have empty forward cones, hence $T$ is of the form
$$
T =  \left(\begin{array}{c|c} 0 & 0 \\ \hline  {\tt{T}} & 0 \end{array}\right),
$$
where the column split is $(O_{\text{comp}})^c|O_{\text{comp}}$ and the row split is $I_{\text{gauge}}|(I_{\text{gauge}})^c$, c.f. Eq.~(\ref{Tshape}). By right-multiplication of relation Eq.~(\ref{SymCon}) with a suitable matrix, we transform $\Phi^T|_{I_{\text{gauge} \times \Omega}}$ into a matrix of form $(I|0)$ where the column split is between $I_{\text{gauge}}$ and $I_{\text{gauge}}^c$. By definition of $I_{\text{gauge}}$, such a transformation is always possible. Under the same transformation,
\begin{equation}
  \Phi^T \longrightarrow \left( \begin{array}{c|c} I & 0 \\ \hline \Phi_1 & \Phi_2 \end{array}\right),\;\; S^T|_{(O_{\text{comp}})^c \times \Omega} \longrightarrow (S_1|S_2).
\end{equation}
Inserting the above into Eq.~(\ref{SymCon}), we find that $H$ must be of normal form Eq.~(\ref{HRTZshape}), $H = \left( \begin{array}{c} I \\ \hline {\tt{H}} \end{array}\right)$, and
\begin{equation}
\begin{array}{rcl}
  \Phi_1 &=& {\tt{T}} S_1 + {\tt{H}} \mod 2,\\
  \Phi_2 &=& {\tt{T}} S_2 \mod 2.
\end{array}
\end{equation}
Now, $S_2$ must be an invertible matrix. This is the condition that, by definition of $O_{\text{comp}}$, every measurement outcome in $(O_{\text{comp}})^c$ is correctable. Then,
\begin{equation}
  {\tt{T}} = \Phi_2 {S_2}^{-1} \mod 2,\; {\tt{H}} = \Phi_1 + \Phi_2 {S_2}^{-1} S_1 \mod 2.
\end{equation} 
Hence the relation $\textbf{q} = T\textbf{s} + H \textbf{g}$ is uniquely specified.

\subsection{Gauge transformations and computational output}
 
In addition to Eq.~\ref{GK1}, we also require invariance of the classical output under the transformations Eq.~(\ref{G1}), (\ref{G2}),
\begin{equation}
  \label{GK2}
 \textbf{o} = Z\textbf{s} + R\textbf{g} \mod 2 = Z\, G_K(\textbf{s}) + R\,G_K(\textbf{g}) \mod 2, \, \forall K \in {\cal{S}}(|\Psi\rangle).
\end{equation}
Like Eq.~(\ref{GK1}), Eq.~(\ref{GK2}) is a determinism constraint. If for a single output bit $o$ the relation $o = \textbf{z}^T\textbf{s} + \textbf{r}^T\textbf{g}$ is not invariant under all gauge transformations Eq.~(\ref{G1}), (\ref{G2}), then the value of $o$ is guaranteed to be random, and thus useless as readout bit of a computation. Specifically,

\begin{Lemma} 
\label{Det}
Assume an MBQC where the relation $\textbf{q} =T\textbf{s}+H\textbf{g}$ is invariant under the gauge transformations Eq.~(\ref{G1}), (\ref{G2}), but an output bit $o$ exists whose defining relation $o = \textbf{z}^T\textbf{s} + \textbf{r}^T\textbf{g}$ is not invariant under the action of $G_K$ for some $K \in {\cal{S}}(|\Psi\rangle)$. Then, the value of $o$ is completely random, independent of the choice of measurement angles. 
\end{Lemma}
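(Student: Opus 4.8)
The plan is to produce, from the violating stabilizer element $K$, a probability-preserving involution on the measurement outcomes that flips $o$; this forces $\Pr[o=0]=\Pr[o=1]=1/2$ regardless of the measurement angles.

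First I would return to the overlap identity Eq.~(\ref{GaugeId}). Writing $K=\bigotimes_a(\sigma_s^{(a)})^{v_a}(\sigma_\phi^{(a)})^{w_a}$ and using $\sigma_s|s,q\rangle=|s\oplus1,q\rangle$, $\sigma_\phi|s,q\rangle=|s,q\oplus1\rangle$, the identity says that the amplitude for outcome $\textbf{s}$ at settings $\textbf{q}$ equals, up to a phase, the amplitude for outcome $\textbf{s}\oplus\textbf{v}$ at settings $\textbf{q}\oplus\textbf{w}$. Taking moduli, these two events are equiprobable for every choice of measurement angles. This is the only place quantum mechanics enters, and it is what makes the final conclusion angle-independent.

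Next I would run the MBQC at a fixed gauge $\textbf{g}$, so that the settings actually used are $\textbf{q}=T\textbf{s}+H\textbf{g}$, and denote by $p_{\textbf{g}}(\textbf{s})$ the resulting outcome probability. The invariance hypothesis for Eq.~(\ref{TO7a}) under $G_K$ is equivalent to the algebraic identity $\textbf{w}=T\textbf{v}+H\,\textbf{w}|_{I_{\text{gauge}}}$, which lets me rewrite the shifted settings as $\textbf{q}\oplus\textbf{w}=T(\textbf{s}\oplus\textbf{v})+H\textbf{g}'$ with $\textbf{g}'=\textbf{g}\oplus\textbf{w}|_{I_{\text{gauge}}}$. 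Combined with the previous step this gives $p_{\textbf{g}}(\textbf{s})=p_{\textbf{g}'}(\textbf{s}\oplus\textbf{v})$. Meanwhile non-invariance of the output relation means precisely $\textbf{z}^T\textbf{v}+\textbf{r}^T\textbf{w}|_{I_{\text{gauge}}}=1$, so under the map $(\textbf{s},\textbf{g})\mapsto(\textbf{s}\oplus\textbf{v},\textbf{g}')$ the bit $o=\textbf{z}^T\textbf{s}+\textbf{r}^T\textbf{g}$ flips.

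Finally I would assemble these into $P_{\textbf{g}}(o=c)=P_{\textbf{g}'}(o=c\oplus1)$ and conclude. In the clean case where the violating element can be chosen with $\textbf{w}|_{I_{\text{gauge}}}=0$ (a correction-type operator, which exists whenever $o$ fails to be invariant under some $K(a)$) one has $\textbf{g}'=\textbf{g}$, so $\textbf{s}\mapsto\textbf{s}\oplus\textbf{v}$ is a fixed-gauge, probability-preserving involution that flips $o$, and $\Pr[o=0]=\Pr[o=1]$ is immediate. I expect the main obstacle to be the general gauge-changing case $\textbf{w}|_{I_{\text{gauge}}}\neq0$: here one must either reduce to a correction-type violation by splitting $K$ into gauge and correction operators and using that the defect $\textbf{z}^T\textbf{v}+\textbf{r}^T\textbf{w}|_{I_{\text{gauge}}}$ is additive, or invoke the defining property of the gauge — that changing $\textbf{g}$ leaves the output distribution unchanged, $P_{\textbf{g}'}(o=c)=P_{\textbf{g}}(o=c)$ — to turn $P_{\textbf{g}}(o=c)=P_{\textbf{g}'}(o=c\oplus1)$ into $P_{\textbf{g}}(o=c)=P_{\textbf{g}}(o=c\oplus1)$. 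Pinning down this gauge bookkeeping, and stating exactly which stabilizer elements realize the flip, is the delicate part of the argument.
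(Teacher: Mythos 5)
Your proposal is correct and follows essentially the same route as the paper: insert the violating stabilizer element into the overlap of Eq.~(\ref{GaugeId}) to obtain a probability-preserving shift $\textbf{s}\mapsto\textbf{s}\oplus\textbf{v}$ (with settings shifted by $\textbf{w}$), use the assumed invariance of $\textbf{q}=T\textbf{s}+H\textbf{g}$ to keep the basis adaptation consistent under that shift, and observe that the defect $\textbf{z}^T\textbf{v}+\textbf{r}^T\textbf{w}|_{I_{\text{gauge}}}=1$ flips $o$, giving $p(o=c)=p(o=\bar c)=1/2$ independently of the angles. One caution on the "delicate part" you flag: of your two fallbacks for the gauge-changing case, the paper uses the second one, implemented concretely as a uniform average over $\textbf{g}$ so that $(\textbf{s},\textbf{g})\mapsto(\textbf{s}\oplus\textbf{v},\textbf{g}\oplus\textbf{w}|_{I_{\text{gauge}}})$ becomes a measure-preserving involution on the joint space; your first fallback (splitting $K$ and reducing to a correction-type violation) does not always succeed, since by additivity the unit defect can sit entirely on a gauge operator $\overline{K}(i)$, in which case no correction operator violates the output relation and the averaging (or gauge-independence) argument is genuinely needed.
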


{\em{Proof of Lemma~\ref{Det}.}} For simplicity, consider first the special case where $K \in {\cal{S}}(|\Psi\rangle)$ acts trivially on $\textbf{g}$, $G_K (\textbf{g}) = \textbf{g},\, \forall g$. We may then write $o = \sum_{i \in J}s_i + c$ for an offset $c = \textbf{r}^T\textbf{g}$. We call the string $\textbf{s}|_J$ of measurement outcomes on $J$ even (odd) if it has even (odd) weight. We denote the local post-measurement state on qubit $a$ by $|\varphi_a, s_a,q_a(\textbf{s}, \textbf{g})\rangle$, where $\varphi_a$ is the measurement angle, $s_a$ the measurement outcome and $q_a$ specifies the chosen measurement basis. 

Under the transformation $G_K$, $\textbf{s} \longrightarrow \textbf{s} \oplus \Delta \textbf{s}_K$, where, by assumption, $\Delta \textbf{s}_K|_J$ is odd. Now, the probability of outputting $o=c$ is
\begin{equation}
\label{rand}
  \begin{array}{rcl}
  p(o=c) &= & \displaystyle{\sum_{\textbf{s}|_J = \text{even}} \left|\left( \bigotimes_{a \in \Omega} \mbox{}_a\langle \varphi_a, s_a, q_a(\textbf{s},\textbf{g})|\right) |\Psi\rangle \right|^2}\\
  & =&  \displaystyle{\sum_{\textbf{s}|_J = \text{even}} \left|\left( \bigotimes_{a \in \Omega} \mbox{}_a\langle \varphi_a, s_a, q_a(\textbf{s},\textbf{g})|\right) K |\Psi\rangle \right|^2}\\
& =& \displaystyle{\sum_{\textbf{s}|_J = \text{even}} \left|\left( \bigotimes_{a \in \Omega} \mbox{}_a\langle \varphi_a, s_a \oplus \Delta s_{K,a}, q_a\oplus \Delta q_{K,a}|\right) |\Psi\rangle \right|^2}\\
 & =& \displaystyle{\sum_{\textbf{s}|_J = \text{even}} \left|\left( \bigotimes_{a \in \Omega} \mbox{}_a\langle \varphi_a, s_a \oplus \Delta s_{K,a}, q_a(\textbf{s} \oplus \Delta \textbf{s}_K,\textbf{g})|\right) |\Psi\rangle \right|^2}\\
 & =&  \displaystyle{\sum_{\textbf{s}|_J = \text{odd}} \left|\left( \bigotimes_{a \in \Omega} \mbox{}_a\langle \varphi_a, s_a, q_a(\textbf{s},\textbf{g})|\right) |\Psi\rangle \right|^2}\\
&=& p(o=\overline{c}).
  \end{array}
\end{equation} 
Thus, $p(o=c) = p(o=\overline{c})=1/2$. Note that in transitioning from the third to the fourth line of Eq.~(\ref{rand}) we have used the invariance property Eq.~(\ref{GK1}), i.e., the assumption that the adaption of measurement bases is deterministic.

In the general case, $G_K: \textbf{s} \longrightarrow \textbf{s} \oplus \Delta \textbf{s}_K,\; \textbf{g} \longrightarrow \textbf{g} \oplus \Delta \textbf{g}_K$. We note that we can choose any gauge fixing $\textbf{g}$, and thus $p(o=c)= \displaystyle{\frac{1}{2^{|I_{\text{gauge}}|}} \sum_{\textbf{g}} \sum_{\textbf{s}|_J = \text{even}} \left|\left( \bigotimes_{a \in \Omega} \mbox{}_a\langle \varphi_a, s_a, q_a(\textbf{s},\textbf{g})|\right) |\Psi\rangle \right|^2}$. By an argument analogous to the above we then find $p(o=0) = p(o=1) = 1/2$. $\Box$

\end{document}